\newtheorem{theorem}{Theorem}
\newtheorem{definition}[theorem]{Definition}
\newtheorem{lemma}[theorem]{Lemma}
\newtheorem{observation}[theorem]{Observation}
\newtheorem{corollary}[theorem]{Corollary}
\colorlet{shade}{gray!5}
\newcommand{\algorithmus}[2]{
{\centering
\vspace{0.1cm}
\begin{tikzpicture}
\node [fill=shade,draw]
{\parbox{\textwidth-0.5cm}{
{\bf #1}\vspace{-0.5cm}\\
#2}
};
\end{tikzpicture}\\
\vspace{0.2cm}
}
}
\newcommand*\samethanks[1][\value{footnote}]{\footnotemark[#1]}
\def\RR{{\mathbb R}}
\def\ZZ{{\mathbb Z}}
\def\MMM{{\mathcal M}}
\def\Pr{{\mathrm Pr}}
\def \d{{\mathrm d}}
\def\XX{{\mathbb X}}
\def\DD{{\mathcal D}}
\DeclareMathOperator{\loglog}{\log\hspace{-0.2em\log}}
\newcommand{\eps}{\epsilon}
\newcommand{\df}{\d_{dF}}
\newcommand{\dm}{\d_{\MMM}}
\newcommand{\Rdk}{\XX_k^d}
\title{Sublinear data structures for short Fr\'echet queries}
\author{Anne Driemel\thanks{Hausdorff Center for Mathematics, Bonn, Germany \url{driemel@cs.uni-bonn.de}} \qquad Ioannis Psarros\thanks{University of Bonn, Germany, \url{ipsarros@uni-bonn.de}, \url{melanieschmidt@uni-bonn.de}} \qquad Melanie Schmidt\samethanks[2]}
\begin{document}
\maketitle

\begin{abstract}
We study metric data structures for curves in doubling spaces, such as trajectories of moving objects
in Euclidean $\mathbb{R}^d$, where the distance between two curves is measured using the discrete Fr\'echet distance. 
We design data structures in an \emph{asymmetric} setting where the input is a curve (or a set of $n$ curves) each of complexity $m$ and the queries are with curves of complexity $k\ll m$. We show that there exist approximate data structures that are independent of the input size $N = d \cdot n \cdot m$ and we study how to maintain them dynamically if the input is given in the stream.

Concretely, we study two types of data structures: (i) distance oracles, where the task is to store a compressed version of the input curve, which can be used to answer queries for the distance of a query curve to the input curve, and (ii) nearest-neighbor data structures, where the task is to preprocess a set of input curves to answer queries for the input curve closest to the query curve. In both cases we are interested in approximation. 
For curves embedded in Euclidean $\mathbb{R}^d$ with constant $d$, our distance oracle uses space in $\mathcal{O}((k \log(\epsilon^{-1}) \epsilon^{-d})^k)$ ($\epsilon$ is the precision parameter). The oracle performs $(1+\epsilon)$-approximate queries in time in $\mathcal{O}(k^2)$ and is deterministic. We show how to maintain this distance oracle in the stream using polylogarithmic additional memory. In the stream, we can dynamically answer distance queries to the portion of the stream seen so far in  $\mathcal{O}(k^4 \log^2 m)$ time. 
We apply our techniques to the second problem, approximate near neighbor (ANN) data structures, and  achieve an  exponential improvement in the dependency on the complexity of the input curves compared to the state of the art. For polygonal curves in Euclidean $\mathbb{R}^d$, we present a data structure with space in $n\cdot  \mathcal{O}\left({k{d}^{3/2}}{\epsilon^{-1}}\right)^{dk}$ plus additional space in $\mathcal{O}(dnm)$ for storing the input,  and query time in $\mathcal{O}\left(dk\right)$. Our ANN data structure is randomized, but in the Euclidean case we can derandomize it at little extra cost.
    
\end{abstract}

\thispagestyle{empty}
\newpage
\setcounter{page}{1}
\section{Introduction}

Given an input, it is often desired to store it in a data structure of small size with the ability to retrieve some meaningful information in an online fashion. In this paper, we consider inputs that consist of one or more trajectories, i.e., observed paths of moving entities.
The last five years have seen a raised interest in data structures for trajectory processing under the Fr\'echet distance \cite{deBerg-2013fast, Gudmundsson-fqt-15, giscup2017, BaldusB2017, DutschV17, BuchinDDM17, AD18, DPP-VC-19, astefanoaei2018multi,Indyk-approxnn-02,Driemel-lshc-17,emiris2018, FFK19, driemel-jaywalking}. Conceptually, these data structures define a metric space where all reparametrizations of a curve are equivalent and the Fr\'echet metric measures the distance between any pair of equivalence classes. This invariance under reparametrizations makes the Fr\'echet distance especially suitable for comparing trajectories with varying speeds, i.e., for tracking data of moving animals or sports players. However, it also makes it challenging to design efficient data structures, since many standard techniques for vector spaces, such as recursive partitioning techniques, cannot be easily applied. 
In the relevant applications, the query curves tend to be of low complexity. For example, in sports analysis, one may wish to query with a straight path from A to B on the playing field~\cite{Berg-spq-15}. It is tempting to believe that under such an assumption on the complexity of queries it should be possible to build data structures with proportionally small space and low query time. This hypothesis is the starting point of our study.
In the above mentioned literature there are three types of queries studied: (i) \emph{range queries with metric balls} \cite{deBerg-2013fast, Gudmundsson-fqt-15, giscup2017, BaldusB2017, DutschV17, BuchinDDM17, AD18, DPP-VC-19} (ii) \emph{nearest-neighbor queries} \cite{astefanoaei2018multi,Indyk-approxnn-02,Driemel-lshc-17,emiris2018, FFK19}, and (iii) \emph{distance queries}~\cite{driemel-jaywalking, astefanoaei2018multi}.   
In this paper, we design efficient approximate data structures for queries of type (ii) and (iii) under the assumption that queries have low complexity (measured in the length of the point sequence). For this, we focus on the discrete Fr\'echet distance, a simplified variant of the distance metric which is defined on point sequences. 
Data structures of type (iii) are sometimes called \emph{distance oracles} and are related to the notions of \emph{sketching} and \emph{coresets}: assuming a certain metric and an input point, the aim is to compute a compressed representation which allows for estimating the distance from a new query point.
Bravermann et al.~\cite{braverman2019oneway} recently showed that for computing an $\alpha$-approximation to the related Dynamic time warping distance in the randomized one-way communication protocol, a sketch requires $\Omega(n/\alpha)$ bits---assuming the length of the query curve can be as large as $n$. 
%
Our results show that studying these problems in the asymmetric setting (by separating the two parameters that define the length of the query and the length of the input curves) can lead to data structures that are much more efficient when we assume small query complexity. 
We show that in the static setting, where the input is known in advance, we can build a data structure with size \emph{independent} of the length of the input curves.
Secondly, we study how to maintain a small data structure---or sketch---in a stream, that allows us to answer queries for the discrete Fr\'echet distance from a short query curve to the part of the stream which has been discovered so far. We show that we can maintain a sketch that only depends \emph{poly-logarithmically} on the portion of the input seen so far and can be used to compute a $(1+\eps)$-approximation to the discrete Fr\'echet distance for any query curve.  
Our streaming algorithm uses the \emph{merge-and-reduce} framework that has found applications in many streaming algorithms, for example in the area of clustering. It goes back to the work of Bentley and Saxe~\cite{BS80}. In clustering, merge-and-reduce has been successfully applied for problems that allow the construction of \emph{composable coresets}: A coreset is a small summary of a data set that behaves similarly to the original set with respect to a predefined cost function. A coreset construction is \emph{composable} if two instances of the coreset can be merged without inducing additional error. Our sublinear data structure for distance queries can be interpreted as a composable coreset. The first work to use merge-and-reduce for composable coresets in clustering is~\cite{HPM04}, and our application is similar to theirs (except that how we achieve composability is much more complex). 
The second problem we study is building approximate data structures of type (ii) for the discrete Fr\'echet distance.
In particular, we study the $(1+\eps)$-approximate near neighbor, also referred to as ANN, is an element of the input of which the distance to the query is most a factor $(1+\eps)$ further away than a fixed parameter $r$. The approximate nearest neighbor problem is known \cite{HIM12} to reduce to a sequence of approximate near neighbor problems. This problem is fundamental in applications such as machine learning, information retrieval and classification. 
Our data structure is especially useful if the input curves are very long compared to the query curves. Intuitively, this corresponds to the input living in high dimensional space.\footnote{This can be formalized in terms of the doubling dimension (see also Section~\ref{sec:doublingspaces} in the Appendix).} The standard (and baseline) approach to nearest neighbor searching in high-dimensional space is a linear scan over the data during query time. The aim is therefore to achieve sublinear \emph{query complexity}, see also the discussion in~\cite{Shakhnarovich:2006:NML:1197919}. In fact, we will show query complexities \emph{independent} of the length of the input curves $m$ and logarithmic in the number of input curves $n$.
\subsection{Preliminaries}\label{sec:prelim}
Our results distinguish between inputs in Euclidean metrics in constant dimension and the case of metrics with bounded doubling dimension given in the weakly explicit model defined below.
In the Euclidean case, we denote a ball of radius $r$ around $x \in \mathbb{R}^d$ by $b(x,r)$. We use $\XX_m^d =\left(\RR^d \right)^m $ and treat the elements of this set as ordered sets of points in $\RR^d$ of size $m$ called \emph{curves}. 

In the metric case, we assume a metric space $(\MMM^m,\dm)$, write a curve $p$ with $m$ vertices as $p=p_1,\ldots,p_m$ and denote the space of all curves by $\MMM^m$. Furthermore we denote a ball of radius $r$ around $x \in \MMM$ by $b_{\MMM}(x,r)$. We define doubling spaces as follows.
\begin{definition}[Doubling constant]
Consider any metric space with ground set $X$.
The {\em doubling constant} of $X$, denoted $\lambda_X$, is the smallest integer $\lambda_X$ such that for any
$p \in X$ and $r > 0$, the ball $b_{\MMM}(p, r)$ (in $X$) can be covered by at most $\lambda_X$ balls of radius $r/2$ centered at points in $X$. The {\em doubling dimension} of $X$ is $d_X = \log \lambda_X$. $(\MMM,\d_{\MMM})$ is a {\em doubling space} if $\lambda_X$ is a constant.
\end{definition}
We assume the existence of a constant-time oracle that gives us access to the metric space. We refer to the two computational models relevant for our work as follows:
\begin{inparaenum}[(a)]
    \item {\em black-box model} (\cite{CG06,HM06,KL04}): there exists a constant-time {\em distance oracle} for the metric space that reports the pairwise distance for any two points,
    \item {\em weakly explicit model} (\cite{AMVX08}): there exists a distance oracle and a {\em doubling oracle} for the metric space. Given any ball in
the metric space $\MMM$, the doubling oracle returns in time $\lambda_{\MMM}$ a covering with $\lambda_{\MMM}$ balls of half the radius.  
\end{inparaenum}

Note that the class of doubling spaces includes the Euclidean metric space. For the sake of generality, we consider curves embedded in a doubling space. We make a distinction in some cases where the Euclidean metric allows for better results. Now we define the discrete Fr\'echet distance. 

 \begin{definition}[Traversal]
 Given $p=p_1, \ldots, p_{m}\in \MMM^m$ and $q=q_1, \ldots, q_{k}\in \MMM^k$, a traversal $T=(i_1,j_1),\ldots,(i_t,j_t)$ of $p$ and $q$ is a sequence of pairs of indices referring to a pairing of points from the two sequences such that:
\begin{compactenum}[(i)]
 \item $i_1,j_1=1$, $i_t=m$, $j_t=k$. 
 \item $\forall (i_u, j_u)\in T:$ $i_{u+1}-i_u \in \{0,1\}$ and $j_{u+1}-j_u \in \{0,1\}$.
 \item $\forall (i_u, j_u)\in T:$ $(i_{u+1}-i_u)+(j_{u+1}-j_u)\geq1$.
\end{compactenum} 
For a traversal $T$, we call $\max_{(i_u,j_u)\in T} \d_{\MMM}(p_{i_u},q_{j_u})$ the \emph{cost} of $T$.
 \end{definition} 

\begin{definition}[Discrete Fr\'{e}chet distance]\label{Ddist}
 Let $(\MMM, \d_{\MMM})$ be a metric space. 
 Given $p \in \MMM^m$ and $q\in \MMM^k$, we define the Discrete Fr\'{e}chet Distance between $p$ and $q$ as follows:
 \[
 \df(p,q)= \min_{T\in\mathcal{T}}  \max_{(i_u,j_u)\in T} \dm(p_{i_u},q_{j_u})  ,
 \]
 where $\mathcal{T}$ denotes the set of all possible traversals for $p$ and $q$. Thus, $\df(p,q)$ is the minimum cost of any traversal of $p$ and $q$.
 \end{definition}

Notice that the discrete Fr\'{e}chet distance defines a pseudo-metric: the triangular inequality is satisfied, but distinct curves may have zero distance. For our purposes, it is sufficient to consider the metric space which is naturally induced by this pseudo-metric: two polygonal curves are equivalent if their discrete Fr\'{e}chet distance is zero. 
Furthermore notice that adding copies of a vertex to a curve $X$ does not change the discrete Fr\'echet distance of $X$ to any other curve. Thus, if we have a distance oracle for all query curves in $\MMM^k$, it also works for all queries of size $k' \le k$.

It is well known~\cite{eiter1994computing} that computing the discrete Fr\'echet distance between two point sequences in 
a metric space
can be done using dynamic programming in $\mathcal{O}(k m T_{dist})$ time, where $m$ and $k$ respectively denote the length of the two sequences and $T_{dist}$ denotes the time it takes to compute a distance in the underlying metric space. If this is the Euclidean metric, then $T_{dist}=\mathcal{O}(d)$ and the above is known to be optimal up to lower order factors unless the Strong Exponential Time Hypothesis (SETH) fails~\cite{bringmann2014}.

Our data structures make use of the concept of simplifications which we define as follows.

\begin{definition}[$\alpha$-approximate $k$-simplification]
Given a polygonal curve $p\in \MMM^m$, a polygonal curve $p' \in \MMM^k$ is an $\alpha$-approximate  $k$-simplification, if for any $q \in \MMM^k$ it holds that
\[ \d_{dF}(p,p') ~\leq~ \alpha \cdot d_{dF}(p,q)\]
\end{definition}

Algorithms for curve simplifications are a well-studied topic reaching back to the work of Imai and Iri, see~\cite{HM88} and references therein.  It is known how to compute optimal simplifications under the (discrete) Fr\'echet distance, 
see~\cite{altgodau-95,BeregSimp,Godau91, kreveld2018simp}. 
However, we are interested in a streaming algorithm for our purposes. 
Abam et al.~\cite{Abam2010} show how to maintain approximate simplifications for different error measures including the \emph{continuous} Fr\'echet distance in the streaming setting. Their algorithm maintains a $2k$-simplification that is $(4\sqrt{2}+\epsilon)$-approximate compared to an optimal $k$-simplification, using $\mathcal{O}(k^2 \epsilon^{-0.5} \log^2 \epsilon^{-1})$ space and $\mathcal{O}(k \epsilon^{-0.5} \log^2 \epsilon^{-1})$ amortized update time per point.
We observe that for the \emph{discrete} Fr\'echet distance, we can obtain a simple algorithm that uses very little space and update time. 
We describe the algorithm in Section~\ref{sec:simplification}. The following theorem states the result.

\begin{restatable}{theorem}{thmstreamingsimplification}\label{thm:streamingeightapprox}
Let $X\in \MMM^m$ be given as a (possibly infinite) stream in the black-box distance oracle model for a general metric space. 
There is a streaming (one-pass) algorithm that computes an $8$-approximate $k$-simplification of $X$. The algorithm needs $\mathcal{O}(k)$ memory, and updates the curve simplification after reading a point in time $\mathcal{O}(k)$. Processing a curve $X\in \MMM^m$ thus has a total running time of $\mathcal{O}(mk)$.
\end{restatable}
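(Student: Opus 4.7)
The plan is to describe a simple greedy streaming algorithm. It maintains a simplification $S$ of at most $k$ vertices together with a scalar tolerance $r$, initialised as $S$ empty and $r=0$. When a new point $p$ arrives, $p$ is appended to $S$ iff the distance from $p$ to the last vertex of $S$ exceeds $r$; otherwise $p$ is skipped and implicitly associated with that last vertex. If the append brings $|S|$ to $k+1$, a \emph{compression} step is triggered: scan $S$ for the minimum consecutive gap $\delta$, update $r \leftarrow \max(r,\delta)$, and delete one endpoint of the closest pair. The memory footprint is $O(k)$ and each update uses one linear scan of $S$, giving an $O(k)$ time bound per point.

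The correctness argument relies on two invariants. The first is a \emph{radius bound} $r \le 2 r^*$, where $r^* = \df(X',q^*)$ for $X'$ the portion of the stream processed so far and $q^*$ an optimal $k$-simplification of $X'$. This follows by a pigeonhole argument: at the moment of compression, the $k+1$ vertices of $S$ are themselves input vertices of $X'$ appearing in order, and the optimal traversal of $X'$ and $q^*$ matches every $S$-vertex monotonically with some vertex of $q^*$. Since $q^*$ has only $k$ vertices, two consecutive vertices of $S$ must share their match, and the triangle inequality yields $\delta \le 2 r^*$. Because $r^*$ is nondecreasing in the length of $X'$ and $r$ changes only during compressions, the bound persists at all times.

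The second invariant is a \emph{distance bound} $\df(X',S) \le 4 r^*$, established by exhibiting an explicit traversal of $X'$ and $S$ that evolves alongside the algorithm: newly appended $S$-vertices are self-paired at cost $0$, skipped vertices are paired with the current last $S$-vertex at cost at most $r$, and deletions re-route the $X$-vertices previously paired with the removed $S$-vertex to the retained neighbour. The main obstacle is that naive triangle-inequality bookkeeping for the pairing cost grows additively with the number of compressions. To avoid this, the plan is to match the algorithm's cluster structure against the optimal partition of $X'$ induced by $q^*$: the radius bound forces each surviving $S$-vertex to lie within $O(r^*)$ of its optimal representative $q^*_\ell$, giving a per-pair cost of $O(r^*)$ independent of the number of merges. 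Combining the two invariants yields $\df(X,S) \le 8 r^*$, i.e., the required $8$-approximation; the claimed time and memory bounds are immediate from the algorithm's description.
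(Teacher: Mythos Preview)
Your first invariant ($r\le 2r^*$ via pigeonhole on the $k+1$ stored input vertices) is correct and matches the paper's lower-bound argument. The gap is in the second invariant. You acknowledge that naive bookkeeping for the pairing cost accumulates over compressions, and then propose to ``match the algorithm's cluster structure against the optimal partition of $X'$ induced by $q^*$'' so that each $S$-vertex sits within $O(r^*)$ of some $q^*_\ell$. That last sentence is true but proves nothing about $\df(X',S)$: an input vertex $x$ assigned by your algorithm to $s_i$ is within $r^*$ of \emph{some} $q^*_{j(x)}$ and $s_i$ is within $r^*$ of \emph{some} $q^*_{j(i)}$, but you give no reason why $j(x)$ and $j(i)$ should coincide or even be close. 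The algorithm's interval partition and the optimal one are unrelated a~priori, so the triangle inequality through $q^*$ does not close. (There is also an inconsistency: if the second invariant really gave $\df(X',S)\le 4r^*$, you would have a $4$-approximation, not $8$; presumably you meant $\df(X',S)\le 4r$, but that is precisely the statement whose inductive proof breaks under single-vertex deletions.)

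The paper avoids this difficulty by using a different compression step. Instead of deleting one endpoint of the minimum-gap pair and setting $r\leftarrow\max(r,\delta)$, it \emph{doubles} the tolerance $\ell\leftarrow 2\ell$ and in one left-to-right pass merges \emph{every} vertex that is within $2\ell$ of its surviving predecessor, repeating until $|S|\le k$. This guarantees two things simultaneously: after each pass all surviving neighbors are $>\ell$ apart (so the pigeonhole lower bound $r^*\ge \ell/4$ holds), and the per-vertex assignment radius, which was $\le 2\ell_{\text{old}}$ before the pass, becomes $\le 2\ell_{\text{old}}+2\ell_{\text{old}}=4\ell_{\text{old}}=2\ell_{\text{new}}$ after. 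The geometric doubling is exactly what makes the assignment-radius invariant survive an unbounded number of compressions; with your ``delete one, set $r=\max(r,\delta)$'' rule there is no such amortization, and I do not see how to recover a bound of the form $\df(X',S)\le C\cdot r$ (or $C\cdot r^*$) without it.
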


\subsection{Previous work}\label{sec:previouswork}

It is beyond the scope of this paper to give a general overview of coresets and sketching techniques. We refer the interested reader to the surveys by Phillips~\cite{phillips2016coresets},  by Munteanu and Schwiegelshohn~\cite{Munteanu2018}, and  by Muthukrishnan~\cite{Muthu05}.
We summarize what is known on distance oracles for the Fr\'echet distance in Table~\ref{tab:distoracles}.
A $(1+\eps)$-approximate distance oracle was presented by Driemel and Har-Peled~\cite{driemel-jaywalking}. This data structure also supports queries to subcurves of the input curve, specified by including start and end points in the query. However, the space usage is super-linear in the input size and queries curves are restricted to be line segments. 
As we will see, a simple constant-factor-approximate distance oracle of size $O(k)$ can be obtained by storing a simplification of the input curve and by using this curve as a proxy. 
We cite the algorithms of Bereg et al.~\cite{BeregSimp} and Abam et al.~\cite{Abam2010} for computing simplifications. For a more detailed discussion of this approach refer to Section~\ref{sec:our_results_distoracle}.
Note that~\cite{Abam2010} and \cite{driemel-jaywalking} use the continuous Fr\'echet distance. However, we still include them for the purpose of comparison.

\begin{table}[tb]\centering\small
\def\arraystretch{1.2}
\caption{\label{tab:overview}Summary of previous data structures for Fr\'echet queries compared to our results. In the bounds below, $n,m,k$ and $d$ are defined as follows: the input is a set of curves $P \subset \XX^d_m$ with $|P|=n$; the query curves are in $\XX^d_k$, where $\XX^d_m = (\RR^d)^m$ is the set of point sequences in $\RR^d$ of length $m$. 
}
\begin{subtable}{\linewidth}\centering
{\begin{tabular}{|c|c|c|l|} \hline
   Space & Query & Approx. & Comments \\
   \hline\hline
   $\mathcal{O}(k)$ & $\mathcal{O}(k^2)$ & $3$ & static, using\,\cite{BeregSimp} \\
   $\mathcal{O}(k^2)$ & $\mathcal{O}(k^2)$ & $\mathcal{O}(1)$ & stream., using\,\cite{Abam2010} \\
    $\mathcal{O}((\eps^{-d} \log \frac{1}{\eps})^2 m)$ &  $\mathcal{O}(\eps^{-2}\log m \loglog m )$ & $1+\eps$ & static, $k=2$, \cite{driemel-jaywalking}\\
   \hline \hline
  $ \mathcal{O}( (k \log(\frac{1}{\eps}) \eps^{-d} )^k ) $ &   $\mathcal{O}( k^2 + \log \epsilon^{-1})$ & $1+\eps$ & static, Thm.\,\ref{thm:distoracle}
  \\
          $ \mathcal{O}( (k \log(\frac{1}{\eps}) \eps^{-d} )^k ) $ & $\mathcal{O}(k^2 + \log \epsilon^{-1})$ & $1+\eps$ & one-pass, Thm.\,\ref{thm:onepassDistOracle} \\
   $ {\mathcal{O}}\left(\log^2 m \cdot  k^k \cdot (\frac{\log m}{\eps})^{dk} \cdot (\log (\frac{\log m}{\eps}))^k  \right) $ 
    & $\mathcal{O}(k^4 \log^2 (m/\epsilon))$ & $1+\eps$ & stream., Thm.\,\ref{thm:streamingDistOracle} \\     

   \hline
   \end{tabular}}
\caption{\label{tab:distoracles} Overview of results for distance oracles. All of the results are deterministic. The data structure in~\cite{driemel-jaywalking} supports distance estimation to subcurves, while the other results are for distance estimation to the entire curve. The query times are stated under the assumption that we compute $\lceil x\rceil$ and $\lfloor x\rfloor$ in $\mathcal{O}(1)$ time.}
\end{subtable}
\begin{subtable}{\linewidth}\centering
{\begin{tabular}{|c|c|c|l|} \hline
   Space & Query & Approx. & Comments \\ 
   \hline\hline 
   ${\mathcal{O}}( m^2 |U| )^{m^{1-o(1)}}\cdot \mathcal{O}( n^{2-o(1)})$    &$\left(m \log n\right)^{\mathcal{O}(1)}$ &  $\mathcal{O}(1)$ & metric, det., \cite{Indyk-approxnn-02}
   \\
$\mathcal{O}(n \log n + dnm)$ & $\mathcal{O}(dk \log n)$  & $\mathcal{O}(k)$ & $\ell_2^d$, rand., \cite{Driemel-lshc-17}
\\ 
	 ${\mathcal{O}}(2^{4md} n )$ &${\mathcal{O}}(2^{4md} \log n )$  &  $\mathcal{O}(d^{3/2})$  & $\ell_2^d$, rand., \cite{Driemel-lshc-17}
    \\ 
   $\tilde{\mathcal{O}}( n )\cdot \left( \frac{d}{\log m} +2\right)^{\mathcal{O}\left(m^{\left(1+\frac{1}{\eps}\right)} d \log\left(\frac{1}{\eps}\right)\right)}$ &  $\tilde{\mathcal{O}} \left(dm^{(1+\frac{1}{\eps})} 2^{4m} \log n \right) $ & $1+\eps$ & $\ell_2^d$, rand., \cite{emiris2018} 
    \\ 
    $n \cdot  \mathcal{O}\left(\frac{1}{\eps}\right)^{dm}$ & $\mathcal{O}(md \log (nm/\eps))$ & $1+\eps$ & $\ell_2^d$, det., \cite{FFK19} 
    \\\hline\hline
   $n\cdot  \mathcal{O}\left(\frac{k{d}^{3/2}}{\eps}\right)^{dk} +\mathcal{O}(dnm)$ &  $\mathcal{O}\left(dk\right)$ & $1+\eps$ & $\ell_2^d$, rand., Thm.\,\ref{TannDFDhd} 
   \\
    $d^{3/2}n k \eps^{-1} \cdot \mathcal{O}\left(\frac{k{d}^{3/2}}{\eps}\right)^{dk}+\mathcal{O}(dnm)$ &  $\mathcal{O}\left(d^{5/2} k^2 \eps^{-1} (\log n + kd \log \frac{kd}{\eps}) \right) $ & $1+\eps$ & $\ell_2^d$, det., Thm.\,\ref{TdeterministicannDFDhd}
\\\hline
\end{tabular}}
\caption{\label{tab:compar} Overview of results for ANN. The result by Indyk~\cite{Indyk-approxnn-02} is tuned to optimize the approximation factor. $U$ denotes the ground set of the metric, where the Fr\'echet distance is derived from a finite metric space $(U,d)$.  
}
\end{subtable}
\end{table}

Previous results on data structures for ANN search under the discrete Fr\'{e}chet distance, are summarized in Table \ref{tab:compar}.  The first result from 2002 by Indyk \cite{Indyk-approxnn-02} achieved approximation factor 
$\mathcal{O}((\log m + \log \log n)^{t-1})$, where $m$ is the maximum length of a curve, and $t>1$ is a trade-off parameter. 
Table~\ref{tab:compar} states these bounds for appropriate $t=1+o(1)$, hence a constant approximation factor.  
More recently, in 2017, Driemel and Silvestri~\cite{Driemel-lshc-17} showed a locality-sensitive hash family for the discrete Fr\'echet distance in the standard setting where the underlying metric is Euclidean. Their basic approach achieves approximation factor $\mathcal{O}(k)$, where $k$ is the length of the query curve. They show how to improve the approximation factor to $\mathcal{O}(d^{3/2})$ at the expense of additional space usage. Table \ref{tab:compar} states the two extremes of this tradeoff. 
A follow-up result by Emiris and Psarros~\cite{emiris2018} achieves a ($1+\epsilon$) approximation, at the expense of increasing space usage. This approach is based on weak randomized embeddings from $\ell_2$ to other $\ell_p$-metrics. 
Most recently, Filtser et al.~showed how to build a $(1+\eps)$-approximate deterministic data structure for the problem using space in $ n \mathcal{O}(1/\eps)^{md}$ and query time in $\mathcal{O}(m d \log(nm/\eps))$. Note that all previous results have some exponential dependency on $m$, the length of the input sequences. The improvement is even more dramatic in the query times, since our query times are independent of $m$, while previous $(1+\eps)$-approximate data structures require query time linear in $m$, or even exponential in $m$.
There has been extensive work on ANN data structures for doubling spaces, which are general metric spaces with bounded dimension. 
In the literature it is common to assume a constant-time distance oracle (black-box model), as done by Krauthgamer and Lee~\cite{KL04}, Har-Peled and Mendel~\cite{HM06}, and Cole and Gottlieb~\cite{CG06}. The slightly more restricted weakly explicit model was proposed by Arya et al.~\cite{AMVX08}. 
In order to put our results into this larger context, we give bounds on the doubling dimension of the discrete Fr\'echet distance and analyze which bounds can be obtained following this approach. We include the details in the Appendix (Section~\ref{sec:doublingspaces}).
We use some of these data structures as building blocks for the extension of our data structures to the discrete Fr\'echet distance derived from doubling metrics.

\subsection{Our results}

In the following, we discuss our main results on data structures for queries under the discrete Fr\'echet distance in the short queries regime.  Table~\ref{tab:overview} gives an overview of the obtained bounds compared to the state of the art. For ease of comparison, the table only shows the results for the common case that the discrete Fr\'echet metric is derived from the Euclidean metric.

\subsubsection{Distance Oracles} \label{sec:our_results_distoracle}

We show how to construct $(1+\eps)$-approximate distance oracles using sublinear space. In the static case the final data structure is even \emph{independent} of the input size. This leads to the following theorem. The proof of the theorem can be found in Section~\ref{sec:distance:oracle}.

\begin{restatable}{theorem}{thmTimeSpace}\label{thm:distoracle}
Let $(\MMM,\dm)$ be a metric space with constant doubling dimension $d \in \mathbb{N}$ in the weakly explicit model, let $\alpha \ge 8$ be a constant.
For any curve $X \in \MMM^m$ and any $k \in \mathbb{N}^{\ge 1}$, we can compute a data structure
of size $\mathcal{O}(f_s(k,\eps,\alpha))$,
    in preprocessing time $\mathcal{O}(mk \cdot f_s(k,\epsilon,\alpha))$ and 
    space $\mathcal{O}(mk+f_s(k,\epsilon,\alpha))$, and 
    with query time $\mathcal{O}(f_q(k))$
which for any query curve $q \in \MMM^{k'}$ with $k'\le k$ returns a value $d'(q)$ such that 
\[\df(q,x) \le d'(q) \le (1+\epsilon)\cdot \df(q,X).\]
In the Euclidean case, $f_q(k,\eps,\alpha):= k^2+\log\frac{\alpha}{\epsilon}$ and $f_s(k,\eps,\alpha):= k^k \cdot (\log \frac{\alpha}{\epsilon})^k \cdot (\frac{\alpha}{\epsilon})^{d k}$. 
In the metric case, $f_q(k,\eps,\alpha)= \mathcal{O}(k^2 + k \cdot \log (k \cdot \frac{\alpha}{(2/3)\epsilon}))$ and $f_s(k,\eps,\alpha):= k^k \cdot (\log \frac{\alpha}{(2/3)\epsilon})^k \cdot (\frac{\alpha}{(2/3)\epsilon})^{d_{\MMM} k}$.
\end{restatable}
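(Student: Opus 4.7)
The plan is to combine a crude $\alpha$-approximate simplification of $X$ with a precomputed table of template-to-$X$ distances; the simplification pins down the shape of any query that could be close to $X$, and the table refines the constant-factor estimate it provides to a $(1+\eps)$-approximation.

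First, compute an $\alpha$-approximate $k$-simplification $X'\in \MMM^k$ (for instance via Theorem~\ref{thm:streamingeightapprox} with $\alpha=8$) together with $D := \df(X,X')$. For any query $q\in \MMM^{k'}$ with $k'\le k$, pad $q$ to length $k$ by duplicating its last vertex (which preserves $\df$) and write $r^* := \df(q,X)$. The simplification property applied to the padded $q$ gives $D \le \alpha r^*$, and two triangle inequalities then yield the controlling bounds
\[ r^* \ge D/\alpha \qquad\text{and}\qquad \df(q,X') \le (1+\alpha)\,r^*. \]
These a priori bounds split the analysis into two regimes.

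In the \emph{far regime}, when $r^*$ is above a threshold of order $D/\eps$, the oracle simply returns an appropriate correction of $\tilde d := \df(q,X')$ (e.g.\ $\tilde d + D$), since $|r^*-\tilde d|\le D \le \mathcal{O}(\eps r^*)$. This is computable in $\mathcal{O}(k^2 T_{dist})$ time by dynamic programming. In the \emph{near regime} $r^*\in [D/\alpha,\,\mathcal{O}(D/\eps)]$, the second a priori bound places every vertex $q_j$ in $b_{\MMM}(X'_{\pi(j)},(1+\alpha)r^*)$ for some index $\pi(j)\in\{1,\ldots,k\}$ coming from an optimal traversal. We precompute, for every vertex $X'_j$ and every dyadic scale $R_i := 2^i D/\alpha$ with $i=0,\ldots,\mathcal{O}(\log(\alpha/\eps))$, a $(\eps R_i/c)$-net $N_{j,i}$ of the ball $b_{\MMM}(X'_j,\,2(1+\alpha)R_i)$: in the Euclidean case this is a standard grid of size $\mathcal{O}((\alpha/\eps)^d)$, and in the weakly explicit metric case it is built by iterating the doubling oracle and has size $\mathcal{O}((\alpha/\eps)^{d_{\MMM}})$. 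Setting $V := \bigcup_{j,i} N_{j,i}$ yields $|V| = \mathcal{O}(k\log(\alpha/\eps)(\alpha/\eps)^{d})$. We then enumerate every template $T=(t_1,\ldots,t_k)\in V^k$, compute $\df(T,X)$ by dynamic programming in $\mathcal{O}(mk\,T_{dist})$ time, and store it in a dictionary keyed by $T$; the number of entries matches $f_s(k,\eps,\alpha)$.

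At query time we compute $\tilde d$ in $\mathcal{O}(k^2 T_{dist})$; if $\tilde d$ and $D$ certify the far regime we return the corrected value, otherwise we identify the scale $R_i$ matching the constant-factor estimate of $r^*$ obtained from $\tilde d$ and $D$, round each $q_j$ to the nearest point of $V$ at scale $R_i$ to form a template $T$, and return $\df(T,X)$ via dictionary lookup. The rounding guarantees $\df(q,T)\le \max_j \dm(q_j,t_j)\le \eps R_i/c \le \eps r^*$, so the triangle inequality delivers $|\df(T,X)-r^*|\le \eps r^*$. Finding the nearest net point costs $\mathcal{O}(k+\log(\alpha/\eps))$ in Euclidean space using grid arithmetic and $\mathcal{O}(k\log(k\alpha/\eps))$ in the weakly explicit case using a navigating net on $V$, matching the quoted $f_q$. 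The main obstacle is calibrating the net radius and resolution: the scale $R_i$ is chosen from an estimate of $r^*$ that is only accurate up to a $\Theta(\alpha)$ factor, so the radius $2(1+\alpha)R_i$ has to be wide enough to contain every $q_j$ for every plausible $r^*$ in that range while the resolution $\eps R_i/c$ must still force the rounding error to $\mathcal{O}(\eps r^*)$; verifying that the far/near handover threshold is chosen consistently so that neither subroutine is invoked outside its guaranteed range is the final technical detail, and everything beyond it reduces to the triangle inequality and a counting of net points via the doubling property.
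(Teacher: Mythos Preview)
Your approach is essentially the same as the paper's: compute an $\alpha$-approximate $k$-simplification $X'$, build exponential nets around its vertices, precompute Fr\'echet distances from all $k$-templates on the net to $X$, and at query time either use $\df(q,X')$ directly (far regime) or snap $q$ to a template and look up the stored distance (near regime). The paper organises this into a separate lemma about the grid of query representatives and a three-case query algorithm (very close / very far / intermediate), but the structure and the estimates are the same as yours.

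One point to tighten: in the near regime you return $\df(T,X)$ itself, and your triangle-inequality bound only gives $|\df(T,X)-r^*|\le \eps r^*$, so the reported value can be as small as $(1-\eps)r^*$, violating the stated lower bound $d'(q)\ge \df(q,X)$. The paper avoids this by always returning $\df(q,X^*)+\df(X^*,X)$, which is automatically at least $\df(q,X)$ by the triangle inequality while still being at most $(1+\eps)\df(q,X)$; you can do the same by outputting $\df(q,T)+\df(T,X)$ instead of $\df(T,X)$.
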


In Section~\ref{sec:streaming} we show that it is possible to maintain such distance oracles in the stream. To the best of our knowledge ours constitutes the first streaming algorithm for the Fr\'echet distance that can perform $(1+\eps)$-approximate distance queries while using sublinear additional memory. 
In particular, for the streaming algorithm, we consider two scenarios: In the classical setting, we do not know the length of the stream in advance and we want to be able to answer dynamic distance queries in the stream. In this case, our algorithm balances the memory requirements of the individual distance oracles with the time required to perform dynamic distance queries on the concatenation of these distance oracles. This leads to the following theorem. The proof of the theorem can be found in Section~\ref{sec:streaming:implications}.

\begin{restatable}{theorem}{thmDistOracleStreamingUnKnownM}
\label{thm:streamingDistOracle}
There is a streaming algorithm that computes a distance oracle for a curve $X\in \XX^d_m$ given as a stream with $m=|X|$, where $m$ is not known in advance, with the following properties. The algorithm uses additional memory in 
\[
\mathcal{O}\Big(\log^2 m \cdot k^k \cdot \big(\log \big(\frac{\log m}{\eps}\big)\big)^k  \cdot (\frac{\log m}{\eps})^{dk}\Big)
\]
 and the total time used by the algorithm to compute a distance oracle in the stream is bounded by 
$
\mathcal{O}( m  \cdot k^{k+3}  \cdot (\frac{\log m}{\eps})^{dk + 1} \cdot (\log (\frac{\log m}{\eps}))^k)
$. 
Using the data structures maintained in the stream, one can answer distance queries at any point in the stream. For any query curve $q \in \XX^d_k$, the algorithm outputs a value $d'(q)$ in query time $\mathcal{O}( k^4 \cdot \log^2 m/\epsilon )$ and additional memory in $\mathcal{O}(k \cdot \log^2 m + k^2)$ such that 
\[\df(q,x) \le d'(q) \le (1+\epsilon)\cdot \df(q,X).\]
\end{restatable}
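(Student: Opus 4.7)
The plan is to apply the Bentley--Saxe merge-and-reduce framework to the static distance oracle of Theorem~\ref{thm:distoracle}. The crucial conceptual step, already signaled in the introduction, is to view that oracle as a \emph{composable coreset}: from an oracle for $X_1$ and an oracle for $X_2$ one can produce an oracle for the concatenation $X_1\circ X_2$ with a controlled increase in error. Because the stream length $m$ is not known, I would combine this with the standard doubling trick so that no prior bound on $m$ is needed.

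First I would maintain a binary tree of at most $L = \lceil\log m\rceil$ active oracles. At each level $i$ there is at most one oracle summarizing a block of $2^i$ consecutive stream points; whenever two level-$i$ oracles coexist, they are merged into a level-$(i{+}1)$ oracle, and the construction is recursed upward. To absorb the approximation drift accumulated across the $L$ levels of merging, each individual oracle is built with per-level precision $\eps' = \Theta(\eps/\log m)$ and a suitably chosen per-level simplification constant $\alpha'$, so that $(1+\eps')^{L} \le 1+\eps$. Plugging $\eps'$ into the static bound $f_s(k,\eps',\alpha')$ of Theorem~\ref{thm:distoracle} and multiplying by $L$ levels yields the claimed additional memory of order $\log^2 m \cdot k^k \cdot (\log(\log m/\eps))^k \cdot (\log m/\eps)^{dk}$; the total running time bound follows because each stream point participates in at most $L$ oracle constructions, each costing $\mathcal{O}(k \cdot f_s(k,\eps',\alpha'))$ amortized.

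The merge step itself exploits the fact that each oracle from Theorem~\ref{thm:distoracle} stores, as part of its representation, an $\alpha$-approximate $k$-simplification (by Theorem~\ref{thm:streamingeightapprox}) together with a discretization net of candidate perturbations. To merge two oracles for $X_1$ and $X_2$ I concatenate their stored simplifications into a curve of length $2k$ and rebuild a fresh distance oracle of the target size on top of it; soundness of this operation requires the ``splitting inequality''
\[
\df(q, X_1\circ X_2) \;=\; \min_{q=q_1\circ q_2}\ \max\bigl(\df(q_1, X_1),\, \df(q_2, X_2)\bigr),
\]
which follows directly from the definition of a traversal, combined with the $(1+\eps')$-multiplicative guarantee of each merged oracle. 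For queries, given the $O(\log m)$ active oracles $O_1,\ldots,O_L$ covering disjoint consecutive segments of the stream so far, I would run a dynamic program over states $(i,\ell)$ meaning ``the best cost at which the prefix $q[1..i]$ of the query traverses $O_1,\ldots,O_\ell$,'' with transitions invoking one oracle query of $O_\ell$ on each candidate subquery $q[j..i]$. With $O(k\log m)$ states, $O(k)$ transitions each, and per-transition cost $\mathcal{O}(k^2 + \log(\log m/\eps))$ from Theorem~\ref{thm:distoracle}, the overall query cost matches the claimed $\mathcal{O}(k^4 \log^2(m/\eps))$ bound, and the DP table plus the traversal of the $O(\log m)$ oracles account for the additional $\mathcal{O}(k\log^2 m + k^2)$ working memory.

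The main obstacle, and where I would spend the bulk of the proof, is controlling the multiplicative error across the merge-reduce tree. Each reduce step replaces a curve by its oracle-approximation, so after $L$ levels the estimated distance is of the form $\prod_{i=1}^{L}(1+\eps_i)$ times the true distance; one must also verify that the simplification factor $\alpha$ does \emph{not} compound (because the simplification is only used internally to build the grid, and the final quantity returned is still a $(1+\eps')$-estimate of the \emph{true} distance to the underlying subcurve, not to its simplification). Calibrating $\eps'$ and $\alpha'$ so that this drift stays within the desired $(1+\eps)$, while ensuring that the splitting inequality above is used correctly inside the query DP so that no additional error is introduced by splitting $q$, is the delicate heart of the argument.
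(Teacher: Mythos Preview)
Your overall architecture matches the paper: binary merge-and-reduce with precision $\eps'=\Theta(\eps/\log m)$ to absorb $L=\Theta(\log m)$ levels of drift, a doubling trick to handle unknown $m$, and the dynamic program over splits of $q$ (your ``splitting inequality'') to answer queries from a chain of oracles. That last part is exactly the paper's Lemma~\ref{lem:composition:chain}.

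The genuine gap is in your merge step. You propose to ``concatenate their stored simplifications into a curve of length $2k$ and rebuild a fresh distance oracle on top of it.'' This does not work. If you rebuild the oracle treating the $2k$-curve $X_1'\circ X_2'$ as the input, the stored distances are to $X_1'\circ X_2'$, not to $X_1\circ X_2$; the discrepancy is an \emph{additive} $\df(X_1'\circ X_2',X_1\circ X_2)$, which is not controlled multiplicatively and does not telescope. If instead you first re-simplify the $2k$-curve back to $k$ vertices and use that as the grid center, the simplification factor degrades: one merge turns an $\alpha$-approximate simplification into a $(2\alpha+1)$-approximate one, so after $L$ levels you have $\alpha_L=\Theta(2^{L}\alpha)=\Theta(m)$, and the oracle size $f_s(k,\eps',\alpha_L)$ blows up. Your parenthetical that ``the simplification factor $\alpha$ does not compound'' is precisely what fails for this merge rule.

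The paper's fix is to maintain, \emph{separately} from the oracles, a small dictionary $A$ of streaming $k$-simplifications (via Theorem~\ref{thm:streamingeightapprox}), one for each segment that may next be merged; each such simplification has constant $\alpha=8$ regardless of merge depth. The merge of $\DD_1,\dots,\DD_t$ then (i) fetches from $A$ a fresh $8$-approximate $k$-simplification of $X_1\circ\cdots\circ X_t$, (ii) builds the exponential grid around it, and (iii) populates each grid curve's stored distance by running your splitting DP against the child oracles (this is Lemma~\ref{lem:composition:build:chain}). Only the multiplicative $(1+\eps')$ from step (iii) compounds, giving $(1+\eps')^{L}\le 1+\eps$ as you intended. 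Interleaving the updates to $A$ with the merge-reduce stack is the content of Section~\ref{sec:streaming:algo}.

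A minor accounting point: with the doubling trick for unknown $m$ you end up with $O(\log m)$ stored phases, each itself holding $O(\log m)$ stack oracles, so a query DP runs over $O(\log^2 m)$ oracles (hence the $\log^2 m$ in the query time and the extra $\log m$ in space), not the $O(\log m)$ you state.
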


Secondly, in the \emph{one-pass scenario} we are interested in computing a static distance oracle using one pass over the input and while using sublinear additional memory. In this case, we aim to produce a small data structure with little dependency on the length of the input (at the expense of a higher dynamic query time in the stream and higher additional memory in the stream). The proof of the following theorem can be found in Section~\ref{sec:streaming:implications}.

\begin{restatable}{theorem}{thmOnePassDistOracle}\label{thm:onepassDistOracle}
There is a one-pass algorithm that computes a distance oracle for a curve $X\in\XX^d_m$ given as a stream of length $m$. For $1 < s < \log_2 m$, the algorithm uses additional memory in 
$\mathcal{O}( m^{1/s} \cdot k^k (\log \eps^{-1})^k  \eps^{-dk})$ 
and has running time in 
\[ 
\mathcal{O}\left( 
m^{(1+\frac{1}{s})} \cdot k^{(k+3)} \cdot (\log \eps^{-1})^k \cdot \eps^{-dk}
 \right)
\]
The resulting distance oracle has size 
$\mathcal{O}( k^k (\log \eps^{-1})^k  \eps^{-dk} )$. 
For any query curve $q \in \XX^d_k$ the data structure outputs a value $d'(q)$ in query time $\mathcal{O}(k^2+\log \epsilon^{-1})$ such that
\[\df(q, X) \le d'(q) \le (1+\eps) \df(q,X). \]
\end{restatable}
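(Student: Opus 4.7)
The plan is to apply the Bentley--Saxe merge-and-reduce framework using the static distance oracle of Theorem~\ref{thm:distoracle} as a composable coreset. Conceptually I partition the stream into consecutive chunks of $\lceil m^{1/s}\rceil$ points. As soon as a chunk is complete I build the static oracle of Theorem~\ref{thm:distoracle} for it and insert the result as a leaf into an $\lceil m^{1/s}\rceil$-ary merge tree of depth $s-1$; whenever $m^{1/s}$ oracles have accumulated at the same level of the tree they are replaced by a single merged oracle one level up. When the stream ends, the partial upper levels are flushed into one final oracle of size $f_s(k,\eps,\mathcal{O}(1)) = \mathcal{O}(k^k(\log\eps^{-1})^k\eps^{-dk})$. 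At any moment only $s$ levels are active, each holding at most $m^{1/s}$ oracles of size $f_s$, which gives peak additional memory $\mathcal{O}(s\cdot m^{1/s}\cdot f_s)$ and matches the statement once $s\le \log_2 m$ is absorbed.

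The key property I have to establish is composability of the oracle of Theorem~\ref{thm:distoracle}: from oracles $O(Y_1),\dots,O(Y_t)$ on sub-curves one can construct an oracle for the concatenation $Y=Y_1\cdots Y_t$ using only the information these oracles already store. Inspecting the static construction, each oracle exposes an $\mathcal{O}(1)$-approximate $k$-simplification of its sub-curve together with a multi-resolution covering of the neighborhoods of its simplification vertices. To merge I would concatenate the stored simplifications in order, feed the resulting $\mathcal{O}(tk)$ points to the streaming $8$-approximate simplification of Theorem~\ref{thm:streamingeightapprox}, and then rebuild the covering structure around the new simplification vertices using the weakly explicit doubling oracle (a grid in the Euclidean case). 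Correctness reduces, via triangle inequality on the discrete Fr\'echet distance, to the facts that each stored simplification is an $\mathcal{O}(1)$-approximate proxy for its sub-curve and that the re-simplification is an $\mathcal{O}(1)$-approximate proxy for the concatenation of these proxies.

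For the time and space accounting I would charge building a leaf oracle $\mathcal{O}(m^{1/s}\cdot k\cdot f_s)$ by Theorem~\ref{thm:distoracle}, giving $\mathcal{O}(m\cdot k\cdot f_s)$ over all leaves; each internal merge reapplies Theorem~\ref{thm:distoracle} to an input of $\mathcal{O}(k\cdot m^{1/s})$ points at cost $\mathcal{O}(k^2\cdot m^{1/s}\cdot f_s)$, and summing over the $\mathcal{O}(m^{1-1/s})$ merges in the tree (adding in particular the dominating cost at the top level, where an input of size $\mathcal{O}(k\cdot m^{1/s})$ is reprocessed) yields the claimed running time $\mathcal{O}(m^{1+1/s}\cdot k^{k+3}(\log\eps^{-1})^k\eps^{-dk})$. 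The returned structure is a single instance of Theorem~\ref{thm:distoracle}, so its size, query time, and approximation guarantee are inherited directly from that theorem.

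The part I expect to be most delicate is controlling the approximation factor across repeated composition: each merge can inflate the multiplicative error, and a naive analysis would produce $(1+\eps)^{\Theta(s)}$ at the root. To keep the final factor at $(1+\eps)$ I would invoke each intermediate application of Theorem~\ref{thm:distoracle} with error parameter $\eps'=\Theta(\eps/s)$; because $s\le \log_2 m$ the resulting blow-up in the intermediate $f_s$ terms is only polylogarithmic and the root oracle, which is produced with the original $\eps$, has precisely the size and query time stated in the theorem. A secondary point that I would need to verify is that running Theorem~\ref{thm:streamingeightapprox} on the concatenation of previously stored simplifications (rather than on raw stream points) still yields a valid $\mathcal{O}(1)$-approximate simplification of the concatenated sub-curve; this follows from a short triangle-inequality argument on the three curves involved in each merge.
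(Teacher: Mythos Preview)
Your high-level framework (merge-and-reduce with an $m^{1/s}$-ary tree, and $\eps'=\Theta(\eps/s)$ to tame the multiplicative blow-up) matches the paper's, but the merge step you describe has a genuine gap. The oracle of Theorem~\ref{thm:distoracle} does not just store a simplification and a grid; for every grid curve $X'\in\mathcal{C}$ it stores the precomputed value $\df(X',X)$ to the \emph{original} curve $X$. When you merge, you must fill this table for the concatenation $X=X_1\circ\dots\circ X_t$ without access to $X$. Your proposal ``reapplies Theorem~\ref{thm:distoracle} to an input of $\mathcal{O}(k\cdot m^{1/s})$ points'', i.e.\ to the concatenation $S_1\circ\dots\circ S_t$ of stored simplifications. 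The resulting table then holds $\df(X',S_1\circ\dots\circ S_t)$, which differs from $\df(X',X)$ by up to $\df(X,S_1\circ\dots\circ S_t)=\Theta(\df(X,X_k))$. That is a \emph{constant multiple} of $\df(q,X)$, not an $\eps$-fraction, so one merge already destroys the $(1+\eps)$ guarantee; this is precisely the obstruction discussed in Section~\ref{sec:flawed}. The paper's merge is different: the new table is populated by issuing each grid curve $X'$ as a \emph{query} to the sub-oracles $\DD_1,\dots,\DD_t$ and combining their answers via dynamic programming over the split points of $X'$ (Lemma~\ref{lem:composition:chain}, used inside Lemma~\ref{lem:composition:build:chain}). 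This is the missing key idea, and it is also why the paper's merge costs $\mathcal{O}(f_s\cdot k^2 t\cdot f_q)$ rather than the $\mathcal{O}(k^2 t\cdot f_s)$ you budget.

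There is a second, independent problem with obtaining the simplification of $X_1\circ\dots\circ X_t$ by re-simplifying $S_1\circ\dots\circ S_t$. The triangle-inequality argument you allude to gives roughly $\alpha\mapsto\alpha(\alpha+2)$ per level, so after $s$ levels the simplification factor is doubly exponential in $s$; since $\alpha$ enters $f_s$ as $(\alpha/\eps)^{dk}$, the final oracle size would depend on $s$, contradicting the stated bound. The paper avoids this by maintaining, in parallel with the stack of oracles, genuine $8$-approximate $k$-simplifications of the relevant suffixes of the \emph{raw} stream (the dictionary $A$ in Section~\ref{sec:streaming:algo}), updated point by point via Theorem~\ref{thm:streamingeightapprox}; the simplification handed to every merge is therefore always $8$-approximate regardless of its level.
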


We remark that the general framework and the composition techniques for our streaming algorithm readily extend to other related distance measures, such as Dynamic time warping. However, we are not aware of any sublinear distance oracles that could be used inside the framework for such distance measures, even in the short queries regime.

\subsubsection{Techniques I} \label{sec:our_results_distoracle_techniques}
To establish the results stated in Section~\ref{sec:our_results_distoracle} above, we combine and extend multiple known techniques like exponential grids for points or the merge-and-reduce technique for obtaining streaming algorithms. 
First, we observe that a small $O(k)$-size distance oracle that answers queries up to a \emph{constant} approximation factor can be easily obtained from the following observation, which is implied by the triangle inequality.

\begin{observation}\label{simpleOracle}
Given $p \in \MMM^m$ and $q \in \MMM^k$, if $p'$ is an $\alpha$-approximate $k$-simplification of $p$, then (using the triangle inequality) it holds that 
\[ \d_{dF}(p,q) \leq \d_{dF}(p,p') + \d_{dF}(p',q) \leq 2\d_{dF}(p,p') + \d_{dF}(p,q)  \leq (2\alpha+1) \cdot \d_{dF}(p,q) \]
\end{observation}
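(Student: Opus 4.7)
The statement is a short consequence of the triangle inequality together with the defining property of an $\alpha$-approximate $k$-simplification. My plan is to break the chain of three inequalities into the three obvious steps and verify each one individually.

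First, for the leftmost inequality $\d_{dF}(p,q) \le \d_{dF}(p,p') + \d_{dF}(p',q)$, I would invoke the triangle inequality for $\d_{dF}$. The paper has already noted that the discrete Fr\'echet distance is a pseudo-metric (and, once we pass to equivalence classes of zero-distance curves, a genuine metric), so the triangle inequality is available whenever we need it. Here $p' \in \MMM^k$ serves simply as a middle point between $p$ and $q$ in the space of curves.

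For the middle inequality $\d_{dF}(p,p') + \d_{dF}(p',q) \le 2\d_{dF}(p,p') + \d_{dF}(p,q)$, I would apply the triangle inequality a second time, this time in the form
\[
\d_{dF}(p',q) \;\le\; \d_{dF}(p',p) + \d_{dF}(p,q) \;=\; \d_{dF}(p,p') + \d_{dF}(p,q),
\]
and then add $\d_{dF}(p,p')$ to both sides. (Using symmetry of $\d_{dF}$, which again follows because it is a pseudo-metric.)

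For the rightmost inequality $2\d_{dF}(p,p') + \d_{dF}(p,q) \le (2\alpha+1)\d_{dF}(p,q)$, I would apply the definition of an $\alpha$-approximate $k$-simplification directly. Since $q \in \MMM^k$, the definition yields $\d_{dF}(p,p') \le \alpha \cdot \d_{dF}(p,q)$, and substituting this bound into the first summand gives exactly the claimed coefficient $2\alpha+1$ in front of $\d_{dF}(p,q)$. Concatenating the three inequalities proves the observation. There is no genuine obstacle here: the only subtlety is remembering that the definition of an $\alpha$-approximate simplification must hold for \emph{every} $q \in \MMM^k$, which is what lets us apply it to the particular query curve appearing in the statement.
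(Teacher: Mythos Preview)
Your proposal is correct and matches the paper's approach exactly: the paper gives no separate proof, simply flagging ``(using the triangle inequality)'' in the statement itself, and your three-step breakdown (two applications of the triangle inequality plus the defining property of an $\alpha$-approximate $k$-simplification) is precisely the intended justification.
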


The idea is to store a $k$-simplification $p'$ of the input curve $p$ together with the distance $d_F(p',p)$. By Observation~\ref{simpleOracle}, an approximation to the distance between a query curve $q$ and $p$ is now given by $d_F(p,p')+d_F(p',p)$. That is, we can use the simplification as a proxy curve. However it seems that, without any extra information on the input curve, this approach leads to constant factor approximations only (see also the discussion in Section~\ref{sec:flawed} in the Appendix).  
Building upon this idea, we use an exponential grid centered on each vertex of a compressed curve (simplification) and precompute all distances for all relevant curves on the grid. Using such distance information for a set of proxy curves we can then answer $(1+\eps)$-approximate distance queries for any query curve, at the cost of some space overhead for the look-up table. 
Next, we show that this data structure can be maintained in the stream.
Here, the main technical challenge that we solve is that, initially, our sublinear data structure does not seem \emph{composable}, which means that given two or more data structures for subsequent parts of a curve, it is not obvious how to build a data structure (or even how to answer queries) for the concatenation of the different parts. We show how to obtain composability by using dynamic programming on partitions of the query curve (Section~\ref{sec:composition}). Since this merging procedure is costly, we then observe that a non-binary computation tree can be beneficial to improve the performance of the merge-and-reduce algorithm. We use a trade-off parameter controlling the height of the computation tree to balance the size of the final distance oracle produced by the algorithm with the additional memory and query time used during the streaming which results from maintaining several distance oracles on different parts of the stream. 
Each time we rebuild our data structure within the merge-and-reduce framework we need to have a simplification of the input curve for the right portion of the input stream at our disposal. We show how to maintain such simplifications in the stream in Section~\ref{sec:simplification} and we carefully interleave the two streaming algorithms in Section~\ref{sec:streaming:algo}---leading to our main streaming algorithm.
We generalize our approach to curves defined on metric spaces of bounded doubling dimension by using the data structure of Arya et al.~\cite{AMVX08} for generating and querying the set of proxy curves. The generalized result is also stated in Theorem~\ref{thm:streamingDistOracle}.

\subsubsection{ANN data structures}

Refer to Table~\ref{tab:compar} for an overview of our results on nearest neighbor data structures in comparison to previous work. Note that all previous results achieving an approximation factor $(1+\eps)$ have some exponential dependence on the maximum complexity of the input curves, either in space or in query time. Our data structure has only linear dependency on $m$, the complexity of the input curves and only exponential dependency on $k$, the complexity of the query curves.
We give the exact theorem below. The proof of the theorem can be found in Section~\ref{subsection:ann:frechet}. 
Note that our ANN data structure is randomized: the performance guarantees correspond to constant probability of success which can be amplified by repetition.

\begin{restatable}{theorem}{thmANNdfdhd}
\label{TannDFDhd}
Given as input a set of $n$ polygonal curves $P\subset \XX_m^d$, and an approximation parameter $\eps>0$, 
there exists a randomized data structure with space  in $n\cdot \mathcal{O}\left(\frac{kd^{3/2}}{\eps}\right)^{kd}$ plus additional space in $\mathcal{O}(dnm)$ for storing the input, {preprocessing time} in  $dnmk\cdot \mathcal{O}\left(\frac{kd^{3/2}}{\eps}\right)^{kd}$, and 
query time in $\mathcal{O}(dk)$, for the ANN problem under the discrete Fr\'{e}chet distance. 
For any query curve $q\in\XX_k^d$, the preprocessing algorithm succeeds with constant probability.
\end{restatable}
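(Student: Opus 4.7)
My plan is to build a bucketing data structure keyed by \emph{canonical} query curves. Preprocessing will compress each input curve to a short proxy (an $8$-approximate $k$-simplification) and enumerate, for each input curve, a net of nearby length-$k$ curves whose vertices lie on a global grid; these nets populate a single global hash table that maps each canonical curve to the set of input curves it approximately matches. The key design principle is that a query is answered by a \emph{single} snapping operation into the shared grid followed by one lookup, so the query time depends only on $d$ and $k$, and not on $n$ or $m$.

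First, for each $p_i \in P$ I would compute an $8$-approximate $k$-simplification $p_i'$ using a static version of Theorem~\ref{thm:streamingeightapprox}, costing $\mathcal{O}(nmk)$ time in total. Then I would fix an axis-aligned grid $G \subset \RR^d$ with cell side $\delta = \Theta(\eps r / \sqrt{d})$, where $r$ is the near-neighbor radius, after an appropriate random shift (this is the source of the randomization in the theorem). For each input curve $p_i$, let $B_i \subset G$ be the set of grid points within Euclidean distance $\mathcal{O}(r)$ of at least one vertex of $p_i'$. I would enumerate every ordered $k$-tuple $c = (c_1, \ldots, c_k) \in B_i^k$, compute $\df(c, p_i)$ in time $\mathcal{O}(kmd)$ by the standard dynamic program, and whenever $\df(c, p_i) \le (1+\eps/2)r$ insert $p_i$ into a global hash bucket $H[c]$.

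To answer a query $q \in \XX_k^d$, I would snap each vertex of $q$ to its nearest point of $G$ in $\mathcal{O}(d)$ time per vertex, producing a canonical curve $\tilde q \in G^k$, and return any input curve stored in $H[\tilde q]$ (or report no near neighbor). For correctness, assume $\df(q, p_i) \le r$ for some $p_i$. Two triangle-inequality steps suffice: (a) each vertex of $q$ lies within distance $r + \df(p_i, p_i') \le 9r$ of some vertex of $p_i'$, so after snapping $\tilde q \in B_i^k$ and was therefore enumerated in the preprocessing; and (b) $\df(\tilde q, q) \le \delta \sqrt{d}/2 = \Theta(\eps r)$ via the identity traversal, so $\df(\tilde q, p_i) \le (1+\eps/2)r$ and $p_i \in H[\tilde q]$. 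Hence the lookup returns a curve within $(1+\eps)r$ of $q$.

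The space and time bounds reduce to bounding $|B_i|$: each of the $k$ vertices of $p_i'$ has $\mathcal{O}((r/\delta)^d) = \mathcal{O}(d^{1/2}/\eps)^d$ grid points within $\mathcal{O}(r)$, so $|B_i| = k \cdot \mathcal{O}(d^{1/2}/\eps)^d$ and $p_i$ contributes at most $|B_i|^k$ bucket entries, which (with constants absorbed and the grid radius calibrated to the $8$-approximation error of the simplification) matches the stated $n \cdot \mathcal{O}(kd^{3/2}/\eps)^{kd}$. Preprocessing is dominated by enumerating buckets and running the Fr\'echet DP on each, giving $dnmk \cdot \mathcal{O}(kd^{3/2}/\eps)^{kd}$, while the query cost is the $\mathcal{O}(kd)$ snap plus an $\mathcal{O}(1)$-expected hash lookup. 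The main obstacle I anticipate is making the snapping rule truly \emph{global}, so that a single hash lookup suffices instead of a per-input-curve scan; I would address this by using one shared grid and one shared hash table across all input curves, keyed by the canonical curve and not by the input index. A secondary obstacle is tightening the grid radius, grid-cell size, and simplification-error interplay to reach exactly the $(kd^{3/2}/\eps)^{kd}$ exponent rather than a naively looser bound.
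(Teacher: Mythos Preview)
Your approach is correct but follows a genuinely different route from the paper's. The paper does \emph{not} use simplifications for the ANN data structure. Instead it employs a two-level grid: a coarse randomly shifted grid of side $\delta=\Theta(dk)$ is used to hash each input curve by the sequence of at most $k$ coarse cells it visits (Lemma~\ref{lemma:probfail} guarantees that if $\df(p,q)\le 1$ then $p$ and $q$ hit the same cell sequence with probability at least $1/2$); then within each bucket the paper enumerates query representatives on a fine $\eps/\sqrt{d}$ grid restricted to those $k$ cells and precomputes answers. The random shift of the coarse grid is essential there: it is what makes the bucketing succeed with constant probability and is the reason the theorem is stated as randomized.

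Your construction replaces the coarse-grid bucketing by the $8$-approximate $k$-simplification $p_i'$ and enumerates the fine-grid points in balls of radius $\mathcal{O}(r)$ around the $k$ vertices of $p_i'$, populating a single global hash table. This is essentially the distance-oracle idea of Section~\ref{sec:distance:oracle} adapted to ANN. Two remarks: (i) the random shift you mention is in fact unnecessary in your scheme, since snapping to any fixed axis-aligned grid gives $\|q_j-\tilde q_j\|\le\delta\sqrt{d}/2$ deterministically, and all your correctness steps go through without it; so your argument is really deterministic up to the perfect hashing. (ii) Your space count $|B_i|^k = (k\cdot\mathcal{O}(\sqrt{d}/\eps)^d)^k = k^k\cdot\mathcal{O}(\sqrt{d}/\eps)^{dk}$ is actually tighter than the paper's $\mathcal{O}(kd^{3/2}/\eps)^{kd}$ (the extra $d$ in the paper comes from the coarse-cell diameter $\Theta(dk)$), so you comfortably meet the stated bound. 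What the paper's route buys is that it avoids computing simplifications altogether and makes the connection to locality-sensitive hashing explicit; what your route buys is determinism of correctness and a slightly smaller representative set.
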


We also show that our data structure can be derandomized at little extra cost. This leads to the following theorem. The proof can be found in Section~\ref{subsection:ann:frechet}.

\begin{restatable}{theorem}{thmANNddfdhd}
\label{TdeterministicannDFDhd}
Given as input a set of $n$ polygonal curves $P\subset \XX_m^d$, and an approximation parameter $0<\eps<1$, 
there exists a deterministic data structure with space  in $ \left(d^{3/2}  n k \eps^{-1} \right)\times \mathcal{O}\left(\frac{kd^{3/2}}{\eps}\right)^{kd}$ plus additional space in  $\mathcal{O}(dnm)$ for storing the input, preprocessing time in  $ \mathcal{O}\left( d^{5/2}nmk  \eps^{-1}\right)\times \mathcal{O}\left(\frac{kd^{3/2}}{\eps}\right)^{kd}$, and 
query time in $\mathcal{O}\left(\frac{k^2d^{5/2}}{\eps} (\log n + kd \log \frac{kd}{\eps}))\right)$, for the ANN problem under the discrete Fr\'{e}chet distance, for query curves in $\XX_k^d$. 
\end{restatable}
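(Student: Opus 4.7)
The plan is to derandomize the data structure of Theorem~\ref{TannDFDhd} by replacing its single source of randomness with a small deterministic family of equivalent size, and then taking a union of the resulting structures.

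First, I would isolate where randomness enters the construction. The extra factor $kd^{3/2}/\eps$ appearing in the space and preprocessing bounds of Theorem~\ref{TdeterministicannDFDhd}, together with the $(\log n + kd\log(kd/\eps))$ factor in the query time, is indicative of the right template: the random ingredient is most plausibly a random offset of an $\eps/(kd^{1/2})$-grid used to snap curve vertices into a common discrete signature, and the derandomization replaces this single shift by a deterministic net of $O(kd^{3/2}/\eps)$ shifts.

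Second, I would prove that a deterministic net $\mathcal{S}$ of $O(kd^{3/2}/\eps)$ shifts suffices: for every query curve $q \in \XX_k^d$, at least one $s \in \mathcal{S}$ yields a correct $(1+\eps)$-ANN. Following the standard quantifier-swap template, for any fixed $q$ the set of ``bad'' shifts is covered by at most $O(kd^{3/2}/\eps)$ intervals of aggregate width $O(\eps)$ along each coordinate (one per query vertex, per coordinate, each of width $O(\eps/(kd^{3/2}))$), so an equispaced net of the stated size in the shift parameter range necessarily hits the good region. A final quantifier swap uses that these events depend on $q$ only through a net of $O(kd/\eps)^{kd}$ representative query curves, which the randomized analysis already handles, so the same deterministic $\mathcal{S}$ works uniformly over all $q \in \XX_k^d$.

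Third, I would instantiate one copy of the data structure of Theorem~\ref{TannDFDhd} for each $s \in \mathcal{S}$, sharing the raw input storage so that the additive $O(dnm)$ term appears only once. This multiplies space, preprocessing, and per-shift query work by $|\mathcal{S}| = O(kd^{3/2}/\eps)$, matching the stated bounds; the $(\log n + kd\log(kd/\eps))$ query-time factor arises from a binary search over the $n$ input curves and the $O(kd/\eps)^{kd}$ representative signatures performed within each individual per-shift lookup.

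The main obstacle is verifying the net-size estimate: one must carefully enumerate the bad-shift intervals for each fixed query $q$, exploit the geometric structure of the discrete Fr\'echet matching to bound their total measure along each coordinate, and check that this bound is tight enough to admit a net of size $O(kd^{3/2}/\eps)$ rather than something exponential in $d$ or $k$. Once this combinatorial estimate is in place, the remaining bookkeeping (time, space, and query-time inflations) follows directly from Theorem~\ref{TannDFDhd} applied once per element of $\mathcal{S}$.
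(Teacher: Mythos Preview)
Your high-level plan---enumerate a finite set of shifts and build one copy of the randomized structure per shift---matches the paper's, and you correctly trace the $(\log n + kd\log(kd/\eps))$ query factor to replacing perfect hashing by binary search. But the core step, bounding the number of shifts by $O(kd^{3/2}/\eps)$, is not established by your argument, and the way you set it up suggests a misreading of the randomized construction.

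First, the random ingredient in Theorem~\ref{TannDFDhd} is the shift $z\in[0,\delta]$ of the \emph{coarse} grid with side $\delta=\Theta(dk)$, not of a fine $\eps/(k\sqrt d)$-grid. Second, your ``bad-shift interval'' estimate is off by orders of magnitude: for a fixed near pair $(p,q)$, the bad event is that one of the $k$ traversal components is split in some coordinate, and Lemma~\ref{lemma:probfail} only shows this bad set of $z$ has measure at most $\delta/2=\Theta(dk)$---not $O(\eps)$. So an equispaced net of $O(kd^{3/2}/\eps)$ shifts need not hit the good region for every query, and the subsequent ``quantifier swap'' over $O(kd/\eps)^{kd}$ representative queries only makes the union of bad regions larger, not smaller. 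The argument as written does not go through.

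The paper's proof supplies the missing idea: \emph{snap all input coordinates to a grid of side $\eps/\sqrt d$ before doing anything else}. After this discretization (which only perturbs Fr\'echet distances by $\pm\eps$, via Observation~\ref{obs:b}), every input coordinate is a multiple of $\eps/\sqrt d$, so as $z$ ranges over $[0,\delta]$ the coarse-grid assignment $\lfloor(x-z)/\delta\rfloor$ changes only at multiples of $\eps/\sqrt d$. Hence there are at most $\delta\sqrt d/\eps + 1 = O(d^{3/2}k/\eps)$ values of $z$ that yield distinct outcomes, and one simply builds the structure for each of them. No per-query bad-set argument or quantifier swap is needed; the finiteness comes from discretizing the \emph{input}, not from analyzing the query.
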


Finally, in Section~\ref{subsection:anndoubling} we further extend our techniques to handle curves in doubling spaces. The results are stated in Theorems \ref{TannDFDhdDDweak} and 
\ref{TannDFDhdDD} in the same section.

\subsubsection{Techniques II}
We describe the techniques used to establish our results on ANN data structures.
For curves in Euclidean space, we use the  $\mathcal{O}(k)$-approximate locality-sensitive hashing scheme proposed by Driemel and Silvestri~\cite{Driemel-lshc-17} to snap the input  to a (coarse) randomly shifted grid. After that, each bucket of the hash table is refined further using (finer) $\eps$-grids. We precompute and store approximate answers for all possible $k$-paths on the vertices of this grid. At query time, we aim to find the grid curve that lies closest to the query and return the precomputed answer.
We remark that Filtser et al.~\cite{FFK19} use a similar strategy. However, their data structure applies the $\eps$-grid to all points of the input sequences, directly. This results in an exponential dependency on $m$, while our data structure uses space linear in $m$ and query time independent of $m$. 
Indeed, our approach follows along the lines of the main idea described in Section~\ref{sec:our_results_distoracle_techniques} in that the coarse grid yields a compressed version of the input curves akin to $k$-simplifications. 
Next, we derandomize our data structure. This is achieved by observing that if we first discretize the ambient space, then the number of distinct outcomes produced by choosing a randomly shifted grid is bounded by a small number. This can be done while incurring little additive error. Concretely, we first snap points to an $\eps$-grid and then build the data structure for all possible shifts yielding distinct outcomes. It remains to observe that the perfect hashing used by our data structure can be replaced by plain binary search with a $O(\log n)$ overhead in the query time. 

Finally, in Section~\ref{subsection:anndoubling} we show how to extend our techniques for building ANN data structures to handle curves in general doubling metrics. This incurs a slight increase in query time since we cannot simply snap the query to a canonical grid. However, we can use standard techniques for doubling spaces.
The high-level idea of our solution is very similar to the one used above. We use nets, in order to discretize the input space, and a net-hierarchy which allows for a fast implementation of a $\Delta$-bounded-diameter random partition. Such partitions are quite common in the literature (see e.g.\ \cite{H11}, Chapter 26). They are particularly interesting because they satisfy the following two properties: 
near points belong to the same cluster with good probability, and each cluster has bounded diameter.
Hence, if two polygonal curves are near neighbors, then their vertices belong to the same sequence of at most $ k $ clusters, with good probability. In order to handle queries which are not known in advance, we consider cells, instead of clusters, which are subsets of the ambient space and are inherently computed by the same partition algorithm. 
We use perfect hashing and we build a hashtable where each non-empty bucket contains only those curves which intersect a certain sequence of cells. Now, any two curves which fall into the same bucket are $\Delta$-near, and by carefully adjusting the parameters, this already provides with an $\mathcal{O}(k)$ approximation. 
Furthermore, assuming the existence of a doubling oracle for the ambient space, we can  precompute answers to all query representatives defined by net points. To answer a query, we use  the net-hierarchy to efficiently compute the corresponding cell and then we retrieve the precomputed answer from a hashtable.
We present a weak version of our result for the black-box model in Theorem \ref{TannDFDhdDDweak}, and a stronger version for the  weakly explicit model in Theorem~\ref{TannDFDhdDD}.

\section{A sublinear distance oracle}\label{sec:distance:oracle}

We construct a data structure for distance queries to a curve in low dimension. Given a curve $X\in \MMM^m$, we compute an approximate curve simplification $X_k$, a set of curves $\mathcal{C}$ that we call \emph{query representatives}, and all distances between the query representatives and $X$. With these in memory, we can approximately compute $\df(X,q)$ for any query curve $q\in \MMM^k$.

\newcommand{\G}{\mathcal{G}}

\paragraph{Grid construction}\label{sec:gridconstruction:general}
First we compute the set of query representatives $\mathcal{C}$. 
We start by computing an $\alpha$-approximate curve simplification $X_k=p_1,\ldots,p_k$ for $X$.
Then we construct an exponential grid (for points) around every $p_i$. Exponential grids for points are known in the literature, we prove the following statement in Appendix~\ref{appendix:expgrid}) for completeness. 

\begin{restatable}{lemma}{expgridlemmacombined}\label{lem:expgrid:combined}
Let $(\MMM,\dm)$ be a metric space with constant doubling dimension $d_{\MMM}$. Assume the weakly explicit model. 
Let $x \in \MMM$, $\epsilon \in (0,1)$ and $r_1, r_2 \in \mathbb{R}$ with $r_1 < r_2$. Then  a set $\G(x) \subset M$ of size $\mathcal{O}((\log \frac{r_2}{r_1})\cdot\epsilon^{-d_{\MMM}})$ can be computed in time and space $\mathcal{O}(|\G(x)|)$ such that for every $y \in b_{\MMM}(x,r_2) \backslash b_{\MMM}(x,r_1)$, there is a point $z \in \G(x)$ with
\[
\dm(y,z) \le \epsilon \cdot \dm(x,y).
\]
If $(\MMM,\dm)$ is $(\mathbb{R}^d,||\cdot||)$, then it is possible to compute the nearest neighbor for a point $y \in b_{\MMM}(x,r_2)\backslash b_{\MMM}(x,r_1)$ in time $\log |\G(x)|$ (by grid snapping).
\end{restatable}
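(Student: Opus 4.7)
}
The plan is to build the exponential grid as a union of finer and finer nets, one for each dyadic ``ring'' of distances to $x$. First I would set $L := \lceil \log_2(r_2/r_1) \rceil + 1$ and partition the annulus into the rings
\[
R_i := b_{\MMM}(x, 2^{i+1} r_1)\setminus b_{\MMM}(x, 2^i r_1), \qquad i = 0, \dots, L-1,
\]
so that every $y \in b_{\MMM}(x,r_2)\setminus b_{\MMM}(x,r_1)$ lies in exactly one $R_i$ and in particular satisfies $\dm(x,y) \ge 2^i r_1$. The idea is then to cover the larger ball $b_{\MMM}(x, 2^{i+1} r_1)$ that contains $R_i$ by a set of balls of radius at most $\epsilon \cdot 2^i r_1$, and to take the centers of these balls as the net $N_i$ contributed by ring $i$ to $\G(x)$.

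Second, I would construct each net $N_i$ by iterated application of the doubling oracle. Starting from the single ball $b_{\MMM}(x, 2^{i+1} r_1)$, each call to the doubling oracle replaces every current ball by at most $\lambda_{\MMM}$ balls of half the radius; after $t := \lceil \log_2 (2/\epsilon) \rceil$ halvings the radius is at most $\epsilon \cdot 2^i r_1$, and the number of balls produced is at most $\lambda_{\MMM}^t = O((2/\epsilon)^{d_{\MMM}}) = O(\epsilon^{-d_{\MMM}})$. The total time to build $N_i$ is proportional to the number of balls produced because each doubling-oracle call costs $\lambda_{\MMM} = O(1)$. Setting $\G(x) := \bigcup_{i=0}^{L-1} N_i$ then yields $|\G(x)| = O(L \cdot \epsilon^{-d_{\MMM}}) = O(\log(r_2/r_1) \cdot \epsilon^{-d_{\MMM}})$ with construction time and space of the same order, and for any $y$ in ring $R_i$ the covering property of $N_i$ gives a point $z \in N_i \subseteq \G(x)$ with $\dm(y,z) \le \epsilon \cdot 2^i r_1 \le \epsilon \cdot \dm(x,y)$, as required.

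Third, for the Euclidean case I would replace the doubling-oracle construction by a canonical axis-aligned grid: for each ring $i$, lay down a hypercubic grid inside $b_{\MMM}(x, 2^{i+1} r_1)$ with side length $\Theta(\epsilon\cdot 2^i r_1 / \sqrt{d})$; the number of grid vertices in the ring is $O(\epsilon^{-d})$, and snapping a query point $y$ proceeds in two stages---first determine the ring index $i = \lfloor \log_2 (\lVert y-x\rVert / r_1) \rfloor$ (a single arithmetic step), then round each coordinate of $y - x$ to the nearest multiple of the grid side length. The total cost is $O(\log |\G(x)|)$ if we account for reading the grid index and resolving the resulting vertex identifier in a perfect hash table; the approximation guarantee follows from the same ring-based argument since any two points in a cell of ring $i$ differ by at most $\epsilon \cdot 2^i r_1$. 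The main obstacle is book-keeping rather than depth: one must carefully track the constants coming out of the iterated doubling so that the exponent of $\epsilon^{-1}$ matches the stated $d_{\MMM}$ (rather than accumulating an extra $\log(1/\epsilon)$ factor in the exponent), but this is handled cleanly by cutting off the halving process exactly at $t = \lceil \log_2(2/\epsilon)\rceil$ iterations and absorbing the constant $2$ into the implicit constant of the $O(\cdot)$ notation.
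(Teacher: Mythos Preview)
Your proposal is correct and follows essentially the same approach as the paper: partition the annulus into $O(\log(r_2/r_1))$ dyadic shells, and in each shell place a net (via iterated doubling in the metric case, via a regular grid in the Euclidean case) whose granularity is proportional to the inner radius of the shell. The only cosmetic difference is that in the Euclidean case the paper uses nested axis-aligned \emph{cubes} (layers $R(x,c^i w)\setminus R(x,c^{i-1}w)$) rather than spherical rings, which makes the layer-identification step a max-norm computation and keeps the layer boundaries aligned with the grid; your spherical-ring variant works just as well for the stated guarantee.
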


We call the union of all points in all exponential grids $\mathcal{G}$. Then we consider all curves of complexity $k$ that can be built by using points from $\mathcal{G}$, this set is $\mathcal{C}$. For each curve $X' \in \mathcal{C}$, we precompute the distance to the input curve $X$. We store all the distances in a data structure that allows to retrieve the distance $\df(X',X)$ using the sequence of vertices of $X'$ as a key. 
The following simple observation says that for every point on a query curve $q$ that is not too far from $X_k$ (with respect to Fr\'echet distance), we can shift every vertex to a point on one of the exponential grids. 
\begin{observation}\label{obs:a}
Let $p\in \MMM^m$ and $q=q_1,\ldots,q_m\in \MMM^m$. 
If $\df(p,q) \le \Delta$, then for all $j \in [m]$ there exists an $i \in [m]$ such that $q_j \in b_{\MMM}(p_i,\Delta)$.
\end{observation}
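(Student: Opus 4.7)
The plan is to unpack the definition of the discrete Fr\'echet distance and use the fact that any valid traversal must visit every index of both curves. Since $\df(p,q) \le \Delta$, there exists a traversal $T = (i_1,j_1),\ldots,(i_t,j_t)$ of $p$ and $q$ whose cost $\max_{(i_u,j_u)\in T} \dm(p_{i_u},q_{j_u})$ is at most $\Delta$.

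I would then fix an arbitrary $j \in [m]$ and argue that there is some index $u \in [t]$ with $j_u = j$. This follows from the traversal conditions: by (i) we have $j_1 = 1$ and $j_t = m$, and by (ii) the sequence $j_1, j_2, \ldots, j_t$ is non-decreasing and increases by at most $1$ at each step. Hence every integer in $\{1,\ldots,m\}$ appears as some $j_u$, in particular $j$ does. Picking such a $u$ and setting $i := i_u$, the cost bound on $T$ yields $\dm(p_i, q_j) \le \Delta$, which is exactly the statement $q_j \in b_{\MMM}(p_i, \Delta)$.

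The proof is essentially one line once the traversal is named, so I do not expect any obstacle: the only thing to be careful about is to cite condition (ii) of the traversal definition (increments in $\{0,1\}$) together with the boundary condition (i) to guarantee surjectivity of $u \mapsto j_u$ onto $[m]$. No use of the metric structure beyond the definition of a ball is required, so the statement goes through verbatim in both the Euclidean and general doubling metric case.
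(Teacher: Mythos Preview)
Your proposal is correct and is exactly the intended argument; the paper states this as an observation without proof, treating it as immediate from the definition of a traversal. Your care in citing conditions (i) and (ii) to ensure every $j \in [m]$ appears as some $j_u$ is the only thing one needs to check, and you have done so.
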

If the shifting is minor, then the Fr\'echet distance also does not change very much.
\begin{observation}\label{obs:b}
Let $q=x_1,\ldots,x_k$ and $q'=y_1,\ldots,y_k$ be curves in $\MMM^k$ assume that that $\dm(x_i,y_i) \le \delta$.
Then $|\df(q,Y) - \df(q',Y)| \le \delta$.
\end{observation}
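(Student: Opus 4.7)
The plan is to prove the observation by a direct application of the triangle inequality for the discrete Fr\'echet pseudo-metric, which was noted in the preliminaries. The key intermediate step is to bound $\df(q, q')$ by $\delta$, after which the claim follows immediately.

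First I would establish the bound $\df(q, q') \le \delta$. Since $q$ and $q'$ have the same number of vertices $k$, I can exhibit the ``diagonal'' traversal $T = (1,1), (2,2), \ldots, (k,k)$, which clearly satisfies the three traversal conditions of the definition. The cost of this traversal is $\max_{i \in [k]} \dm(x_i, y_i)$, and by the assumption of the observation this is at most $\delta$. Since the discrete Fr\'echet distance is the minimum cost over all traversals, we obtain $\df(q, q') \le \delta$.

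Next, I would invoke the triangle inequality for $\df$ (stated in the excerpt after Definition~\ref{Ddist}) twice. Applying it to the triple $q, q', Y$ gives
\[
\df(q, Y) \le \df(q, q') + \df(q', Y) \le \delta + \df(q', Y),
\]
and symmetrically
\[
\df(q', Y) \le \df(q', q) + \df(q, Y) \le \delta + \df(q, Y).
\]
Rearranging, both $\df(q, Y) - \df(q', Y) \le \delta$ and $\df(q', Y) - \df(q, Y) \le \delta$ hold, which together yield $|\df(q, Y) - \df(q', Y)| \le \delta$, as required.

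There is no real obstacle here: the statement is a quantitative form of the Lipschitz property of $\df(\cdot, Y)$ with respect to the discrete Fr\'echet distance itself. The only point requiring any care is the very first step, namely producing an explicit traversal to certify $\df(q,q') \le \delta$; this is immediate because $|q| = |q'|$ allows the identity pairing. The rest is a standard two-sided triangle-inequality argument.
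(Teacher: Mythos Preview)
Your proof is correct. The paper states this as an observation without proof, and your argument---bounding $\df(q,q')\le\delta$ via the diagonal traversal and then applying the triangle inequality for $\df$ in both directions---is exactly the intended justification.
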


With these in mind, we prove for the following lemma that for all $q$ which are sufficiently far away from $X$ but not too distant, $\mathcal{C}$ contains a good representative. The lower bound on the distance is necessary to make the exponential grids precise enough. The upper bound is to bound the size of the grids. 

\begin{restatable}{lemma}{lemmafrechetgrid}\label{lem:gridproperties}
For a curve $X \in \MMM^m$ and a precision parameter $\eps > 0$, it is possible to compute a set of query representatives $\mathcal{C}$ of size $\mathcal{O}(k^k\cdot (\log \frac{\alpha}{\epsilon})^k \cdot \left(\frac{\alpha}{\epsilon}\right)^{d_{\MMM} k})$ 
in time and space $\mathcal{O}(k\cdot|\mathcal{C}|+km)$ with the following property:
For any $q \in \MMM^k$ satisfying $\eps \cdot \df(X,X_k)\le \df(q,X) \le \frac{1}{\eps} \cdot \df(X,X_k)$, there is a curve $\bar{X}$ in $\mathcal{C}$ which satisfies that
\begin{align*}
\df(q,\bar{X}) \le \eps \cdot \df(q,X).
\end{align*}
In the Euclidean case, $\bar{X}$ can be found in time $\mathcal{O}(k +\log \frac{\alpha}{\epsilon})$ by grid snapping.
\end{restatable}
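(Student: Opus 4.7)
The plan is to instantiate the exponential grids of Lemma~\ref{lem:expgrid:combined} around each vertex $p_i$ of the $\alpha$-approximate simplification $X_k$, take $\mathcal{C}$ to be the set of all $k$-vertex sequences drawn from $\mathcal{G}=\bigcup_i \mathcal{G}(p_i)$, and show that every query $q$ in the specified distance range admits a vertex-wise snapping onto $\mathcal{G}$ whose identity traversal certifies the claimed approximation.

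First I will fix the grid parameters. By the triangle inequality together with the simplification guarantee $\df(X,X_k)\le\alpha\cdot\df(X,q)$, we have $\df(q,X_k)\le\df(q,X)+\df(X,X_k)\le(1+\alpha)\df(q,X)$, so by Observation~\ref{obs:a} every vertex $q_j$ lies within $(1+\alpha)\df(q,X)$ of some vertex $p_{i(j)}$ of $X_k$. Using the upper-bound hypothesis $\df(q,X)\le\df(X,X_k)/\eps$, the outer radius $r_2:=(1+\alpha)\df(X,X_k)/\eps$ is sufficient, and using the lower-bound hypothesis $\df(q,X)\ge\eps\cdot\df(X,X_k)$, the inner radius $r_1:=\eps^2\cdot\df(X,X_k)$ satisfies $r_1\le\eps\cdot\df(q,X)$. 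I will invoke Lemma~\ref{lem:expgrid:combined} at each $p_i$ with precision $\eps':=\eps/(1+\alpha)$, which gives $|\mathcal{G}(p_i)|=\mathcal{O}(\log(\alpha/\eps)\cdot(\alpha/\eps)^{d_{\MMM}})$; summing over $i$ and taking the $k$-fold product yields $|\mathcal{C}|=|\mathcal{G}|^k$ within the claimed bound. Preprocessing amounts to computing $X_k$ in time $\mathcal{O}(km)$ by Theorem~\ref{thm:streamingeightapprox} and then enumerating $\mathcal{C}$ in time $\mathcal{O}(k|\mathcal{C}|)$.

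For correctness, given $q=q_1,\ldots,q_k$, for each $j$ I pick $p_{i(j)}$ as above and set $y_j:=p_{i(j)}$ whenever $\dm(q_j,p_{i(j)})\le r_1$; otherwise I take $y_j$ to be the grid point in $\mathcal{G}(p_{i(j)})$ produced by Lemma~\ref{lem:expgrid:combined}, which satisfies $\dm(q_j,y_j)\le\eps'\cdot\dm(q_j,p_{i(j)})\le\eps'(1+\alpha)\df(q,X)=\eps\cdot\df(q,X)$. In the ``near'' case $\dm(q_j,y_j)\le r_1\le\eps\cdot\df(q,X)$ directly. Hence $\bar X:=y_1,\ldots,y_k$ lies in $\mathcal{C}$, and the traversal $(1,1),(2,2),\ldots,(k,k)$ witnesses $\df(q,\bar X)\le\max_j\dm(q_j,y_j)\le\eps\cdot\df(q,X)$. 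In the Euclidean setting the construction reduces to $k$ grid-snap operations, each executable in time $\mathcal{O}(\log|\mathcal{G}(p_i)|)=\mathcal{O}(\log(\alpha/\eps))$ by the Euclidean clause of Lemma~\ref{lem:expgrid:combined}. The main obstacle I expect is the coupled choice of $r_1,r_2,\eps'$: both sides of the gap hypothesis on $\df(q,X)$ are needed simultaneously, since the lower bound is what absorbs the near case into the additive budget $\eps\cdot\df(q,X)$ via a small $r_1$, while the upper bound controls $r_2$ and hence keeps the grid size logarithmic in $\alpha/\eps$; without this coupling, neither the size bound nor the approximation factor can be achieved.
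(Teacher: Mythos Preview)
Your proposal is correct and follows essentially the same approach as the paper: build exponential grids (Lemma~\ref{lem:expgrid:combined}) around the vertices of an $\alpha$-approximate simplification $X_k$, let $\mathcal{C}$ be all $k$-tuples from the union $\mathcal{G}$, and snap each $q_j$ vertex-wise. The only cosmetic differences are in the parameter choices: the paper takes $r_1=\tfrac{\eps}{\alpha}\df(X,X_k)$ (bounding it via the simplification property $\df(X,X_k)\le\alpha\,\df(q,X)$ rather than the lower-bound hypothesis) and $r_2=\tfrac{1}{\eps}\df(X,X_k)$, but both routes give $\log(r_2/r_1)=\mathcal{O}(\log(\alpha/\eps))$ and the same size bound. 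One small omission: since in the ``near'' case you set $y_j:=p_{i(j)}$, you need $p_1,\ldots,p_k\in\mathcal{G}$ for $\bar X\in\mathcal{C}$; the paper adds these explicitly, and you should too.
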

\begin{proof}
First we compute an $\alpha$-approximate curve simplification $X_k=p_1,\ldots,p_k$ for some constant $\alpha$, which needs time and space $\mathcal{O}(km)$ (see Theorem~\ref{thm:streamingeightapprox}).
Then for any query $q\in \MMM^k$ we know  that 
\begin{align}\label{eq:approx-curve-simplification}
\df(X,X_k) \le \alpha \cdot \df(q,X) 
\end{align}
is true. By the triangle inequality, this also implies
\begin{align}\label{eq:approx-curve-cor}
\df(q,X_k) \le \df(q,X) + \df(X,X_k) \le (1+\alpha) \cdot \df(q,X).
\end{align}
Next, we compute a set $\G(p_i)$ for all $i\in[k]$ by using Lemma~\ref{lem:expgrid:combined}. We use 
\[
r_1=\frac{\epsilon}{\alpha} \cdot \df(X,X_k) \stackrel{\eqref{eq:approx-curve-simplification}}{\le} \epsilon \cdot \df(q,X)
\] and $r_2=\frac{1}{\epsilon} \df(X,X_k)$, i.e., $r_2/r_1=\alpha \cdot \epsilon^{-2}$, and a precision parameter of $\epsilon'=\epsilon/(\alpha+1)$.
We union all resulting sets and $\{p_1,\ldots,p_k\}$ to obtain a set $\G$, i.e.,
\[
\G = \{p_1,\ldots,p_k\} \cup \bigcup_{i=1}^k \G(p_i).
\]
Notice that $\G$ is of size $\mathcal{O}(k\cdot (\log \frac{\alpha}{\epsilon}) \cdot (\frac{\alpha}{\epsilon})^{d_{\MMM}})$ for constant $d_{\MMM}$.

Now we obtain the set of query representatives $\mathcal{C}$ by constructing all curves that consist of $k$ points from $\G$. The cardinality of the resulting set of curves is $|G|^k$.

To show that the error bound holds, let $q=q_1,\ldots,q_k$ be any query curve that satisfies
$
\eps \cdot \df(X,X_k)\le \df(q,X) \le \frac{1}{\eps} \cdot \df(X,X_k)
$.
By Observation~\ref{obs:a}, every point $q_j$ of $q$ lies in  $b_{\MMM}(p_i,\frac{1}{\epsilon} \df(X,X_k))=b_{\MMM}(x_i,r_2)$ for at least one $i\in[k]$. For every $j\in[k]$, we set $x(q_j)$ to an $p_i$ such that $q_j \in b_{\MMM}(p_i,r_2)$ and such that $\dm(q_j,p_i)$ is minimal. 

Now we observe that 
\begin{align}\label{eq:yjxyj}
\dm(q_j,x(q_j))\le \df(q,X_k) \stackrel{\eqref{eq:approx-curve-cor}}{\le} (1+\alpha) \cdot \df(q,X).
\end{align}

Let $q_j$ be arbitrary. We map $q_j$ to a point $\pi(j)$.
If $q_j \in b_{\MMM}(x(q_j),r_1)$, then we set $\pi(j)=x(q_j)$. Thus, 
\[
\dm(q_j,\pi(j)) \le r_1 \le \epsilon\cdot \df(q,X).
\]
If $q_j \in b_{\MMM}(x(q_j),r_2)\backslash b_{\MMM}(x(q_j),r_1)$, then we set $\pi(j)$ to a nearest neighbor of $q_j$ in $\G_i$. Then 
\[
\dm(q_j,\pi(j)) \le \epsilon' \cdot \dm(q_j,x(q_j)) = \frac{\epsilon}{\alpha+1} \cdot \dm(q_j,x(q_j)) \stackrel{\eqref{eq:yjxyj}}{\le} \epsilon\cdot\df(q,X)
\]
 by Lemma~\ref{lem:expgrid:combined}.
Thus, $\dm(y_j,\pi(j)) \le \epsilon \cdot \df(q,X)$ is true
for all $j \in [k]$. By Observation~\ref{obs:b}, this implies that $\df(q,q')\le \epsilon \cdot \df(q,X)$. 
Since $q'\in\mathcal{C}$, the closest neighbor $X^\ast$ of $q$ in $\mathcal{C}$ satisfies $\df(q,X^\ast)\le \df(q,q')$, and that implies the statement of the lemma.

Finally, observe that if $\MMM^m=\XX^d_m$ for constant $d$, we can find $x(q_i)$ in time $\mathcal{O}(k)$, and can then find $\pi(q_i)$ by a case distinction. Then we find the layer of $\mathcal{G}(\pi(q_i))$ that $q_i$ lies in in time $\mathcal{O}(d)\subset \mathcal{O}(1)$, and finally compute the closest grid point by appropriate rounding in time $\mathcal{O}(\log |\mathcal{G}|) = \log ((\log \frac{\alpha}{\epsilon}) \cdot \frac{\alpha}{\epsilon}^{d}) \subset \mathcal{O}(\log \frac{\alpha}{\epsilon})$. The latter is possible because the grid in the layer is a regular and axis parallel grid.
\end{proof}
\paragraph{Query algorithm}

The query algorithm answers for queries $q\in \MMM^k$ either with $\df(q,X_k)+\df(X_k,X)$ or with $\df(q,X^\ast)+\df(X^\ast,X)$, where $X^\ast$ is either $\bar{X}$ from the above lemma or the nearest neighbor of $q$ in $\mathcal{C}$ in the metric case.
Which answer is returned depends on $\df(q,X_k)$: If $q$ is very close to $X_k$ (roughly at distance $\epsilon \cdot \df(X,X_k)$, but we have to account for the approximation factor of $X_k$ as well), then $X_k$ is a good representative of $q$ and we can return $\df(q,X_k)$. Similarly, if $q$ is very far away from $X_k$, then it is also very far away from $X$, and then $X$ and $X_k$ are very similar from the view point of $q$. Then $\df(q,X_k)$ is a good approximation. For the intermediate case, we have the exponential grids, and have shown that $\mathcal{C}$ contains a good representative for $q$. 
The following query algorithm follows along these lines; that it always returns a sum of two distances is to ensure that we always overestimate the distance and never underestimate it.

\algorithmus{\texttt{query}($q \in \MMM^k$, $X_k\in \MMM^k$, $\mathcal{C}$)}{
\begin{compactenum}
  \setlength{\itemsep}{0.0cm}
  \item {\bf If} $\df(q,X_k) \le \frac{\eps}{2\alpha} \cdot \df(X,X_k)$ {\bf or} $\df(q,X_k) \ge \frac{2(\alpha+1)}{\eps}\cdot \df(X,X_k)$
  \item \hspace*{0.5cm} Set $X^\ast := X_k$
  \item {\bf Else} 
  \item \hspace*{0.5cm} Set $X^\ast:=$\texttt{NearestGridNeighbor}($\mathcal{C},q$) \quad\quad (or $X^\ast := \bar{X}$ if $\MMM^m=\XX_d^m$)
  \item {\bf Return } $\Delta = \df(q,X^\ast)+\df(X^\ast,X)$
\end{compactenum}
}
In the Euclidean case, we find the nearest neighbor in time $\mathcal{O}(k + \log \frac{\alpha}{\epsilon})$ according to Lemma~\ref{lem:gridproperties}.
In the metric case, we use an ANN data structure to store the \lq grid\rq\ points, which gives a slightly slower query time. 
After finding $X^\ast$, we retrieve the value of $\df(X^\ast,X)$ from the data structure storing the precomputed distances. 

\begin{restatable}{lemma}{lemQueryAlg}
Let $X \in \MMM^m$, $X_k\in \MMM^k$ an $\alpha$-approximate curve simplification for $X$, $\eps > 0$ and $\mathcal{C}$ be a set of query representatives computed according to Lemma~\ref{lem:gridproperties} with precision parameter $\eps/2$.
Then the value $\Delta$ computed by \texttt{query}($q$,$X_k$, $\mathcal{C}$) satisfies 
\[
\df(q,X) \le \Delta \le (1+\epsilon)\cdot \df(q,X).
\]
\end{restatable}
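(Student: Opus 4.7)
The plan is to prove the two inequalities separately and independently of which branch of the query algorithm is taken. The lower bound $\df(q,X) \le \Delta$ is essentially immediate: regardless of whether $X^\ast=X_k$ or $X^\ast$ is a grid representative, the triangle inequality for $\df$ gives $\df(q,X) \le \df(q,X^\ast)+\df(X^\ast,X) = \Delta$. So the work lies entirely in proving $\Delta \le (1+\eps)\df(q,X)$, which I plan to do by a three-way case analysis matching the three branches of the algorithm. In all cases the backbone of the estimate will be the identity $\Delta = \df(q,X^\ast)+\df(X^\ast,X) \le 2\df(q,X^\ast)+\df(q,X)$, obtained by bounding $\df(X^\ast,X)$ via the triangle inequality through $q$; thus it suffices to prove $\df(q,X^\ast) \le (\eps/2)\df(q,X)$ in every branch.

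\paragraph{Case analysis (close and far branches).}
In the ``close'' branch, $\df(q,X_k) \le \tfrac{\eps}{2\alpha}\df(X,X_k)$ and $X^\ast = X_k$. Using the defining property of an $\alpha$-approximate $k$-simplification, $\df(X,X_k) \le \alpha\,\df(q,X)$, I would chain $\df(q,X^\ast)=\df(q,X_k) \le \tfrac{\eps}{2\alpha}\cdot\alpha\,\df(q,X) = \tfrac{\eps}{2}\df(q,X)$. In the ``far'' branch, $\df(q,X_k) \ge \tfrac{2(\alpha+1)}{\eps}\df(X,X_k)$ and again $X^\ast = X_k$; here I plan to bound $\Delta$ differently, namely $\Delta \le \df(q,X) + 2\df(X,X_k)$ (via triangle on $\df(q,X_k)$), and then estimate $\df(X,X_k) \le \tfrac{\eps}{2(\alpha+1)}\df(q,X_k) \le \tfrac{\eps}{2(\alpha+1)}(1+\alpha)\df(q,X) = \tfrac{\eps}{2}\df(q,X)$, where the middle inequality uses $\df(q,X_k) \le \df(q,X)+\df(X,X_k) \le (1+\alpha)\df(q,X)$. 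Both branches then collapse to $\Delta \le (1+\eps)\df(q,X)$.

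\paragraph{Case analysis (middle branch) and main obstacle.}
In the remaining branch, $X^\ast$ is either the curve $\bar X$ from Lemma~\ref{lem:gridproperties} (Euclidean case) or a nearest grid neighbor (metric case), and the plan is to invoke Lemma~\ref{lem:gridproperties} with precision $\eps/2$ to conclude $\df(q,X^\ast) \le \df(q,\bar X) \le (\eps/2)\df(q,X)$, after which the $\Delta \le 2\df(q,X^\ast)+\df(q,X) \le (1+\eps)\df(q,X)$ bound follows exactly as above. The main obstacle is verifying the hypothesis of Lemma~\ref{lem:gridproperties}, namely $(\eps/2)\df(X,X_k) \le \df(q,X) \le (2/\eps)\df(X,X_k)$, from the branching condition $\tfrac{\eps}{2\alpha}\df(X,X_k) < \df(q,X_k) < \tfrac{2(\alpha+1)}{\eps}\df(X,X_k)$. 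The lower bound is easy since $\df(q,X) \ge \df(X,X_k)/\alpha$ always holds (approximation property), so it suffices to take $\eps$ small enough relative to $\alpha$. The upper bound requires translating the hypothesis on $\df(q,X_k)$ into one on $\df(q,X)$ using the triangle inequality $\df(q,X) \le \df(q,X_k)+\df(X,X_k)$; this is where careful choice of the constants in the branching condition (and implicitly of the outer grid radius $r_2$ used to build $\mathcal C$) matters, and I would make sure to align them so that every $q$ falling into the middle branch is indeed covered by the exponential grids around the simplification vertices. Once this alignment is checked, the lemma delivers the required $\bar X$ and the proof is complete.
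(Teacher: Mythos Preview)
Your proposal is correct and follows essentially the same three-case argument as the paper: the lower bound via one application of the triangle inequality, and the upper bound via $\Delta \le 2\df(q,X^\ast)+\df(q,X)$ (or the symmetric form $\Delta \le \df(q,X)+2\df(X,X_k)$ in the far branch), combined with the approximation property $\df(X,X_k)\le\alpha\,\df(q,X)$ and its corollary $\df(q,X_k)\le(1+\alpha)\df(q,X)$. The only noteworthy difference is that you explicitly flag the need to check that the branching thresholds place $q$ inside the range where Lemma~\ref{lem:gridproperties} applies, whereas the paper simply invokes the lemma's guarantee in the middle case without spelling this out; your caution there is warranted, since matching the bound $\df(q,X_k)<\tfrac{2(\alpha+1)}{\eps}\df(X,X_k)$ to the grid's outer radius is exactly what makes the construction work.
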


\begin{proof}
We consider three cases.
If $\df(q,X_k) \le \frac{\eps}{2\alpha} \cdot \df(X,X_k)$ then we get by the triangle inequality that
\begin{align*}
\Delta = \df(X,X_k)+ \df(q,X_k)\le& 2\df(q,X_k) + \df(q,X) \\
\le& \df(q,X) + 2\frac{\eps}{2\alpha} \cdot \df(X,X_k) 
\stackrel{\eqref{eq:approx-curve-simplification}}{\le} (1+\eps)\cdot \df(q,X).
\end{align*}
Intuitively, $\df(X,X_k)$ dominates our distance estimation in this case, since $\df(q,X_k)$ is very small. By returning the sum of the two, we make sure that we never underestimate the true distance: The triangle inequality implies that
\[
\Delta = \df(X,X_k)+ \df(q,X_k) \ge \df(q,X) - \df(q,X_k) + \df(q,X_k) = \df(q,X).
\]
Similarly, if $\df(q,X_k) \ge \frac{2(\alpha+1)}{\eps}\cdot \df(X,X_k)$ then we observe that 
\begin{align*}
   \Delta=\ &  \df(q,X_k) + \df(X,X_k)
\le\  \df(q,X) + 2\df(X,X_k) \\
\le\ & \df(q,X) + 2\frac{\eps}{2(\alpha+1)}\cdot \df(q,X_k)
\stackrel{\eqref{eq:approx-curve-cor}}{\le}\ 
      (1+\eps) \cdot \df(q,X)
\end{align*}
and that
\[
\Delta =  \df(q,X_k) + \df(X,X_k)
\ge \df(q,X) - \df(X,X_k) + \df(X,X_k) = \df(q,X).
\]
In the remaining case, we use the guarantee from Lemma~\ref{lem:gridproperties}, which says that $\df(q,X^\ast) \le \frac{\eps}{2} \cdot \df(q,X)$. This implies by the triangle inequality that
\begin{align*}
    \Delta = \df(q,X^\ast)+\df(X^\ast,X)
    \le 2\df(q,X^\ast) + \df(q,X)
    \le (1+\eps)\cdot \df(q,X).
\end{align*}
Furthermore, we do not underestimate the distance since
\[
\Delta = \df(q,X^\ast) + \df(X^\ast,X) \ge \df(q,X) - \df(X^\ast,X) + \df(X^\ast,X) = \df(q,X),
\]
and that concludes the proof.
\end{proof}

\paragraph{Result}

We collect our findings and summarize the precomputation and query times as well as the memory need in the following theorem. The main missing ingredient discussed there (aside from carefully looking at all steps) is the (approximate) nearest neighbor data structure we use for the metric case, where we use a result by Arya et. al.~\cite{AMVX08}.

\begin{theorem}[\cite{AMVX08}]
\label{theorem:arya}
Let $S$ be a set of $n$ points in a metric space $\MMM$ in the weakly explicit model with doubling dimension $d_{\MMM}$. Let $0 \le \epsilon \le 1/2$ and $2 \le \gamma \le 1/\epsilon$. It is possible to construct a data structure with space $n \gamma^{\mathcal{O}(d_{\MMM})} \log(1\backslash \epsilon)$ that can answer $\epsilon$-approximate nearest neighbor queries in time $\mathcal{O}(\log(n \gamma))+(1/(\epsilon\gamma))^{\mathcal{O}(d_{\MMM})}$. The time to construct the data structure is $n (1/\epsilon)^{d_{\MMM}} \log n$.
\end{theorem}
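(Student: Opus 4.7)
The plan is to adapt the standard hierarchical net construction for doubling metrics, with the parameter $\gamma$ acting as a granularity knob that trades storage against query time.

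First I would build a hierarchy of nets $N_0 \supseteq N_1 \supseteq \cdots \supseteq N_h$, where $N_i$ is an $r_i$-net of $S$ (a maximal $r_i$-packing that is also an $r_i$-cover), with radii $r_i$ decreasing geometrically. The role of $\gamma$ is to finely discretize scales: instead of $r_i = 2^{-i}$ one takes $r_i = (2\gamma)^{-i}$ or stacks $\Theta(\log \gamma)$ nets between consecutive powers of two. In the weakly explicit model, each level can be built bottom-up from the previous one using the doubling oracle, by covering each ball of the coarser net and pruning. Iterating the doubling property $O(\log \gamma + \log c)$ times yields the packing bound that every ball of radius $c \cdot r_i$ contains at most $(c\gamma)^{O(d_\MMM)}$ points of $N_i$; this is the sole structural ingredient that drives the rest of the argument.

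Storage is obtained by linking each point of $N_i$ to its (constant-distance) neighbors in $N_i$ and to its children in $N_{i-1}$: the packing bound makes each of these lists of size $\gamma^{O(d_\MMM)}$, and since every $s \in S$ contributes to $O(\log(1/\epsilon))$ relevant levels (those within a $(1/\epsilon)$-factor window around its first appearance), the total is $n \gamma^{O(d_\MMM)} \log(1/\epsilon)$. Construction at scale $r_i$ costs $n (1/\epsilon)^{d_\MMM}$ using the doubling oracle, and summing over $O(\log n)$ relevant scales gives the claimed $n (1/\epsilon)^{d_\MMM} \log n$ preprocessing.

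For queries, I would start at the root and descend the hierarchy, keeping at each level $i$ the set $C_i \subseteq N_i$ of net points whose distance to $q$ is within $r_i$ of the current best estimate. Expanding $C_i$ to $C_{i-1}$ uses the stored neighbor lists; the packing bound gives $|C_i| \le (1/(\epsilon\gamma))^{O(d_\MMM)}$, so each expansion costs that much. The initial descent---before $C_i$ stabilizes---is made cheap by following a single navigating pointer (not the full candidate set) down the tree in $O(\log(n\gamma))$ steps, a trick due to Har-Peled and Mendel that separates the two query-time terms additively rather than multiplicatively.

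The main obstacle is precisely this last point: a naive descent yields $O(\log(n\gamma)) \cdot (1/(\epsilon\gamma))^{O(d_\MMM)}$, and separating the $\log(n\gamma)$ factor requires care in how the hierarchy is navigated---essentially, using a single "spine" down to the correct scale and only then expanding to the full $(1/(\epsilon\gamma))^{O(d_\MMM)}$-size candidate set. This also forces the storage to include $O(\log(1/\epsilon))$ worth of back-pointers per net point, which is where the $\log(1/\epsilon)$ factor in the space bound comes from.
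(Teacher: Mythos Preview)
This theorem is not proved in the paper at all: it is quoted verbatim from \cite{AMVX08} (note the \texttt{[\textbackslash cite\{AMVX08\}]} tag on the theorem environment) and used as a black box in the proof of Theorem~\ref{thm:distoracle}. There is therefore no ``paper's own proof'' to compare your proposal against.

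Your sketch is a reasonable high-level outline of the hierarchical-net approach that underlies the Arya--da~Fonseca--Malamatos--Mount data structure and its predecessors (Krauthgamer--Lee, Har-Peled--Mendel, Cole--Gottlieb). The role you assign to $\gamma$---refining the scale hierarchy so that each level fans out by $\gamma^{O(d_\MMM)}$ rather than $2^{O(d_\MMM)}$---is the right intuition for the space/query-time tradeoff. That said, the precise mechanism in \cite{AMVX08} for obtaining the \emph{additive} query time $\mathcal{O}(\log(n\gamma)) + (1/(\epsilon\gamma))^{O(d_\MMM)}$ and the exact accounting for the $\log(1/\epsilon)$ factor in space are specific to their construction (which is based on a quadtree-like ``BBD''-style decomposition adapted to doubling spaces, not purely on nets), so if you need to certify the stated bounds you should consult the original reference rather than rely on this sketch. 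For the purposes of the present paper, simply citing the result is what the authors do and is sufficient.
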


\thmTimeSpace*
\begin{proof}

Computing $\mathcal{C}$ takes time and space $\mathcal{O}(k \cdot |\mathcal{C}|+km)$ by Lemma~\ref{lem:gridproperties}. 
Recall that $\mathcal{C}$ results from finding all curves with $k$ points from a $\mathcal{G}$, so $|\mathcal{C}|=|\mathcal{G}|^k$, and we know that $|\mathcal{G}| = k \cdot \frac{\alpha}{\epsilon} \cdot \log \frac{\alpha}{\epsilon}$.

The distance precomputations between any curve in $\mathcal{C}$ and $X$ can be done in time and space $\mathcal{O}(mdk)$ using dynamic programming~\cite{eiter1994computing}, so we get a total precomputation time of $\mathcal{O}(mk + |\mathcal{C}| \cdot mk)$ (distance computations are assumed to take $\mathcal{O}(1)$).
We store all precomputed distances, plus $X_k$ and $\df(X,X_k)$, requiring $\mathcal{O}(|\mathcal{C}|+k)\in \mathcal{O}(f_s(k,\epsilon,\alpha))$ space.  

For a query curve $q \in \MMM^k$, we need to compute the Fr\'echet distance between $q$ and $X_k$, which takes time $\mathcal{O}(k^2)$, and to find $X^{\ast}$.
In the Euclidean case, we do the latter in time $\mathcal{O}(k+\log\frac{\alpha}{\epsilon})$ by Lemma~\ref{lem:gridproperties}.

In the general metric case, we use the ANN nearest neighbor data structure by Arya et. al.~\cite{AMVX08}, and we invoke Theorem \ref{theorem:arya}. 
We build the data structure with $\epsilon=1/2$ and $\gamma=2$ on the $|\mathcal{G}|= \mathcal{O}(k \cdot \frac{\alpha}{\epsilon} \log \frac{\alpha}{\epsilon})$ points in $\mathcal{G}$. 
The data structure thus answers with a constant factor approximation, but we compensate that by replacing $\epsilon$ in our construction by $\epsilon'=(2/3)\cdot \epsilon$ in the end.  

The ANN data structure is of size $\mathcal{O}(|G| \cdot  2^{\mathcal{O}(d_{\MMM})} \cdot \log_2 2 ) \subset \mathcal{O}(|G|)$, the query time for one point is $\mathcal{O}(\log |G|\cdot 2+(1/(2/2)^{\mathcal{O}(\d_{\MMM})}) \subset \mathcal{O}(\log |G|)$, and the preprocessing time is $\mathcal{O}(|G| \cdot 2^{\mathcal{O}(d_{\MMM})} \cdot \log |G|) \subset \mathcal{O}(|\mathcal{C}|)$.
The additional space requirement and the additional precomputation times are dominated by the space and time for the rest of the data structure. The query time becomes $\mathcal{O}(k^2+k \cdot \log |G|) \subset \mathcal{O}(k^2 + k \cdot \log (k \cdot \frac{\alpha}{\epsilon}))$.

Finally, notice that if we want to answer queries for curves $q\in \MMM^{k'}$ for $k' \le k$, we can artificially extend them to $k$ vertices by repeating the last vertex $k-k'$ times. Thus, we can use the data structure for all curves up to length $k$.

\end{proof}

\section{A distance oracle for the streaming setting}\label{sec:streaming}

Let $\DD_k^{\beta}(X)$ denote a data structure built on input $X$ which uses space in  $\mathcal{O}(f_s(k,\eps,\alpha))$  and which can answer $\beta$-approximate queries with query curves  $q \in \XX^d_k$ for the discrete Fr\'echet distance $\d_{dF}(q,X)$ in  $\mathcal{O}(f_q(k,\eps,\alpha))$ time. We studied building such data structures in Section~\ref{sec:distance:oracle}, see Theorem~\ref{thm:distoracle}. In the following lemmas we study composition properties of such data structures.

\subsection{Composition properties}\label{sec:composition}

We show that given multiple data structures built on individual sequences, we can answer queries for the concatenation of the sequences. For two individual sequences, we can do this by finding the splitting point on the query curve where an optimal traversal switches from pairing with the first sequence to pairing with the second sequence. For multiple sequences and distance oracles, we could apply this recursively. Naively, this would take roughly $k^{\ell}$ calls to the composition lemma. We avoid this by using dynamic programming. 

\begin{lemma}\label{lem:composition:chain}
Assume we are given a sequence of data structures $\DD_1,\dots,\DD_{\ell}$, where $\DD_i = \DD_k^{\beta}(X_i)$ for a sequence $X_i$, and we wish to answer distance queries on the concatenated sequence $X=X_1\circ X_2 \circ \dots \circ X_{\ell}$.
For any query curve $q \in \XX^d_k$, and without direct access to $X_i$ for $1 \leq i \leq \ell$, we can output a value $d'(q)$ in query time $\mathcal{O}( (k^2 \cdot (\ell-2) + k) \cdot f_q(k,\eps,\alpha))$ such that
\[\df(q, X) \le d'(q) \le \beta \cdot \df(q,X). \]
This can be done while using $\mathcal{O}(k \ell)$ additional space.
\end{lemma}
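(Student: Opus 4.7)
My plan is to answer the query with a left-to-right dynamic program that treats each $\DD_i$ as a black-box subroutine, calling it only on contiguous sub-curves of $q$.

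I would begin by establishing a splitting identity for the two-piece case $\ell = 2$. Any traversal of $(q, X_1 \circ X_2)$ induces a partition of the indices of $q$ into a prefix paired with $X_1$ and a suffix paired with $X_2$; at the boundary, the two halves either share a single index $q_{j_1}$ (when the traversal pauses in $q$) or are separated by consecutive indices $q_{j_1}, q_{j_1+1}$ (when it advances both sequences). Conversely, concatenating traversals of $(q[1:j_1], X_1)$ and $(q[j_2:k], X_2)$ along such a boundary yields a valid traversal of $(q, X_1 \circ X_2)$ whose cost is the max of the two parts, so
\[
\df(q, X_1 \circ X_2) \;=\; \min_{\substack{1 \le j_1 \le k \\ j_2 \in \{j_1, j_1+1\} \cap [1,k]}} \max\bigl(\df(q[1:j_1], X_1),\, \df(q[j_2:k], X_2)\bigr).
\]

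Next I would extend to a chain of $\ell$ data structures by dynamic programming. Writing $\DD_i(\cdot)$ for the value returned by the query subroutine of $\DD_i$, define $M[i][j]$ to be the algorithm's estimate of $\df(q[1:j], X_1 \circ \cdots \circ X_i)$, with base case $M[1][j] = \DD_1(q[1:j])$ and recursion
\[
M[i][j] \;=\; \min_{\substack{1 \le j_1 \le j \\ j_2 \in \{j_1, j_1+1\} \cap [1,j]}} \max\bigl(M[i-1][j_1],\, \DD_i(q[j_2:j])\bigr).
\]
The algorithm returns $d'(q) := M[\ell][k]$; only the single entry $j = k$ is needed for the final level $i = \ell$, so its column is not materialized.

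Correctness would follow by induction on $i$. The inequality $M[i][j] \ge \df(q[1:j], X_1 \circ \cdots \circ X_i)$ uses the lower bound $\DD_i(\cdot) \ge \df(\cdot, X_i)$ together with the concatenation direction of the splitting identity, which holds for every choice of split and therefore survives the outer minimization. For the upper bound $M[\ell][k] \le \beta \df(q, X)$, I would take an optimal traversal of $(q, X)$ of cost $c = \df(q, X)$, read off the split indices it induces at each of the $\ell - 1$ junctions, and observe that the DP considers exactly those choices; the $\beta$-approximation guarantee of each $\DD_i$ then propagates through the induction.

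The cost accounting is then routine: $\DD_1$ is queried only on the $k$ prefixes $q[1:j]$ and $\DD_\ell$ only on the $k$ suffixes $q[j_2:k]$, contributing $O(k)$ queries each, while every middle $\DD_i$ is queried on the $O(k^2)$ contiguous sub-curves $q[j_2:j]$ with $1 \le j_2 \le j \le k$. Multiplying by the per-query cost $f_q(k,\eps,\alpha)$ and adding $O(k\ell)$ storage for the table gives the claimed bounds. The step I expect to need the most care is the two-index boundary: dropping the $j_2 = j_1 + 1$ branch would still yield a valid upper bound on $\df$ but could overestimate the true distance whenever an optimal traversal advances both sequences simultaneously at a junction, so the DP must explicitly consider both values of $j_2$ in order to avoid inflating the approximation factor beyond $\beta$.
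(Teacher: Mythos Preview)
Your proposal is correct and follows essentially the same approach as the paper: both set up a $k \times \ell$ dynamic programming table indexed by (prefix length of $q$, number of pieces $X_1\circ\cdots\circ X_i$), use the identical two-branch splitting recursion (your $j_2\in\{j_1,j_1+1\}$ is the paper's $b\in\{0,1\}$), and derive the $\mathcal{O}(k^2(\ell-2)+k)$ query count from the observation that only $O(k)$ distinct calls are needed to each of $\DD_1$ and $\DD_\ell$ while each intermediate $\DD_i$ sees $O(k^2)$ contiguous sub-curves. Your writeup is in fact a bit more explicit than the paper's on the correctness induction and on why the $j_2=j_1+1$ branch cannot be dropped.
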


\begin{proof}
We use dynamic programming on all subsequences of $q$ paired with the $X_j$'s. We denote with $q_{[a,b]}$ the subsequence $q_a,\dots,q_b$ of $q$.
By the definition of the discrete Fr\'echet distance it holds that
\[ d(q_{[1,k']}, X_1 \circ \dots \circ X_{\ell'}) = \min_{ 1\leq i \leq k' \atop b \in \{0,1\} } \max ( d(q_{[1,i]}, X_1 \circ \dots \circ X_{\ell'-1}), 
d(q_{[i+b,k']}, X_{\ell'}). \] 

Our dynamic programming scheme stores partial solutions in a $k \times \ell$ matrix. That is, in the cell with index $(k',\ell')$ we store a $\beta$-approximate estimate to $ d(q_{[1,k']}, X_1 \circ \dots \circ X_{\ell'}).$ 
In the recursive statement above, there are $2k'$ possible assignments to $i$ and $b$. For each such assignment to $i$ and $b$, we can compute a $\beta$-approximation to  $d(q_{[i+b,k']}, X_{\ell'})$ by querying $\DD_{\ell'}$ and use recursion to estimate the term
$ d(q_{[1,i]}, X_1 \circ \dots \circ X_{\ell'-1})$. Note that for $\ell'=1$ we obtain the term
$d(q_{[1,k']}, X_1 )$, for which we can compute a $\beta$-approximation by using a query to $\DD_1$.
We can also avoid the recursion and fill the table in a bottom-up fashion starting with $\ell'=1$. There are $k\cdot \ell$ cells and for each cell we issue $\mathcal{O}(k)$ queries to one of the data structures. Note that, technically speaking, for the cells with index $\ell'=1$ and the cells with index $\ell'=\ell$ we need to issue one query each only, which gives us a slightly better query time if $\ell=2$. In total, we have $\mathcal{O}(k^2 (\ell-2) + k)$ queries, which each take time in $\mathcal{O}(f_q(k,\eps,\alpha))$.
\end{proof}

Lemma~\ref{lem:composition:chain} shows that one can use multiple data structures built on individual sequences to answer queries on the concatenation of the sequences. The following lemma shows that, if we have access to a $k$-simplification of the concatenated sequence, we can \emph{compose} two data structures and obtain a new data structure with the same properties with slight increase in the dependency on $\eps$, but the same dependency on $k$. This will be crucial for applying the merge-and-reduce framework.

\begin{lemma}\label{lem:composition:build:chain}
Given $\ell$ data structures $\DD_1,\dots,\DD_{\ell}$, where $\DD_i = \DD_k^{\beta}(X_i)$ for a sequence $X_i$, and an $\alpha$-approximate $k$-simplification of the concatenated sequence $X=X_1\circ\dots\circ X_{\ell}$. 
Without direct access to $X$, it is possible to compute a data structure in time $\mathcal{O}( f_s(k,\epsilon,\alpha) \cdot (k^2 \cdot (\ell-2) + k) \cdot f_q(k,\eps,\alpha))$ and space $\mathcal{O}(f_s(k,\epsilon,\alpha))$  that satisfies the following:
For any query curve $q \in \Rdk$, the data structure outputs a value $d'(q)$ in query time $\mathcal{O}(f_q(k,\eps,\alpha))$ such that 
\[\df(q,X) \le d'(q) \le \beta^2\cdot \df(q,X).\]
\end{lemma}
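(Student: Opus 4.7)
The plan is to replicate the construction from Theorem~\ref{thm:distoracle} verbatim, except that wherever that construction requires an exact distance $\df(X',X)$ to the input curve $X$, I will instead invoke Lemma~\ref{lem:composition:chain} on the given data structures $\DD_1,\dots,\DD_\ell$ to obtain a $\beta$-approximate surrogate $d_{X'}$. The given $\alpha$-approximate $k$-simplification plays the role that the greedy streaming simplification plays in the proof of Theorem~\ref{thm:distoracle}, so I can build the exponential grid around its vertices and enumerate the set $\mathcal{C}$ of query representatives exactly as in Section~\ref{sec:distance:oracle}. This step needs only the simplification and produces a lookup table whose skeleton has size $\mathcal{O}(f_s(k,\eps,\alpha))$.

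Next, for every curve $X' \in \mathcal{C}$ and also for the simplification $X_k$ itself (which is needed by the case analysis of the query routine), I call Lemma~\ref{lem:composition:chain} treating the curve in question as the query to obtain a value $d_{X'}$ satisfying $\df(X',X) \le d_{X'} \le \beta\cdot \df(X',X)$. Each such call costs $\mathcal{O}((k^2(\ell-2)+k)\cdot f_q(k,\eps,\alpha))$; summing over the $\mathcal{O}(f_s(k,\eps,\alpha))$ representatives gives the claimed preprocessing time. The $\mathcal{O}(k\ell)$ working memory used inside Lemma~\ref{lem:composition:chain} is reusable across calls, so the final space bound remains $\mathcal{O}(f_s(k,\eps,\alpha))$.

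At query time I run the query procedure from Section~\ref{sec:distance:oracle} but use the stored values $d_{X'}$ and $d_{X_k}$ in place of the exact distances. Because this is a constant-size computation together with a table lookup, the query time remains $\mathcal{O}(f_q(k,\eps,\alpha))$. The lower bound $d'(q)\ge \df(q,X)$ survives term by term, since each $d_{X'}$ overestimates $\df(X',X)$ and the lower-bound side of every branch in the original analysis relied only on the triangle inequality. For the upper bound, the proof of Theorem~\ref{thm:distoracle} shows that the exact answer $\df(q,X^\ast)+\df(X^\ast,X)$ is at most $(1+\eps)\df(q,X)\le \beta\cdot \df(q,X)$; replacing $\df(X^\ast,X)$ by $d_{X^\ast}\le \beta\cdot \df(X^\ast,X)$ multiplies this bound by at most another factor of $\beta$, yielding $d'(q)\le \beta^2\cdot \df(q,X)$.

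The main subtlety I expect is that the case-analysis thresholds in the original query routine are stated in terms of $\df(X,X_k)$, a quantity I now know only up to a factor of $\beta$. I will handle this by substituting $d_{X_k}$ for $\df(X,X_k)$ in both the thresholds and the radii of the exponential grids, then rerunning the three-case calculation from the proof of Theorem~\ref{thm:distoracle}. Because the original analysis already produces a $(1+\eps)$-approximation while the target here is $\beta^2$, there is enough slack to absorb the extra $\beta$-factor introduced at the boundaries of the three cases. Verifying this bookkeeping is the only non-routine step; once it checks out, the lemma follows.
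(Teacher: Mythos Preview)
Your proposal is correct and follows essentially the same route as the paper: rebuild the distance oracle of Theorem~\ref{thm:distoracle} around the given $\alpha$-approximate $k$-simplification, but replace every exact distance computation $\df(X',X)$ by a call to Lemma~\ref{lem:composition:chain}, which introduces one additional $\beta$-factor on top of the original $(1+\eps)\le\beta$ and yields the claimed $\beta^2$ bound. The paper's own proof is in fact only a two-sentence sketch stating exactly this; your write-up is more explicit about the bookkeeping (the role of $d_{X_k}$ in the thresholds and grid radii) than the paper is.
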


\begin{proof}
This follows along the lines of the proof of Theorem~\ref{thm:distoracle}, the main theorem from the previous section. The only difference is that we estimate the distance of a grid curve to the curve $X$ using Lemma~\ref{lem:composition:chain} in $\mathcal{O}((k^2 \cdot (\ell-2) + k) \cdot f_q(k,\eps,\alpha))$ time. This incurs an additional $\beta$-factor in the approximation quality.
\end{proof}

\subsection{Approximate curve simplification in the streaming setting}\label{sec:simplification}
In this section, we develop a streaming algorithm for the approximate curve simplification problem. 
More precisely, let $(\MMM,\dm)$ be a metric space, and assume that we have oracle access to the distance function, i.e., we can query $\dm(x,y)$ for any $x,y \in M$ in constant time. During this section, we abbreviate $d(x,y)=\dm(x,y)$ for brevity.

We are given a curve $X=p_1,\ldots,p_m$ as a stream, i.e., the vertices are given in the order that they have on the curve. We compute an approximate curve simplification while never storing more than $\mathcal{O}(k)$ points.

The algorithm is inspired by a streaming algorithm for $k$-center which is due to  Charikar et. al.~\cite{CCFM04}. 
The key idea of our adapted algorithm is to maintain a value $\ell$ which satisfies three properties: 
After processing the prefix $X'$ of $X$ and computing a simplification $X_k$, all vertices on $X_k$ have a distance of at least $\ell$ to their neighbors, and
we have $\df(X,X_k) \le \alpha \cdot \ell$ for some constant $\alpha > 0$, and $\df(X,O_k)\ge \ell / \beta$ for some constant $\beta$ and an optimal curve simplification $O_k$. 

Before we describe the details of this algorithm, we define a bit more notation.
For any curve $Y=y_1,\ldots,y_m$ and any $i\in \{2,\ldots,m-1\}$, we call $y_{i-1}$ the \emph{predecessor} of $y_i$ and $y_{i+1}$ the \emph{successor} of $y_i$. 
Both predecessor and successor are also called \emph{neighbor} of $y_i$.

During the algorithm, we maintain a current curve simplification $X_k$ for the prefix $X'$ of $X$ that we have read so far, but we \emph{additionally} implicitly also compute a traversal of $X'$ and $X_k$. We mainly need this traversal for the purpose of analyzing the quality of our solution. 
What we maintain is the following: 
Say that we have read a prefix $X'=p_1,\ldots,p_{m'}$ of length $m'$. 
For every vertex $q_i$ of $X_k=q_1,\ldots,q_k$, we have an interval $[a_i,b_i]\subset \{1,\ldots,m'\}$, such that for all $i, j \in [k]$, $i<j$, $b_i < a_j$ and such that the union of all intervals is just $[m']$. 
We call this an \emph{interval partitioning} of $X'$. 
The implicit traversal can be constructed by starting at $q_1$ on $X_k$ and traversing all vertices $p_i, i \in [a_1,b_1]$, then jumping to $q_2$ and $a_2$, traversing all (remaining) vertices $p_i, i \in [a_2,b_2]$, and so on. 
We denote the subcurve $p_{a_i},\ldots,p_{b_i}$ of $X'$ by $X_{a_i b_i}$.

We describe our algorithm in two parts. For the initial part of the stream, we use the following routine.

\algorithmus{\texttt{init}($k \in \mathbb{N}$)}{
\begin{compactenum}
    \item Read the first $k+1$ vertices of $X$, if a vertex is identical to its predecessor, ignore it
    \item Let $Y=y_1,\ldots,y_{k+1}$ be the resulting curve (it has $k+1$ vertices, and every vertex is different from its neighbors)
    \item Let $d = \min_{i\in\{1,\ldots,k\}} d(y_i,y_{i+1})$ be the smallest distance between two neighbors on Y
    \item Choose $i \in [k]$ such that $d=d(y_i,y_{i+1})$ and remove $y_i$ from $Y$
    \item Call the resulting curve $X_k = q_1,\ldots,q_k$ and set $\ell=d$
    \item Set $a_j=b_j=j$ for all $j < i$, $a_i = i$ and $b_i=i+1$, and $a_j=b_j=j+1$ for all $j > i$
\end{compactenum}
}

\begin{observation}
The curve $X_k$ computed by \texttt{init} is a $2$-approximate curve simplification for the prefix $X'$ of $X$ that was processed.
\end{observation}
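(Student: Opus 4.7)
The plan is to show the two-part inequality $\df(X',X_k) \le d \le 2\cdot \df(X',q)$ for every $q \in \MMM^k$, where $d$ is the value computed in step~3 of \texttt{init}. Combined, these imply $\df(X',X_k) \le 2\cdot \df(X',q)$, which is exactly the defining property of a $2$-approximate $k$-simplification.

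First, I would observe that by step~1 the processed prefix $X'$ differs from the curve $Y = y_1,\dots,y_{k+1}$ only by repeated vertices, and since adding or removing copies of a vertex does not change the discrete Fr\'echet distance to any other curve (as noted after Definition~\ref{Ddist}), it suffices to prove $\df(Y,X_k) \le 2\cdot \df(Y,q)$ for all $q \in \MMM^k$.

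Next, I would upper bound $\df(Y,X_k) \le d$ by exhibiting a concrete traversal: pair $y_j$ with $q_j$ for $j<i$, pair both $y_i$ and $y_{i+1}$ with $q_i$ (which equals $y_{i+1}$ by construction), and pair $y_{j+1}$ with $q_j$ for $j>i$. All paired points coincide except for the pair $(y_i,q_i)=(y_i,y_{i+1})$, whose cost is exactly $d(y_i,y_{i+1})=d$. (This is just the implicit traversal induced by the interval partitioning set in step~6.)

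Finally, for the lower bound $\df(Y,q) \ge d/2$, I would use a pigeonhole argument. Any traversal $T$ of $Y$ and $q$ maps each of the $k+1$ vertices of $Y$ to at least one of the $k$ vertices of $q$, so some vertex $q_t$ must be paired with two vertices of $Y$; since the index sequence in a traversal is monotone, these two vertices are consecutive, say $y_j$ and $y_{j+1}$. Then the triangle inequality gives
\[
d(y_j,y_{j+1}) \le d(y_j,q_t)+d(q_t,y_{j+1}) \le 2\cdot\text{cost}(T).
\]
Taking the minimum over traversals and using $d \le d(y_j,y_{j+1})$ by the choice of $d$ in step~3, we get $d \le 2\cdot \df(Y,q)$. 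Chaining the two bounds yields $\df(X',X_k) \le d \le 2\cdot \df(X',q)$, as required.

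I expect no real obstacle; the only subtlety is the book-keeping with the duplicate vertices removed in step~1, which is handled cleanly by the remark that discrete Fr\'echet distance is invariant under vertex duplication.
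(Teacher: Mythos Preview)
Your proposal is correct and follows essentially the same approach as the paper: reduce from $X'$ to $Y$ via invariance under vertex duplication, upper-bound $\df(Y,X_k)\le d$ by the explicit traversal, and lower-bound $\df(Y,q)\ge d/2$ for any $q\in\MMM^k$ by pigeonhole plus the triangle inequality. Your write-up is actually more explicit than the paper's (which sketches the pigeonhole step and silently identifies $X'$ with $Y$ after noting $\df(Y,X')=0$), but the argument is the same.
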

\begin{proof}
First observe that ignoring vertices that are identical to their predecessor does not incur any error: In any traversal, we traverse these identical neighbors of a point $p_i$ right after traversing $p_i$. Thus, $\df(Y,X')=0$.
The Fr\'echet distance between $X'$ and $X_k$ is $d$: We can traverse both $y_{i'}$ and $y_{i'+1}$ on $X'$ while staying at $y_{i'}$ on $X_k$, where $i'$ is the index of the ignored vertex. On the other hand, an optimal simplification $O_k$ with $k$ vertices has to traverse two of the neighbors on $X'$ while staying at one vertex of the simplification, since $X'$ has one vertex more than $O_k$. Thus, $\df(O_k,X') \ge d/2$.
\end{proof}

After the curve $X_k$ is initialized, we call the following update routine for each further point $p_i$ on $X$.

\algorithmus{\texttt{update}($p \in \MMM$)}{
\begin{compactenum}
    \item {\bf If} $d(p,q_k)\le \ell$ {\bf Then} do nothing 
    \item {\bf Else} set $q_{|X_k|+1}=p$, append it to $X_k$, and set its interval to $[|X'|,|X'|]$
    \item {\bf While} $X_k$ has $k+1$ vertices {\bf do}
    \item \hspace*{0.5cm} Set $i=1$
    \item \hspace*{0.5cm} {\bf  While} $i < k+1$ {\bf do}
    \item \hspace*{1cm} Set $j=i+1$
    \item \hspace*{1cm} {\bf While} $j < k+1$ and $d(q_i,q_{j}) \le 2\ell$
    \item \hspace*{1.5cm} Delete $q_j$ and set $b_i = b_j$
    \item \hspace*{1.5cm} Set $j=j+1$
    \item \hspace*{1cm} Set $i=j$ and rename the vertices such that $X_k=(q_i)_i$
    \item \hspace*{1cm}  Set $\ell = 2 \cdot \ell$
\end{compactenum}
}

\begin{lemma}
After each call to $\texttt{update}$, the following invariants are true for the prefix $X'$ read from $X$ and the current simplification $X_k=(q_i)_1^{m'}$:
\begin{enumerate}
    \item All vertices on $X_k$ have a pairwise distance of at least $\ell$ to their neighbors.
    \item $\df(X',O_k) \ge \ell/4$, where $X'$ is the processed part of the input curve, and $O_k$ is an optimal curve simplification for $X'$ with $k$ vertices.
    \item For all $i \in [m']$, and all $j \in [a_i,b_i]$,
    $\df(q_i,q_j) \le 2 \cdot \ell$, and thus, $\df(X',X_k)\le 2 \cdot \ell$ holds.
\end{enumerate}
\end{lemma}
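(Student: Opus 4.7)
The plan is to prove all three invariants simultaneously by induction on the number of calls to \texttt{update}. Denote by $\ell_{\text{old}}$ and $X_k^{\text{old}}$ the state just before the current call, and by $\ell$, $X_k$ the state just after; let $X'$ be the prefix of the stream read so far.

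In the case $d(p,q_k)\le \ell_{\text{old}}$ neither $X_k$ nor $\ell$ changes and one simply extends $b_k$ by one. Invariant~1 is inherited from the inductive hypothesis. For invariant~2 I would use that $\df(X',O_k)$ is monotone under appending one point to $X'$: any traversal between a $k$-vertex simplification $\sigma$ and the new $X'$ restricts to a traversal between (a prefix of) $\sigma$ and the previous $X'$ of no greater cost, so the optimum can only grow. For invariant~3 the only new pairing is $(q_k,p)$, whose cost is at most $\ell_{\text{old}}\le 2\ell_{\text{old}}$.

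The interesting case is $d(p,q_k)>\ell_{\text{old}}$, where $p$ is appended and the merge loop runs. The cleanest route is to isolate a single loop invariant: writing $\ell^{(t)}=2^{t-1}\ell_{\text{old}}$ for the value of $\ell$ at the start of the $t$-th outer iteration, I would show that at that moment $|X_k|=k+1$, the $q_i$'s are vertices of $X'$ in the same order, every pair of adjacent vertices of $X_k$ is at distance at least $\ell^{(t)}$, and $d(q_i,p_r)\le 2\ell^{(t)}$ for every stored interval $[a_i,b_i]$ and every $r\in[a_i,b_i]$. The base case $t=1$ combines the inductive hypothesis for the previous \texttt{update} with $d(p,q_k)>\ell_{\text{old}}$ and the fact that the appended $p$ has singleton interval $\{|X'|\}$. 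For the inductive step, any surviving adjacent pair has distance strictly greater than $2\ell^{(t)}$ (otherwise it would have been merged), and for each merged $q_j$ two applications of the triangle inequality give
\[ d(q_i,p_r)\le d(q_i,q_j)+d(q_j,p_r)\le 2\ell^{(t)}+2\ell^{(t)}=4\ell^{(t)}=2\ell^{(t+1)} \]
for every $r\in[a_j,b_j]$. Doubling $\ell$ turns both bounds into the desired ones at level $t+1$.

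Let $t^\ast$ be the iteration at which the outer loop finally exits; the final value of $\ell$ is $2\ell^{(t^\ast)}$. Invariants~1 and~3 then follow at once from the analysis of the $t^\ast$-th inner pass above. For invariant~2 I would apply the loop invariant \emph{at the start of} iteration $t^\ast$, where we still had $k+1$ vertices $v_1,\ldots,v_{k+1}$ of $X'$, in order, with consecutive distances at least $\ell^{(t^\ast)}=\ell/2$. A pigeonhole argument on any traversal between a $k$-vertex simplification $\sigma$ of $X'$ and $X'$ shows that two consecutive $v_i,v_{i+1}$ must share a paired vertex $\sigma_j$: each $v_t$ is paired with a nonempty contiguous range of $\sigma$-indices, these ranges are ordered with at most boundary overlap, and $k+1$ pairwise disjoint nonempty ranges cannot fit inside $[1,k]$. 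The triangle inequality then gives $\max(d(v_i,\sigma_j),d(v_{i+1},\sigma_j))\ge d(v_i,v_{i+1})/2\ge\ell/4$, so $\df(X',O_k)\ge \ell/4$. The main conceptual obstacle is precisely this pigeonhole step---turning the existence of many well-separated $X_k$-vertices into a lower bound for the optimum $k$-simplification of $X'$---while the rest of the argument is careful accounting of how $\ell$ doubles between merge passes.
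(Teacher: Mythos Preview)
Your proposal is correct and follows essentially the same approach as the paper's proof: induction over calls to \texttt{update}, a per-iteration analysis of the outer while-loop, and the pigeonhole argument that $k+1$ well-separated vertices force any $k$-vertex traversal to pair two consecutive ones with a common point. You are more explicit than the paper in two places---the $\ell^{(t)}$ bookkeeping and the pigeonhole step for invariant~2 (the paper simply asserts that ``$O_k$ has to traverse two neighbors while staying on the same vertex'')---and you also spell out the monotonicity of $\df(X',O_k)$ in the trivial branch, which the paper leaves implicit; these are refinements rather than a different route.
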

\begin{proof}
All invariants are true before the first call to \texttt{update}: $\ell$ is set to the smallest distance between a vertex and a neighbor, so invariant $1$  and $2$ are true, and since only one vertex that is at distance $\ell$ from its neighbor is missing on $X_k$, we have for all $i \in [k]$, $\df(X_{a_i b_i},q_i) \le \ell$ (invariant $3$).

Now assume that all invariants are true and $\texttt{update}$ is performed. In line $2$, a vertex is appended to $X_k$ if and only if its distance to the currently last vertex on $X_k$ is larger than $\ell$. If $d(p,q_k) \le \ell$, then ignoring the vertex does not violate any invariant: Traversing it can be appended to the current traversal, and this does not increase the traversal cost above $\ell$. If $d(p,q_k) > \ell$, then adding the vertex does not violate invariant $1$ since the point is at distance at least $\ell$ from its predecessor.
Invariant $2$ is not affected since $\ell$ is not (yet) changed and $\df(X',O_k)$ can only increase. 
Invariant $3$ stays true as well: We do not change any of the existing vertices $q_i$ nor their intervals $[a_i,b_i]$, and the new vertex' interval is the point itself (so the Fr\'echet distance is zero for this subcurve). 

Now we show that the invariants stay true after each iteration of the outer while-loop. One iteration of the while-loop does the following: Starting with a curve with $k+1$ vertices where every vertex has distance $\ge \ell$ to its neighbor(s), it removes vertices such that every removed vertex is at distance $\le 2 \ell$ of a non-removed neighbor, and such that the distance between any vertex and its neighbor(s) is $2 \ell$ afterwards. (Observe that one iteration of the while loop may do nothing if the distances between the neighbors are actually already higher than $2\ell$; this is why the while loop continues until it finally has removed at least one vertex.) As a last step, $\ell$ is set to $2 \ell$. 

Invariant $1$ is true after this, since the algorithm removed all neighbors that are at distance $\le 2\ell$ from their neighbor and in the end doubles $\ell$.

For invariant $2$, we observe that when we an iteration of the outer while loop, we have $k+1$ vertices on our curve, and each vertex is at distance $\ge \ell$ by the induction hypothesis (invariant $1$). Any optimal curve simplification $O_k$ has to traverse two neighbors while staying on the same vertex at $X_k$. This means that $\df(X',X_k) \ge \ell/2$, or, after we set $\ell = 2\ell$, $\df(X',X_k)\ge \ell/4$.

For invariant $3$, we observe that by invariant $3$ from the induction hypothesis, we had an interval partitioning satisfying $d(q_i,q_j)\le 2 \cdot \ell$ for all $i \in [k+1]$ and all $j \in [a_i,b_i]$, where $X_k=(q_i)_{i=1}^{m'}$ is the curve before the iteration of the outer while loop. Observe that we only merge intervals, so our new interval partitioning still covers $X'$.
Let $(q_i')_{i=1}^{m'}$ be the new curve after the while loop, and fix an $i \in [m']$. Call the new interval of $q_i'$ $[a_i',b_i']$, and let $j \in [k+1]$ be the index of a vertex $q_i$ that was removed and is now in the interval $[a_i',b_i']$. This happened because $d(q_i,q_i')\le 2 \ell$.
We conclude that $d(q_i',q_j)\le d(q_i',q_i) + d(q_i,q_j) \le 4\ell$ is true for all $j \in [a_i,b_i]$ (the old interval of $q_i$). Thus, every vertex in $[a_i',b_i']$ is at distance at most $4 \ell$ from $q_i'$. Since $\ell$ is doubled at the end of the iteration of the outer while loop, invariant $3$ is again true.
\end{proof}

\begin{corollary}
After each call to \texttt{update}, the current simplification $X_k$ is an $8$-approximate curve simplification for the prefix $X'$ of $X$ that was processed. The storage requirements during the initialization and during the calls to \texttt{update} is $\mathcal{O}(k)$, assuming that a point from the metric space can be stored in constant space.
\end{corollary}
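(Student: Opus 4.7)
The plan is to chain Invariants 2 and 3 from the preceding lemma to deduce the approximation ratio, and then to bound the memory directly by inspecting what the two routines store.

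For the approximation factor, I would argue as follows. Fix a query curve $q \in \MMM^k$ and let $O_k \in \MMM^k$ be an optimal $k$-simplification of the processed prefix $X'$, that is, one minimizing $\df(X',\cdot)$ over $\MMM^k$. By optimality $\df(X',O_k)\le \df(X',q)$. Using Invariant 3 on the current value of $\ell$ and simplification $X_k$, we have $\df(X',X_k)\le 2\ell$. Using Invariant 2 we have $\df(X',O_k)\ge \ell/4$. Combining,
\[
\df(X',X_k)\;\le\;2\ell\;=\;8\cdot(\ell/4)\;\le\;8\cdot\df(X',O_k)\;\le\;8\cdot\df(X',q),
\]
which is exactly the definition of an $8$-approximate $k$-simplification. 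Since the lemma guarantees both invariants hold after every call to \texttt{update} (and, as noted, they already hold after \texttt{init} with an even better constant), the simplification is $8$-approximate at every point in the stream.

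For the memory bound, I would inspect the two routines explicitly. In \texttt{init}, at any moment we hold at most $k+1$ points of the stream (the candidate curve $Y$) plus a single scalar $\ell$ and the $k$ intervals $[a_j,b_j]$, totalling $\mathcal{O}(k)$ cells. In \texttt{update}, we carry the current simplification $X_k$ of size $k$, the scalar $\ell$, the $k$ intervals, and the single newly-read point $p$; when $p$ is appended, $X_k$ momentarily reaches size $k+1$ before the inner while-loops remove vertices until $|X_k|\le k$. Thus at all times we store $\mathcal{O}(k)$ points of $\MMM$, a constant number of scalars, and $\mathcal{O}(k)$ intervals, giving $\mathcal{O}(k)$ memory under the stated assumption that a point of the metric space fits in $\mathcal{O}(1)$ space.

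The only subtlety worth remarking on is that the outer while-loop in \texttt{update} may iterate more than once (if doubling $\ell$ once is not enough to bring the curve back down to $k$ vertices), but each iteration strictly reduces $|X_k|$, so the loop terminates and in particular never causes the working storage to exceed $k+1$ points. I do not expect any real obstacle; the statement is essentially a direct corollary of the invariant lemma, with the approximation ratio coming from the ratio of the constants in Invariants 2 and 3.
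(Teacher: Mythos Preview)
Your proof is correct and is exactly the intended derivation: the paper states the corollary without proof, and chaining Invariant~3 ($\df(X',X_k)\le 2\ell$) with Invariant~2 ($\df(X',O_k)\ge \ell/4$) to get the factor~$8$ is precisely what is meant.

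One small factual slip in your closing remark: it is \emph{not} true that each iteration of the outer while-loop strictly reduces $|X_k|$. The paper explicitly notes that an iteration may remove nothing if all neighbor distances already exceed $2\ell$; termination comes instead from $\ell$ being doubled every iteration until $2\ell$ eventually exceeds the smallest neighbor distance. This does not affect your argument, though, since the loop condition itself guarantees $|X_k|\le k+1$ throughout, which is all you need for the $\mathcal{O}(k)$ space bound, and termination is already implicit in the lemma's phrase ``after each call to \texttt{update}.''
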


The pseudo-code for \texttt{update} is not efficient in its above form. The outer while-loop is convenient for the purpose of the above proof, however, it may iterate many times until $\ell$ has been sufficiently increased. In an efficient implementation, we compute the first $i$ such that $2^i \ell$ is at least the smallest pairwise distance between two neighbors on $X_k$ and directly jump to the corresponding iteration of the while loop. This version of \texttt{update} thus needs only one pass over $X_k$ to reduce its cardinality to $\le k$. That means that one call of \texttt{update} has a running time in $\mathcal{O}(k)$, and processing a stream with $n$ vertices has a total running time of $\mathcal{O}(mk)$. 

\thmstreamingsimplification*

\subsection{The main algorithm---putting everything together}
\label{sec:streamingDistOracle}
\label{sec:streaming:algo}

We now describe our streaming algorithm for computing the distance oracle in a stream. We use the composition lemmas in Section~\ref{sec:composition}, and the streaming algorithm for computing simplifications in the stream from Section~\ref{sec:simplification} . 

The algorithm reads the input stream in blocks of size $k$ (the content of each block is stored in a buffer until the end of the block has been reached). The data structure of Theorem~\ref{thm:distoracle} is built for the sequence in a block when it has been read. The algorithm maintains a stack $S$ with the data structures computed and maintained so far.  In addition, the algorithm maintains a dynamic search structure $A$ to store $k$-simplifications of the input stream starting at selected points in the stream.  In $A$, the $k$-simplification starting at the beginning of the $i$th block is stored with key $i$. The simplifications are updated with every point that is read from the stream, by using Theorem~\ref{thm:streamingeightapprox}. 
We want the number of such elements in $S$ not to become too large. To this end, we need to reduce memory by repeatedly merging and reducing the data structures in $S$. 
The basic merge-reduce operation pops the $t$ topmost data structures from the stack, applies Lemma~\ref{lem:composition:build:chain} and pushes the result back to the stack. 

\algorithmus{\texttt{merge-reduce}}{
\begin{compactenum}
    \item {\bf For} $1 \leq i \leq t$
    \item \hspace{0.5cm}  $(\DD_i,j_i) \leftarrow pop(S)$
    \item $p \leftarrow find(A,j_t)$ 
    \item Use Lemma~\ref{lem:composition:build:chain} with $\DD_t$,\dots, $\DD_1$ and $k$-simplification $p$ to build new data structure $\DD'$
    \item $push(S, (\DD',j_t))$ 
    \item {\bf For} $1 \leq i \leq t-1$
    \item \hspace{0.5cm} $delete(A,j_i)$ 
\end{compactenum}
}

Each call to \texttt{merge-reduce} incurs an extra approximation factor on the data structures newly produced. Therefore, we need to be careful in implementing a scheme which bounds the sequence of reductions leading to any newly created data structure. This is done as follows. (We adopt the convention that indices start counting at $1$, i.e., the first block read by the algorithm has index $i=1$.)

\algorithmus{\texttt{read($x,l$)}}{
\begin{compactenum}
    \item Let $i:=\lceil l/b \rceil$ denote the index of the current block
    \item {\bf If} $(l\bmod{b} == 1)$ and $(i\bmod{t} == 1)$: 
    \item \hspace*{0.5cm} Initialize an empty $k$-simplification and store in $A$ with key value $i$
    \item Append $x$ to buffer $B$ of current block
    \item Update all simplifications stored in $A$ using Theorem~\ref{thm:streamingeightapprox} with new point $x$
    \item {\bf If} $(l\bmod{b}==0)$:
    \item \hspace*{0.5cm} \texttt{process($B, i$)}
    \item \hspace*{0.5cm} Flush buffer $B$
\end{compactenum}
}

\algorithmus{\texttt{process($B,i$)}}{
\begin{compactenum}
    \item Use Theorem~\ref{thm:distoracle} to build data structure $\DD$ on buffered data $B$
    \item $push(S, (\DD,i))$
    \item $j = i$
    \item {\bf While} $(j\bmod{t} == 0)$: 
    \item \hspace*{0.5cm} Apply \texttt{merge-reduce} operation to $S$
    \item \hspace*{0.5cm} $j \leftarrow j/t $
\end{compactenum}
}

\emph{Answering a query.}
At any point in time, we can answer distance queries on $X$, ending at any point in the current block, by using Lemma~\ref{lem:composition:chain} on the data structures stored in $S$. For this purpose we read all data structures stored in $S$, but leave $S$ otherwise unchanged. To answer a query to a prefix curve that ends in the middle of the current block, we  simulate the respective queries to the data structure on the partial block which are done in the query algorithm of Lemma~\ref{lem:composition:chain} by using the exact algorithm~\cite{eiter1994computing}. For simplicity of presentation, we assume a block size of $k$, so that these queries can be performed within the time bounds of Theorem~\ref{thm:distoracle}.

\subsection{Analysis}
\label{sec:streaming:analysis}

We first prove some structural lemmas before we give the main theorem. Let $\DD_{[u,v]}$ denote a data structure created by the streaming algorithm which answers distance queries for the range of blocks $[u,v]$.

\begin{observation}\label{obs:stack:order}
Let $w$ be the number of data structures stored in $S$ after  execution of \texttt{read} on block $i$ and let $\DD_{[u_1,v_1]}, \dots, \DD_{[u_{w},v_{w}]}$ denote the data structures stored in the stack. It holds that $u_1=1$, $v_{w}=i$, and $v_i=u_{i-1}+1$. Moreover, it must be that $i=v_w$.
\end{observation}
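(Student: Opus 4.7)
The plan is to prove all three properties simultaneously by induction on $i$, the index of the most recently completed block. What makes the induction natural is that both the invariants and the algorithm's stack state only change when a block boundary is crossed (inside \texttt{process}), so it suffices to track what \texttt{process($B,i$)} does to a stack that already satisfies the claim for block $i-1$.

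For the base case $i=1$, after \texttt{process($B,1$)} pushes the single data structure $\DD_{[1,1]}$ onto the initially empty $S$, the test $1\bmod t = 0$ fails (assuming $t \ge 2$), so no \texttt{merge-reduce} is triggered. Then $w=1$, $u_1 = v_1 = 1 = i$, and the contiguity condition is vacuous. For the inductive step, I would assume that after processing block $i-1$ the stack contains $\DD_{[u'_1,v'_1]},\ldots,\DD_{[u'_{w'},v'_{w'}]}$ with $u'_1=1$, $v'_{w'}=i-1$, and $u'_{j}=v'_{j-1}+1$ for $2\le j\le w'$. I then trace \texttt{process($B,i$)} in two stages.

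First, the step that pushes $\DD_{[i,i]}$: the new top has $u=v=i=v'_{w'}+1$, so contiguity is preserved at the new junction, $u_1=1$ is untouched, and the new top's $v$-component equals $i$, as required. Second, each iteration of the while loop invokes \texttt{merge-reduce}, which pops the top $t$ entries and replaces them by a single data structure obtained from Lemma~\ref{lem:composition:build:chain}. By LIFO semantics these popped entries are exactly the current top $t$ entries, so by the current contiguity invariant their ranges form a single contiguous block $[u_{w-t+1},v_w]$. The merged structure answers queries for the concatenation of their blocks, so it represents $\DD_{[u_{w-t+1},v_w]}$, a single interval. Pushing it preserves $u_1=1$ (the bottom was not touched when $w-t\ge 1$, or $u_{w-t+1}=1$ when it was), leaves the new top's $v$ equal to the previous $v_w=i$, and reconnects contiguity at the new junction because $u_{w-t+1}=v_{w-t}+1$ held before the pop.

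The main obstacle I anticipate is not in the algebra but in the bookkeeping: one must verify that the $t$ entries popped by each \texttt{merge-reduce} are indeed consecutive in block order and contiguous in coverage, so that Lemma~\ref{lem:composition:build:chain} is being applied to a legitimate concatenation whose result can be labelled by a single interval $[u,v]$. This is handled cleanly by maintaining contiguity as a loop invariant across the possibly multiple iterations of the inner while loop in \texttt{process}, since each iteration's output state becomes the next iteration's input state with one fewer ``top block'' than before but still satisfying the three properties. Once this is established, $v_w=i$ at termination follows because the topmost $v$-component is never altered after block $i$'s data structure is first pushed (each subsequent \texttt{merge-reduce} only grows the top entry leftward by absorbing older entries).
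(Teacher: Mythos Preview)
Your induction argument is correct and is the natural way to verify this invariant; the paper states this as an observation without proof, so there is nothing to compare against. One minor remark: the statement as printed contains an index clash (the symbol $i$ is used both for the block index and as the running index in ``$v_i=u_{i-1}+1$''), and you have correctly read the intended contiguity condition as $u_j=v_{j-1}+1$ for $2\le j\le w$, which is what your argument actually establishes.
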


\emph{Merge tree.} Consider the set of data structures created by the algorithm  and consider a tree, where every node corresponds to one of such data structures and an edge of the tree is drawn between a data structure resulting from a call to Lemma~\ref{lem:composition:build:chain}  and each of the data structures directly involved in the call. We call this tree the \emph{merge tree}.
Lemma~\ref{lem:t:merge:tree} below implies that every data structure residing in the stack is the result of a perfectly balanced $t$-ary merge tree on a sequence of consecutive blocks of the input stream. More specifically, the leaves correspond to data structures created by a call to Theorem~\ref{thm:distoracle} and each inner node corresponds to the data structure created by a call to Lemma~\ref{lem:composition:build:chain} on its children.

\begin{restatable}{lemma}{lemmamergetree}\label{lem:t:merge:tree}
Let $u \leq v$ be two block indices. 
If there exists some value $\ell \geq 1$, such that $v = u + t^{\ell} - 1$ and $(v \bmod{t^{\ell}} == 0)$,
then, after processing block $v$, a data structure $\DD_{[u,v]}$ has been created by the streaming algorithm as a result of applying Lemma~\ref{lem:composition:build:chain} to $t$ data structures with ranges $[u_1,v_1], \dots, [u_t,v_t]$, where $u_1=u$, $v_t=v$, $u_i = u + (i-1) \cdot t^{\ell-1}$, and $v_i=u_i + t^{\ell-1} - 1$.
If $u=v$, then a data structure $\DD_{[u,v]}$ was created by the streaming algorithm as a result of applying Theorem~\ref{thm:distoracle} to block $v$.
\end{restatable}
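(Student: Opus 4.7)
The plan is to prove the two assertions in turn: the $u = v$ case follows directly from the algorithm, and the $\ell \geq 1$ case follows by strong induction on $\ell$, supported by a Bentley--Saxe style structural invariant on the stack $S$. For $u = v$, \texttt{process($B,v$)} applies Theorem~\ref{thm:distoracle} to block $v$ and pushes the resulting $\DD_{[v,v]}$ onto $S$ before entering its while loop, giving the conclusion immediately.

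For the inductive case I would first establish the structural invariant: after \texttt{read} has finished processing block $w$, the stack $S$ contains, from bottom to top, data structures whose range-sizes mirror the base-$t$ expansion of $w$. Concretely, if $w = \sum_{j \geq 0} d_j t^j$ with $d_j \in \{0,\ldots,t-1\}$, then $S$ has exactly $d_j$ adjacent entries of range-size $t^j$, arranged with larger sizes below smaller sizes, and their ranges tile $[1,w]$ in the order guaranteed by Observation~\ref{obs:stack:order}. This would follow by a secondary induction on $w$: \texttt{process($B,w$)} pushes a fresh size-$1$ entry and then runs \texttt{merge-reduce} exactly $\nu_t(w)$ times, where $\nu_t(w)$ is the largest $k$ with $t^k \mid w$; each iteration collapses the $t$ equal-sized entries at the top into one of the next power-of-$t$ size, which is exactly the carry operation converting the base-$t$ representation of $w-1$ into that of $w$.

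Applying the invariant to $w = v$ with $v \bmod t^{\ell} = 0$ (so $\nu_t(v) \geq \ell$), processing block $v$ triggers at least $\ell$ iterations of the inner while loop. Just before the $\ell$-th iteration, the topmost $t$ entries of $S$ are of range-size $t^{\ell-1}$ and tile $[u,v]$ in bottom-to-top order. The alignment forced by the invariant identifies these entries with the ranges $[u_1,v_1],\ldots,[u_t,v_t]$ from the lemma statement; applying the strong induction hypothesis to each $(u_i,v_i,\ell-1)$ when $\ell \geq 2$, or the already-proved $u = v$ case when $\ell = 1$, identifies them as $\DD_{[u_i,v_i]}$. The $\ell$-th \texttt{merge-reduce} then pops these $t$ entries, retrieves from $A$ the $k$-simplification keyed by $j_t = u$, and invokes Lemma~\ref{lem:composition:build:chain} to produce the claimed $\DD_{[u,v]}$.

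The main obstacle I anticipate is verifying that the $k$-simplification keyed by $u$ is actually present in $A$ at the moment of the final merge. Two facts suffice: (i) $u \bmod t = 1$, which follows from $u = v - t^{\ell} + 1$ together with $v \bmod t^{\ell} = 0$ and $\ell \geq 1$, so \texttt{read} initializes a simplification with key $u$ when block $u$ first begins; and (ii) \texttt{merge-reduce} deletes only the keys $j_1,\ldots,j_{t-1}$ of the topmost $t-1$ participants, never the bottommost key $j_t$. Since the structural invariant guarantees that every merge involving the data structure whose range begins at block $u$ has $u$ as its bottommost key, the simplification at key $u$ is preserved throughout, hence available at the final merge producing $\DD_{[u,v]}$.
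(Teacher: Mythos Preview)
Your proposal is correct and follows the same high-level skeleton as the paper: show inductively that the $t$ child data structures $\DD_{[u_i,v_i]}$ have already been created, argue that they sit as the top $t$ entries of $S$ at the moment of the $\ell$-th \texttt{merge-reduce} while processing block $v$, and conclude.

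The routes differ in how this second step is justified. The paper runs a double induction (outer on increasing $v$, inner on decreasing $u$) and, having established that each $\DD_{[u_i,v_i]}$ is created, simply appeals to Observation~\ref{obs:stack:order} to place them at the top of $S$. That appeal is somewhat informal: the observation only guarantees that $S$ tiles $[1,v]$, not that these particular $t$ structures are simultaneously present and topmost at the right moment. Your base-$t$ structural invariant (stack contents mirror the digits of $w$, with \texttt{merge-reduce} implementing the carry) makes this step rigorous and is a cleaner way to pin down exactly what $S$ looks like after each \texttt{read}. You also verify that the simplification keyed by $u$ is still in $A$ when the final merge fires, a point the paper's proof does not address at all even though Lemma~\ref{lem:composition:build:chain} needs it. So your argument is a genuine strengthening: same destination, but with the bookkeeping that the paper leaves implicit actually carried out.
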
\begin{proof}
We use an induction on increasing value of $v$. For the base case, let $v=1$. In this case we have $u=v=1$, and $\DD_{[u,v]}$ is the data structure created by the algorithm on the first block of the stream. This satisfies the claim.

In the induction step, we perform an inner induction with decreasing index $u$. The base case of this inner induction is $u=v$. In this case, the streaming algorithm creates a data structure $\DD_{[u,v]}$ on block~$v$. 
Again, this satisfies the claim. Now fix a value $u < v$. 
Assume there exists some $\ell\geq 1$, such that $(v \bmod{t^{\ell}} == 0)$ and $v = u + t^{\ell} - 1$, as specified in the lemma (otherwise, the claim is satisfied). 
Consider a range $[u_i,v_i]$ with 
 $u_i=u + (i-1) \cdot t^{\ell-1}$ and
$v_i=u_i + t^{\ell-1} - 1$, for some $1\leq i \leq t$. Note that either $u_i < v_i < v$ or $u < u_i < v_i \leq v$. In either case we can apply the induction hypothesis to the pair of values $u_i,v_i$. Now, assume $\ell=1$, then, by definition,
$ v_i = u_i+t^{\ell-1}-1 =u_i$ and therefore the streaming algorithm has created a data structure $\DD_{[u_i,v_i]}$ at a leaf of the merge tree.
Otherwise, if $\ell>1$, then there exists an $\ell'=\ell-1$ with $ v_i = u_i+t^{\ell'}-1$ and $(v_i \bmod{t^{\ell'}}==0)$. By induction, we conclude that the streaming algorithm has created a data structure $\DD_{[u_i,v_i]}$. 

By Observation~\ref{obs:stack:order}, these data structures 
$\DD_{[u_1,v_1]},\dots,\DD_{[u_t,v_t]}$
are stored at some point as the $t$ topmost data structures in $S$, when processing the block $v$, since $v=v_t$. By our initial assumption we have $(v \bmod{t^{\ell}}==0)$. Therefore, the algorithm will create a new data structure $\DD_{[u,v]}$ as a result of using Lemma~\ref{lem:composition:build:chain} with $\DD_{[u_1,v_1]},\dots,\DD_{[u_t,v_t]}$, the $t$ topmost the data structures on the stack. This proves the claim.
\end{proof}

\begin{lemma}
\label{lem:t:merge:depth}
After processing block $i$: 
\begin{compactenum}[(i)]
    \item The height of the merge tree of any data structure created is at most $\lceil\log_t i\rceil$.
    \item The total number of calls to \texttt{merge-reduce} performed  is in $\mathcal{O}(i)$.
    \item The number of elements in $S$ is at most $(t-1) \cdot \lceil\log_t i \rceil$
    \item The number of elements in $A$ is at most  $\lceil\log_t i \rceil$
\end{compactenum}
\end{lemma}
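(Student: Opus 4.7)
The plan is to prove all four items by leveraging Lemma~\ref{lem:t:merge:tree} together with a base-$t$ digit interpretation of the stack's evolution, plus bookkeeping of the lifetime of simplifications in $A$.

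For (i), I would note that by Lemma~\ref{lem:t:merge:tree} every data structure ever produced covers a range of exactly $t^\ell$ consecutive blocks for some $\ell \ge 0$, and $\ell$ equals the height of its merge tree. Since the covered range lies in $[1,i]$, we have $t^\ell \le i$ and hence $\ell \le \lceil \log_t i\rceil$, yielding (i).

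For (ii), I would consider the forest formed by all data structures ever created. The leaves of this forest are the per-block data structures produced via Theorem~\ref{thm:distoracle}; the internal nodes are exactly the outputs of \texttt{merge-reduce} and each has exactly $t$ children. With $i$ leaves total and at most $|S|$ trees (one per root currently on $S$), the number of internal nodes is $(i-|S|)/(t-1) \le i/(t-1) = \mathcal{O}(i)$.

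For (iii), I would proceed by induction on $i$, establishing the stronger invariant that the multiset of merge-tree levels of the data structures currently on $S$ (listed from bottom to top) is exactly the sequence of base-$t$ digits of $i$, with digit $c_j$ contributing $c_j$ stack elements at level $j$. The base case $i=1$ is immediate. For the inductive step, processing block $i+1$ first pushes one fresh level-$0$ data structure, and the \texttt{while} loop inside \texttt{process} then fires \texttt{merge-reduce} exactly as many times as are required to realize the base-$t$ increment carry---whenever a level collects $t$ elements, \texttt{merge-reduce} fuses them into one element at the next level. Hence $|S|$ equals the digit sum of $i$ in base $t$, which is at most $(t-1)(\lfloor\log_t i\rfloor+1)$, giving (iii).

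For (iv), I would analyze which simplifications are alive in $A$ by exploiting the explicit insertion/deletion rules. A simplification is inserted by \texttt{read} only at block indices $j$ with $j\bmod t=1$, and is deleted by \texttt{merge-reduce} as one of the non-leftmost labels $j_1,\ldots,j_{t-1}$. A case analysis shows that at a level-$0$ merge the deleted labels are of the form $u+1,\ldots,u+t-1$, none of which is $\equiv 1\pmod t$ (so those deletions are harmless no-ops on $A$), while at a level-$\ell$ merge with $\ell\ge 1$ the deleted labels form the non-leftmost elements of an arithmetic progression with common difference $t^\ell$, all of which are $\equiv 1\pmod t$ and therefore actually present in $A$. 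Combining this deletion pattern with the stack characterization from (iii), I would argue that at any point at most one simplification remains alive per level of the merge hierarchy, corresponding to the leftmost (i.e., lowest-labeled) stack element at that level; invoking (i) to bound the number of present levels by $\lceil \log_t i\rceil$ then yields the claimed bound. The most delicate step is verifying that no stray simplification survives past the \texttt{merge-reduce} call that folds its data structure into a strictly larger range, and I expect this bookkeeping through the nested \texttt{while}-loop in \texttt{process} to be the main technical hurdle.
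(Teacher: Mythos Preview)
Your arguments for (i)--(iii) are correct and essentially equivalent to the paper's. The paper argues via the complete $t$-ary merge tree of the next power of~$t$: item~(iii) comes from observing that the data structures on $S$ are the subtrees hanging to the left of the root-to-current-leaf path, at most $t-1$ per node. Your base-$t$ digit invariant is the same picture in arithmetic language (the digits $c_\ell$ are exactly the numbers of left-hanging subtrees at each level), and your forest-counting argument for~(ii) is a clean alternative to the paper's geometric series $\sum_{j}i/t^j$.

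Your plan for (iv), however, has a genuine gap. The claim that ``at most one simplification remains alive per level, corresponding to the leftmost stack element at that level'' is false. Consider $t=3$ after processing block $i=7$. The stack holds $\DD'_{[1,3]},\DD'_{[4,6]},\DD_{[7,7]}$, and $A=\{1,4,7\}$. Both level-$1$ elements (labels $1$ and $4$) still have their simplifications alive, not just the leftmost one. The reason is that a simplification with key $j$ is deleted only when a \texttt{merge-reduce} pops the data structure labeled $j$ as one of $j_1,\dots,j_{t-1}$; as long as $j$ keeps being the leftmost ($j_t$) in its merges---which it does until a merge at a strictly higher level consumes it---the simplification survives. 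In terms of your base-$t$ invariant, every level-$\ell$ stack element with $\ell\ge 1$ has label $\equiv 1\pmod t$ and hence keeps its simplification, so $|A|=[c_0\ge 1]+\sum_{\ell\ge 1}c_\ell$, which can be as large as $(t-1)\lfloor\log_t i\rfloor+1$.

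This means the bound $|A|\le\lceil\log_t i\rceil$ as stated cannot be established; the natural bound that falls out of your digit analysis is of order $(t-1)\lceil\log_t i\rceil$. (The paper's own proof sketch, asserting $|A|=|S|/(t-1)$, is also not literally correct---the example $i=7,t=3$ gives $|A|=3$ and $|S|=3$---and indeed the proof of Theorem~\ref{thm:streamingDistOracleTradeoff} later uses only the weaker bound $(t-1)\lceil\log_t m\rceil$ for the number of simplifications in $A$.) So your instinct that (iv) is the delicate part was right; the correct fix is to weaken the target bound rather than to try to prove one-per-level.
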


\begin{proof}
First assume that $i=t^{\ell}$ for some integer value of $\ell$. By Lemma~\ref{lem:t:merge:tree}, the algorithm has created a data structure $\DD_{[1,i]}$ and its merge tree contains all data structures created by the algorithm up to this point. The height of this merge tree is $\ell = \log_t i$ and the number of inner nodes is $\sum_{j=1}^\ell \frac{i}{t^{j}}$, which is linear in $i$.
Now the claims (i) and (ii) follow from Lemma~\ref{lem:t:merge:tree}. In particular, with respect to (i), every data structure created is a node of the merge tree and its height is at most the height of the tree. With respect to (ii), each call to \texttt{merge-reduce} corresponds to an inner node of the merge tree.

Now, assume that $i$ is not a power of $t$ but consider the merge tree of $i'$, where for some integer value of $\ell$: $t^{\ell-1} < i < t^{\ell} = i'$ (that is, $i'$ is the next power of $t$). Claims (i) and (ii) follow from the above, since $\log_t i' = \lceil \log_t i \rceil$.
As for (iii), this also follows from Lemma~\ref{lem:t:merge:tree}, since the data structures stored in $S$ correspond to the subtrees hanging to the left the path from the root of the merge tree to the index of the current block. Each node on this path has at most $(t-1)$ children to the left of the  path.
Finally, to show (iv), note that after completion of any call to \texttt{process}, the number of elements in $A$ is the number of elements in $S$ divided by $(t-1)$, as can be shown by induction on the blocks in the order in which they are processed in the stream.
\end{proof}

In the following main theorem of our analysis the parameter $s$ provides a trade-off between the number of data structures maintained by the streaming algorithm and the space used by the individual data structures, which depends on the function $f_s(k,\eps,\alpha)$. 

\begin{theorem}
\label{thm:streamingDistOracleTradeoff}
For any value of $1 < s \leq \log_2 m$ there is a streaming algorithm that computes a distance oracle for a curve $X$ given as a stream with $m=|X|$, where $m$ is known in advance, with the following properties. The algorithm uses additional memory in 
$ \mathcal{O}(s \cdot m^{1/s} \cdot f_s(k, \frac{\epsilon}{s},\alpha))$ and the total time used by the algorithm to compute a distance oracle in the stream is bounded by 
\[
\mathcal{O}\left( m^{(1+\frac{1}{s})} \left(s +  k \cdot f_s\left(k,\frac{\eps}{s},\alpha\right) \cdot  f_q\left(k,\frac{\eps}{s},\alpha\right)\right) \right)
\]
Using the data structures maintained in the stream, one can answer distance queries at any point in the stream. For any query curve $q \in \XX^d_k$, the algorithm outputs a value $d'(q)$ in query time $\mathcal{O}( k^2 \cdot s \cdot m^{1/s} \cdot f_q(k,\eps',\alpha))$ and additional memory in $\mathcal{O}(k \cdot s \cdot m^{1/s} + k^2)$ such that 
\[\df(q,x) \le d'(q) \le (1+\epsilon)\cdot \df(q,X).\]
\end{theorem}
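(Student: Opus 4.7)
The plan is to instantiate the generic streaming algorithm of Section~\ref{sec:streaming:algo} with $t = \lceil m^{1/s} \rceil$, which is an integer at least $2$ since $s \leq \log_2 m$, and to feed a uniform precision parameter $\epsilon' = \Theta(\epsilon/s)$ into every invocation of Theorem~\ref{thm:distoracle} and Lemma~\ref{lem:composition:build:chain}. With this choice, Lemma~\ref{lem:t:merge:depth}(i) bounds the height of the merge tree of every data structure ever produced by $\lceil \log_t m \rceil \leq s$, which is the structural fact driving all of the bounds below.

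For the approximation analysis, each leaf oracle built by Theorem~\ref{thm:distoracle} is an element of $\DD_k^{1+\epsilon'}$. Inspecting the proof of Lemma~\ref{lem:composition:build:chain}, a composition of oracles of approximation $\beta$ built with precision $\epsilon'$ in fact yields approximation $\beta \cdot (1+\epsilon')$ (the $\beta^2$ in the statement corresponds to specializing $\epsilon' = \beta - 1$). Iterating along any root-to-leaf path of length at most $s$, the final approximation is $(1+\epsilon')^{s+1}$, which is at most $1+\epsilon$ for $\epsilon \in (0,1]$ when the constant in $\epsilon' = \Theta(\epsilon/s)$ is chosen small enough (e.g.\ $\epsilon' = \epsilon/(3s)$ suffices via $(1+\epsilon')^{s+1} \le e^{2\epsilon/3} \le 1+\epsilon$). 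The query procedure will use Lemma~\ref{lem:composition:chain}, which does \emph{not} inflate the input approximation, so this bound is preserved at query time.

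For space, Lemma~\ref{lem:t:merge:depth} tells us that at every moment the stack $S$ holds at most $(t-1)\lceil \log_t m \rceil = \mathcal{O}(s \cdot m^{1/s})$ oracles of size $\mathcal{O}(f_s(k,\epsilon/s,\alpha))$, the search structure $A$ holds at most $\mathcal{O}(s)$ simplifications of size $\mathcal{O}(k)$ (Theorem~\ref{thm:streamingeightapprox}), and the block buffer holds at most $k$ vertices, giving the stated $\mathcal{O}(s \cdot m^{1/s} \cdot f_s(k,\epsilon/s,\alpha))$. The total running time decomposes as (i) ingesting points and updating the $\mathcal{O}(s)$ simplifications of $A$, contributing $\mathcal{O}(msk)$; (ii) constructing the $m/k$ leaf oracles, contributing $\mathcal{O}(mk \cdot f_s)$ by Theorem~\ref{thm:distoracle}; and (iii) the $\mathcal{O}(m/k)$ \texttt{merge-reduce} calls guaranteed by Lemma~\ref{lem:t:merge:depth}(ii), each costing $\mathcal{O}(f_s \cdot k^2 t \cdot f_q)$ by Lemma~\ref{lem:composition:build:chain} and summing to $\mathcal{O}(m^{1+1/s} k f_s f_q)$. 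Adding these three terms yields the claimed $\mathcal{O}(m^{1+1/s}(s + k f_s f_q))$.

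For the query bounds, a distance query to the prefix seen so far is answered by feeding the $\mathcal{O}(s \cdot m^{1/s})$ oracles currently in $S$ into the dynamic program of Lemma~\ref{lem:composition:chain}, together with a virtual oracle for the at-most-$k$ buffered vertices of the partial current block, which is handled exactly by~\cite{eiter1994computing} in $\mathcal{O}(k^2)$ time per call and thus fits inside $f_q(k,\epsilon/s,\alpha)$ (matching the assumption stated at the end of Section~\ref{sec:streaming:algo}). Lemma~\ref{lem:composition:chain} then gives query time $\mathcal{O}(k^2 \cdot s \cdot m^{1/s} \cdot f_q(k,\epsilon/s,\alpha))$ and uses $\mathcal{O}(k \cdot s \cdot m^{1/s})$ space for its dynamic-programming table plus $\mathcal{O}(k^2)$ working space for the exact computation on the buffer, matching the stated query bounds. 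The main subtlety, and the place where I would be most careful, is the approximation argument: one must verify that composition error grows only by a single $(1+\epsilon')$ factor per merge-tree level rather than being squared, so that $\epsilon' = \Theta(\epsilon/s)$ suffices across the $s$ levels.
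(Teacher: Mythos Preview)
Your proof is correct and follows essentially the same approach as the paper: you instantiate the streaming algorithm with $t=\lceil m^{1/s}\rceil$ and precision $\epsilon'=\Theta(\epsilon/s)$, invoke Lemma~\ref{lem:t:merge:depth} for the structural bounds, decompose the running time into simplification updates, leaf constructions, and \texttt{merge-reduce} calls, and use Lemma~\ref{lem:composition:chain} for queries. Your observation that Lemma~\ref{lem:composition:build:chain} really yields approximation $\beta\cdot(1+\epsilon')$ per level (rather than a literal squaring) is in fact the correct reading needed to make the $(1+\epsilon')^{\mathcal{O}(s)}\le 1+\epsilon$ bound go through, and the paper relies on the same interpretation.
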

\begin{proof}
The algorithm reads the input in blocks using calls to the procedure \texttt{process}$(X_i)$, where $X_i$ is the $i$th block.
Let $b$ denote the block size used by the algorithm. We assume $b = k$ and $f(k,\eps) > k$ for any $\eps > 0$. 
We choose $t = \lceil m^{1/s} \rceil$ as the arity of the computation tree.\footnote{Note that this way, choosing $s=\log_2 m$ corresponds to a binary merge tree, since $t = \lceil m^{1/\log_2 m} \rceil = 2$.}  We denote with $\eps'$ the value of $\eps$ which we use to build the data structures at the leaves of the computation tree. 

In addition to a buffer used to store the raw data of the current block, which by our assumption has size $\mathcal{O}(k)$, the algorithm uses additional memory to keep the data structures in $S$ and the simplifications in $A$. By Lemma~\ref{lem:t:merge:depth}, at each point in time, there are at most $(t-1) \cdot \lceil \log_t m\rceil$ data structures stored in $S$ and $A$, each of them taking up space in $f_s(k,\eps')$, respectively $\mathcal{O}(k)$. Moreover, the algorithm uses additional memory when performing a distance query using Lemma~\ref{lem:composition:chain} or Lemma~\ref{lem:composition:build:chain}, which is bounded by $\mathcal{O}(k \cdot (t-1)\log_t m + k^2)$. In addition, each \texttt{merge-reduce} operation uses $\mathcal{O}(t)$ space for handling the data structures to be merged.
Therefore, the additional memory used by the algorithm is bounded by 
\[ 
\mathcal{O}(t \cdot \log_t m \cdot (f(k,\eps',\alpha))).\]
Since $t = \lceil m^{1/s} \rceil$, we have $ s \geq \log_t m $.
Thus, the claimed bound follows.

We now analyze the total running time after reading $m$ points from the input stream. That is, assume the algorithm has read
$\lfloor m/b\rfloor$ blocks from the input and buffered $m \mod b$ points of the current block. 
When handling a block with a call to the \texttt{process} function, the algorithm updates the at most $(t-1) \cdot \lceil \log_t m\rceil$ (Lemma~\ref{lem:t:merge:depth}) $k$-simplifications stored in $A$  using Theorem~\ref{thm:streamingeightapprox} in $\mathcal{O}(k)$ time per simplification per input point. Assuming $b=k$,  the total time spent on updating simplifications is in 
\[\mathcal{O}( m \cdot t \cdot \log_t m ).\]

For each call to the \texttt{process} function, the algorithm uses Theorem~\ref{thm:distoracle} to build a new data structure on the new data $X_i$ in $\mathcal{O}(b \cdot k \cdot f_s(k,\eps',\alpha))$ time.
Since we have $m/b$ blocks, the total time spent on creating data structures at the leaves of the merge tree is in 
\[\mathcal{O}( m \cdot k \cdot f_s(k,\eps',\alpha)).\]

By Lemma~\ref{lem:t:merge:depth}, the algorithm performs in total $\mathcal{O}(m/b)$ calls to the \texttt{merge-reduce} operation, which by Lemma~\ref{lem:composition:build:chain} each take 
$\mathcal{O}(k^2 \cdot t \cdot f_s(k,\eps',\alpha) \cdot f_q(k,\eps',\alpha) )$
time, plus $\mathcal{O}(\log (t \log_t m))$ time for retrieving the $k$-simplification with key value $j_t$ from $A$. Assuming $b=k$, we have in total
\[\mathcal{O}(
m \cdot k \cdot t \cdot f_s(k,\eps',\alpha) \cdot  f_q(k,\eps',\alpha) 
+ (m/b) \log (t \log_t m)
)\]
time spent on the \texttt{merge-reduce} operations. We can see that the time updating simplifications dominates retrieving them and the time spent on creating the data structures at the leaves of the merge tree is dominated by the \texttt{merge-reduce} operations.
The total running time is therefore bounded by 
\[
\mathcal{O}\left( m^{(1+\frac{1}{s})} \left(s +  k \cdot f_s(k,\eps',\alpha) \cdot  f_q(k,\eps',\alpha)\right) \right)
\]

By Lemma~\ref{lem:composition:chain}, the algorithm handles distance queries to $X$ in $\mathcal{O}( k^2 \cdot t \log_t m  \cdot f_q(k,\eps,\alpha))$ time, since $\ell$ is in $\mathcal{O}(t \log_t m)$ by Lemma~\ref{lem:t:merge:depth}. Using  $t = \lceil m^{1/s} \rceil$ and $ s \geq \log_t m $ the claimed bound follows.

In order to guarantee an approximation factor of at most $(1+\eps)$ for the result of the distance queries, we choose $\eps'$ as follows. Note that for each call to the \texttt{merge-reduce} operation, the approximation factor worsens. By Lemma~\ref{lem:t:merge:depth} any data structure in $S$ is the result of at most $s$ basic \texttt{merge-reduce} operations. After at most $s$ \texttt{merge-reduce} operations using Lemma~\ref{lem:composition:build:chain}, the approximation factor has increased to at most $(1+\eps)^{s}$. 
We choose 
\[\epsilon' := \epsilon / (2 s),\] and obtain
\[
\left(1+\frac{\epsilon}{2s }\right)^{s} \le e^{\epsilon/2} \le \frac{1}{1-\epsilon/2} = 1 + \frac{\epsilon/2}{1-\epsilon/2} \le 1 + \epsilon
\]
as an upper bound to the approximation factor of any of the data structures generated by the streaming algorithm while  reading the first $m$ points in the stream.

\end{proof}

\subsection{Results}
We distinguish two main scenarios:  (i) maintaining a dynamic distance oracle in the stream (ii) computing a static data structure in one pass over the data using sublinear additional workspace.

\subsubsection{Streaming algorithm}\label{sec:streaming:result}

In this section we give the result for the scenario of dynamically maintaining a distance oracle in a stream. In this case, we do not know $m$ in advance, and we need to be able to perform distance queries at any point throughout the stream on the part of the sequence seen so far. Since consolidating the data structures in the stack into one data structure is costly, we answer streaming queries using Lemma~\ref{lem:composition:chain}.

\begin{restatable}{theorem}{thmDistOracleStreamingKnownM}
\label{thm:streamingDistOracle:1a}
There is a streaming algorithm that computes a distance oracle for a curve $X$ given as a stream with $m=|X|$, where $m$ is known in advance, with the following properties. The algorithm uses additional memory in 
\[ 
\mathcal{O}\left(\log m \cdot k^k \cdot \left(\log \left(\frac{\log m}{\eps}\right)\right)^k  \cdot \left(\frac{\log m}{\eps}\right)^{dk}\right)
\] and the total time used by the algorithm to compute a distance oracle in the stream is bounded by
\[
\mathcal{O}\left( 
m  \cdot 
k^{k+3}  \cdot \left(\frac{\log m}{\eps}\right)^{dk} \cdot \left(\log \left(\frac{\log m}{\eps}\right)\right)^k
 \right)
\]
Using the data structures maintained in the stream, one can answer distance queries at any point in the stream. For any query curve $q \in \XX^d_k$, the algorithm outputs a value $d'(q)$ in query time $\mathcal{O}( k^4 \cdot \log m + k^2\cdot (\log m) \cdot (\log \epsilon^{-1}))$ and additional memory in $\mathcal{O}(k \cdot \log m + k^2)$ such that 
\[\df(q,x) \le d'(q) \le (1+\epsilon)\cdot \df(q,X).\]
\end{restatable}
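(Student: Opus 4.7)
The plan is to apply Theorem \ref{thm:streamingDistOracleTradeoff} with the trade-off parameter chosen as $s = \log_2 m$. As observed in the footnote of that theorem, this choice yields $t = \lceil m^{1/s}\rceil = 2$, i.e.\ a binary merge tree, so that $m^{1/s}$ becomes a constant. The per-data-structure precision requested from the leaves of the merge tree is then $\eps' = \eps/(2s) = \Theta(\eps/\log m)$, which is exactly the precision loss we can afford per level along a $\log m$-depth merge tree before the approximation factor exceeds $1+\eps$.

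Next I would substitute the Euclidean forms of $f_q$ and $f_s$ given by Theorem \ref{thm:distoracle}, namely
\[
f_q(k,\eps',\alpha) = \mathcal{O}\bigl(k^2 + \log \tfrac{\log m}{\eps}\bigr), \qquad
f_s(k,\eps',\alpha) = \mathcal{O}\bigl(k^k \cdot (\log \tfrac{\log m}{\eps})^k \cdot (\tfrac{\log m}{\eps})^{dk}\bigr),
\]
into each of the four bounds supplied by Theorem \ref{thm:streamingDistOracleTradeoff}. The memory bound $\mathcal{O}(s \cdot m^{1/s} \cdot f_s)$ collapses directly to the claimed $\mathcal{O}(\log m \cdot k^k \cdot (\log(\log m/\eps))^k \cdot (\log m/\eps)^{dk})$. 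The running-time bound $\mathcal{O}(m^{1+1/s}(s + k f_s f_q))$ is dominated by the product $m \cdot k \cdot f_s \cdot f_q$; the leading $k^2$ term of $f_q$ combines with the $k \cdot k^k = k^{k+1}$ prefactor to give the $k^{k+3}$ coefficient in the stated bound. The query time $\mathcal{O}(k^2 s m^{1/s} f_q)$ becomes $\mathcal{O}(k^2 \log m \cdot (k^2 + \log(\log m/\eps))) = \mathcal{O}(k^4 \log m + k^2 \log m \log \eps^{-1})$, after absorbing the additive $\log\log m$ term into the other factors. Finally, the additional query memory $\mathcal{O}(k s m^{1/s} + k^2)$ simplifies to $\mathcal{O}(k \log m + k^2)$.

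I do not expect any conceptual obstacle: the proof is a mechanical specialization of Theorem \ref{thm:streamingDistOracleTradeoff}. The only care required is in tracking how the factor $s = \log m$ enters the per-level precision as $1/\eps'$ and is absorbed cleanly into the $\log m / \eps$ and $\log(\log m / \eps)$ factors of $f_s$ and $f_q$, rather than compounding the visible dependency on $m$. In particular, one must verify that $\log(s/\eps) = \log\log m + \log \eps^{-1}$ is subsumed by the stated $\log(\log m/\eps)$ factors, and that the $m^{1/s} = 2$ constant from the binary arity does not show up as any extra logarithmic overhead in the final bounds.
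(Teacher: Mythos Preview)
Your proposal is correct and follows exactly the same approach as the paper: invoke Theorem~\ref{thm:streamingDistOracleTradeoff} with $s=\log_2 m$ (so $m^{1/s}=2$) and plug in the Euclidean bounds $f_s,f_q$ from Theorem~\ref{thm:distoracle}. Your write-up is in fact more detailed than the paper's own proof, which simply states the choice of $s$ and the forms of $f_s,f_q$ and leaves the substitution to the reader.
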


\begin{proof}
We use Theorem~\ref{thm:streamingDistOracleTradeoff} with $s=\log m$ and with the data structure of Theorem~\ref{thm:distoracle}, which guarantees $f_s(k,\eps,\alpha)=k^k (\log \eps^{-1})^k  \eps^{-dk}$ and $f_q(k,\eps,\alpha)=k^2 + \log \epsilon^{-1}$. (Note that in this case, $m^{1/s} = (2^{(\log_2 m)}) ^{1/ (\log_2 m)} = 2$.)
\end{proof}

Setting $\epsilon'$ correctly requires knowledge of $m$, the length of the stream. If the length of the stream is unknown, one proceeds like this: Starting with a small estimate $m_0$ on the stream length, one executes the merge-and-reduce framework on the first $m_0$ input points. The result of this is stored away, and the estimate on the stream length is doubled, the next $2 m_0$ items are processed, and so on. In this way, one gets a collection of data structures, one for each guess of the stream lengths. Since the guesses are doubled, the total number is in $\mathcal{O}(\log m)$, so this adds one multiplicative factor of $\log m$ to the asymptotic memory requirement. We can answer distance queries on the stored data structures using Lemma~\ref{lem:composition:chain}. We obtain the following theorem.

\thmDistOracleStreamingUnKnownM*

\subsubsection{Implications}\label{sec:streaming:implications}
We give a general version of our space-time tradeoff result in  Theorem~\ref{thm:streamingDistOracleTradeoff} in Section~\ref{sec:streaming:analysis}. To highlight its implications, we state two applications of it here to curves in the Euclidean space. The first is for the one-pass scenario, where we compute a static data structure, but in a streaming fashion. In this setting, we focus on speeding up short query curves after processing the whole stream. 

\thmOnePassDistOracle*

\begin{proof}
We use Theorem~\ref{thm:streamingDistOracleTradeoff} with parameter $s$ and using the data structure of Theorem~\ref{thm:distoracle}, which guarantees $f_s(k,\eps,\alpha)=k^k (\log \eps^{-1})^k  \eps^{-dk}$ and $f_q(k,\eps,\alpha)=k^2+\log \epsilon^{-1}$.
We have that $f_s(k,\frac{\eps}{s},\alpha)$ is in $\mathcal{O}(k^k (\log \eps^{-1})^k  \eps^{-dk})$, since $s$ is constant.
By Theorem~\ref{thm:streamingDistOracleTradeoff}, this implies a streaming algorithm for computing a distance oracle with the following properties. The algorithm uses additional memory in 
\[ \mathcal{O}( m^{1/s} \cdot k^k (\log \eps^{-1})^k  \eps^{-dk})
\]
and the total time used by the algorithm to compute a distance oracle in the stream is bounded by 
\[
\mathcal{O}\left( 
m^{(1+\frac{1}{s})} \cdot k^{(k+3)} \cdot (\log \eps^{-1})^k \cdot \eps^{-dk}
 \right)
\]

After reading the last point from the stream, we apply Lemma~\ref{lem:composition:build:chain} on the data structures in the tree, using with the simplification with key equal to $1$. Let $\ell$ be the number of data structures after the algorithm has completed processing the stream. By Lemma~\ref{lem:t:merge:depth}, $\ell$ is in $\mathcal{O}(s \cdot m^{1/s})$. By Lemma~\ref{lem:composition:build:chain} this final step takes time in \[ \mathcal{O} \left(f_s\left(k,\frac{\eps}{s}\right)  k^2 \ell k^2\right) \in \mathcal{O}( m^{1/s} \cdot k^{k+4} \cdot (\log \eps^{-1})^k  \cdot \eps^{-dk}) )\] for constant $s$ and takes additional memory in $\mathcal{O}(k \ell)$.
Clearly, this does not affect the asymptotic running time.

The resulting data structure is static and has size $\mathcal{O}(k^k (\log \eps^{-1})^k  \eps^{-dk})$ and can answer queries in  query time in $\mathcal{O}(k^2)$. 
\end{proof}

The second application is a streaming algorithm where the number of points is a priori unknown and the focus is on being able to answer short curve queries at any point during the data stream. The memory requirement of this algorithm is polylogarithmic in $m$. In this case, we do not know $m$ in advance, and we need to be able to perform distance queries at any point throughout the stream on the part of the sequence seen so far. Since consolidating the data structures in the stack into one data structure is costly, we answer streaming queries using Lemma~\ref{lem:composition:chain}.

\begin{theorem}
\label{thm:streamingDistOracle:1b}
There is a streaming algorithm that computes a distance oracle for a curve $X$ given as a stream with $m=|X|$, where $m$ is known in advance, with the following properties. The algorithm uses additional memory in 
\[ 
\mathcal{O}\left(\log m \cdot k^k \cdot \left(\log \left(\frac{\log m}{\eps}\right)\right)^k  \cdot \left(\frac{\log m}{\eps}\right)^{dk}\right)
\] and the total time used by the algorithm to compute a distance oracle in the stream is bounded by
\[
\mathcal{O}\left( 
m  \cdot 
k^{k+3}  \cdot \left(\frac{\log m}{\eps}\right)^{dk} \cdot \left(\log \left(\frac{\log m}{\eps}\right)\right)^k
 \right)
\]
Using the data structures maintained in the stream, one can answer distance queries at any point in the stream. For any query curve $q \in \XX^d_k$, the algorithm outputs a value $d'(q)$ in query time $\mathcal{O}( k^4 \cdot \log m + k^2\cdot (\log m) \cdot (\log \epsilon^{-1}))$ and additional memory in $\mathcal{O}(k \cdot \log m + k^2)$ such that 
\[\df(q,x) \le d'(q) \le (1+\epsilon)\cdot \df(q,X).\]
\end{theorem}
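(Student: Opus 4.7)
The plan is to instantiate Theorem~\ref{thm:streamingDistOracleTradeoff} with the extreme choice $s = \log_2 m$, so that the arity of the merge tree becomes $t = \lceil m^{1/s}\rceil = 2$. This collapses the general merge-and-reduce scheme to a binary computation tree of depth $\Theta(\log m)$, matching the classical Bentley--Saxe setting. As the base building block I would take the static distance oracle from Theorem~\ref{thm:distoracle}, which in the Euclidean case gives
\[
f_s(k,\eps,\alpha)=k^k(\log \eps^{-1})^k\eps^{-dk}, \qquad f_q(k,\eps,\alpha)=k^2+\log \eps^{-1},
\]
so that all ingredients needed by Theorem~\ref{thm:streamingDistOracleTradeoff} are in hand.

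The only non-mechanical ingredient is the precision bookkeeping. Theorem~\ref{thm:streamingDistOracleTradeoff} already absorbs the approximation blow-up across the tree by invoking the base oracle with precision $\eps' = \eps/(2s)$, so I would simply substitute $s = \log_2 m$ to get $\eps' = \Theta(\eps/\log m)$. Consequently
\[
f_s\!\left(k,\tfrac{\eps}{s},\alpha\right) = k^k\big(\log(\log m/\eps)\big)^k\big(\log m/\eps\big)^{dk}, \qquad f_q\!\left(k,\tfrac{\eps}{s},\alpha\right) = k^2+\log(\log m/\eps).
\]

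Substituting these values together with $m^{1/s}=2$ into the three bounds of Theorem~\ref{thm:streamingDistOracleTradeoff} is then routine. The additional memory bound $\mathcal{O}(s\cdot m^{1/s}\cdot f_s(k,\eps/s,\alpha))$ becomes $\mathcal{O}(\log m\cdot f_s(k,\eps/s,\alpha))$, which is the claim. The total running time $\mathcal{O}\bigl(m^{1+1/s}(s + k\cdot f_s\cdot f_q)\bigr)$ simplifies to $\mathcal{O}\bigl(m\cdot (\log m + k \cdot f_s(k,\eps/s,\alpha)\cdot f_q(k,\eps/s,\alpha))\bigr)$; expanding $f_q = k^2+\log(\log m/\eps)$ and noting that the $k^2$ term dominates (being swallowed into the $k^{k+3}$ factor) yields the stated $\mathcal{O}\bigl(m\cdot k^{k+3}\cdot (\log m/\eps)^{dk}\cdot (\log(\log m/\eps))^k\bigr)$. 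Finally, the query time bound $\mathcal{O}(k^2\cdot s\cdot m^{1/s}\cdot f_q(k,\eps,\alpha))$ expands to $\mathcal{O}\bigl(k^2\log m\,(k^2+\log(\log m/\eps))\bigr)$, which matches the claimed $\mathcal{O}(k^4\log m + k^2\log m\log\eps^{-1})$ after absorbing the dominated $\log\log m$ factor; the $\mathcal{O}(k\log m + k^2)$ additional memory for the query follows from the same substitution into the query-memory bound of Theorem~\ref{thm:streamingDistOracleTradeoff}.

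The main subtlety, more of a bookkeeping check than a genuine obstacle, is verifying that the $(1+\eps)$ end-to-end approximation is preserved: a leaf-to-root path in the merge tree of depth $s = \log_2 m$ composes $s$ basic merge-reduce operations, so one must have $(1+\eps')^s \le 1+\eps$, which is exactly what the choice $\eps' = \eps/(2s)$ in Theorem~\ref{thm:streamingDistOracleTradeoff} guarantees via $(1+\eps/(2s))^s \le e^{\eps/2}\le 1+\eps$. Since $m$ is known in advance, $s$ is fixed before the stream begins and this precision choice can be made up front, yielding the claimed $(1+\eps)$-approximate distance oracle in the stream.
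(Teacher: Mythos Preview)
Your proposal is correct and follows exactly the paper's approach: instantiate Theorem~\ref{thm:streamingDistOracleTradeoff} with $s=\log_2 m$ (so $m^{1/s}=2$) and plug in the Euclidean bounds $f_s,f_q$ from Theorem~\ref{thm:distoracle}. The paper's own proof is a one-liner doing precisely this; you simply spell out the substitutions and the $\eps' = \eps/(2s)$ precision argument (which is already handled inside Theorem~\ref{thm:streamingDistOracleTradeoff}) in more detail.
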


\begin{proof}
We use Theorem~\ref{thm:streamingDistOracleTradeoff} with $s=\log m$ and with the data structure of Theorem~\ref{thm:distoracle}, which guarantees $f_s(k,\eps,\alpha)=k^k (\log \eps^{-1})^k  \eps^{-dk}$ and $f_q(k,\eps,\alpha)=k^2 + \log \epsilon^{-1}$. (Note that in this case, $m^{1/s} = (2^{(\log_2 m)}) ^{1/ (\log_2 m)} = 2$.)
\end{proof}

\section{Approximate Near Neighbors for short query curves}
\label{section:ann}
In this section, we present efficient data structures for the ANN problem, for polygonal curves.

\begin{definition}[ANN problem]\label{Dgenann}
Input are $n$ polygonal curves $P\subset\MMM^m$. Given a distance function $d(\cdot,\cdot)$, $r>0,$ $\eps>0$, preprocess $P$ into a data structure such that for any query polygonal curve $q\in \MMM^k$, the data structure reports as follows:
\begin{itemize}
    \item if there exists a $p\in P$ s.t.~$\d(p,q)\leq r$, then return $p'\in P$ s.t.~$\d(p,q)\leq (1+\eps)r$,
    \item if $\forall p \in P$, $\d(p,q)\geq (1+\eps)r$ then return "no",
    \item otherwise, the data structure either replies with a curve $p\in P$ s.t.~$\d(p,q)\leq (1+\eps)r$, or with "no". 

\end{itemize}
\end{definition}
In particular, the results of this section concern the ANN problem for polygonal curves under the discrete Fr\'{e}chet distance $\d_{dF}$. 
For any polygonal curve $p$, $V(p)$ denotes the set of its vertices. 

\subsection{ANN for short query curves in  Euclidean spaces}\label{subsection:ann:frechet}

In this subsection, we focus in the common case where the underlying metric is the Euclidean metric. We further assume that $r=1$ since we can uniformly scale the ambient space.

Randomly shifted grids constitute the main ingredient of our algorithm. It has been previously observed \cite{Driemel-lshc-17} that randomly shifted grids induce a good partition of the space of curves: with good probability, near curves pass through the same sequence of cells. 
Let $\delta>0$ and $z$ chosen uniformly at random from the interval $[0,\delta]$. The function 
$h_{\delta,z}(x_i)=\left\lfloor {\delta}^{-1}({x_i-z}) \right\rfloor$
induces a random partition of the line. Hence, for any vector $x=(x_1,\ldots,x_d)$, the function 
$g_{\delta,z}(x)=(h_{\delta,z}(x_1),...,h_{\delta,z}(x_d)),$
induces a randomly shifted grid. Notice that, for our purposes, it suffices to use the same random variable for all coordinates. 

For any set $X$, $diam(X)$ denotes the diameter of $X$. We begin with simple technical lemmas and then we proceed to our main theorems. First we  bound the probability that a set with bounded diameter is entirely contained in a cell. 
\begin{lemma}
\label{lemma:probset}
Let $X\subseteq \RR^d$ be a set such that $diam(X)\leq \Delta$. Then,
\[ 
 \Pr_z\left[ \exists x\in X~\exists y \in X:~g_{\delta,z}(x) \neq g_{\delta,z}(y) \right] \leq \frac{d\Delta}{\delta}.
\]
\end{lemma}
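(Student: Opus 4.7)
The plan is to reduce the $d$-dimensional event to a union bound over one-dimensional events, one for each coordinate axis, and then bound each coordinate event by a direct calculation using the uniform distribution of the shift.

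First, I would observe that two points $x,y \in \RR^d$ satisfy $g_{\delta,z}(x) \neq g_{\delta,z}(y)$ if and only if there exists some coordinate $i \in \{1,\ldots,d\}$ with $h_{\delta,z}(x_i) \neq h_{\delta,z}(y_i)$. Hence the event in the lemma is equivalent to the existence of \emph{some} coordinate $i$ on which the projection of $X$ is not contained in a single $1$-dimensional cell of the grid $\{[j\delta + z, (j+1)\delta + z) : j \in \ZZ\}$. By the union bound, it therefore suffices to bound, for each fixed coordinate $i$, the probability that the projected set $\pi_i(X) := \{x_i : x \in X\}$ intersects two consecutive cells.

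Next, for each fixed $i$, I would use the assumption $\mathrm{diam}(X) \leq \Delta$, which gives $\mathrm{diam}(\pi_i(X)) \leq \Delta$ as projection is $1$-Lipschitz. Let $a_i = \inf \pi_i(X)$ and $b_i = \sup \pi_i(X)$, so $b_i - a_i \leq \Delta$. The set $\pi_i(X)$ is contained in a single cell of $h_{\delta,z}$ exactly when no integer multiple of $\delta$ lies in $(a_i - z, b_i - z]$, i.e.\ when $z \bmod \delta$ avoids an interval of length $b_i - a_i \leq \Delta$ (shifted appropriately) inside $[0,\delta]$. Since $z$ is uniform on $[0,\delta]$, the probability of falling into this bad interval is at most $\Delta/\delta$.

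Finally, I would combine the $d$ coordinate bounds via the union bound to conclude
\[
\Pr_z\!\left[\exists x,y \in X: g_{\delta,z}(x) \neq g_{\delta,z}(y)\right] \;\leq\; \sum_{i=1}^d \Pr_z\!\left[\exists x,y \in X: h_{\delta,z}(x_i) \neq h_{\delta,z}(y_i)\right] \;\leq\; \frac{d\Delta}{\delta}.
\]
There is no real obstacle here; the only mild subtlety is that $X$ may be infinite, so one should phrase the one-dimensional event in terms of $\inf \pi_i(X)$ and $\sup \pi_i(X)$ rather than an explicit pair of points, but this is routine.
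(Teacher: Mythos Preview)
Your proof is correct and follows essentially the same route as the paper: reduce to a one-dimensional bound of $\Delta/\delta$ per coordinate and then apply a union bound over the $d$ coordinates. You are slightly more careful than the paper in handling possibly infinite $X$ via $\inf\pi_i(X)$ and $\sup\pi_i(X)$, but the argument is otherwise identical.
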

\begin{proof}
 Let $a,b\in \RR$ such that $|a-b|\leq \Delta$. Then, 
 \[
 \Pr_{z} \left[\left\lfloor \frac{a-z}{\delta} \right\rfloor \neq \left\lfloor \frac{b-z}{\delta} \right\rfloor  \right] \leq \frac{\Delta}{\delta}.
\]
Hence, by a union bound over all coordinates: 
\[
 \Pr_z\left[ \exists x\in X~\exists y \in X:~g_{\delta,z}(x) \neq g_{\delta,z}(y) \right] \leq \frac{d \Delta}{\delta}.
\]
\end{proof}
The same argument extends to $k$ sets of bounded diameter. 
\begin{lemma}\label{lemma:probksets}
Let $X_1,\ldots,X_k\subseteq \RR^d$ be $k$ sets such that $\forall i\in [k]:~diam(X_i)\leq \Delta$. 
 \[ 
 \Pr_z\left[\exists X_i~ \exists x\in X_i~\exists y \in X_i:~g_{\delta,z}(x) \neq g_{\delta,z}(y) \right] \leq \frac{dk\Delta}{\delta}.
\]
\end{lemma}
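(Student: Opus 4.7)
The plan is to reduce this directly to Lemma~\ref{lemma:probset} via a union bound over the $k$ sets. For each fixed $i \in [k]$, the set $X_i$ has diameter at most $\Delta$, so Lemma~\ref{lemma:probset} gives
\[
\Pr_z\left[\exists x \in X_i ~\exists y \in X_i:~ g_{\delta,z}(x) \neq g_{\delta,z}(y)\right] \leq \frac{d\Delta}{\delta}.
\]
The event in the statement of the lemma is exactly the union over $i \in [k]$ of the events above, one per $X_i$. Applying the union bound over these $k$ events yields the desired $\frac{dk\Delta}{\delta}$.

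There is no real obstacle here: the key observation is that the quantifier $\exists X_i$ in the event can be pulled outside as a disjunction of $k$ events, each of which is already controlled by the preceding lemma. The whole argument is two lines once Lemma~\ref{lemma:probset} is applied as a black box.
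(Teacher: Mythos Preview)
Your proposal is correct and matches the paper's own proof exactly: the paper simply states that the result follows from Lemma~\ref{lemma:probset} together with a union bound over all $k$ sets.
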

\begin{proof}
The statement holds by Lemma \ref{lemma:probset} and a union bound over all sets. 
\end{proof}

The following lemma indicates that the optimal traversal between two polygonal curves $p\in \XX_m^d$ and $q\in \XX_k^d$, $k\leq m$, can be viewed as a matching between $V(p)$ and $V(q)$ with at most $k$ disconnected components.

\begin{lemma}[Lemma 3 \cite{Driemel-lshc-17}]
\label{lemma:kcomponents}
 For any two curves $p\in \XX_{m_1}^d$ and $q \in \XX_{m_2}^d$, there always exists
an optimal traversal $T$ with the following two properties:
\begin{itemize}
 \item[(i)] $T$ consists of at most $k = \min\{m_1, m_2\}$ disconnected components.
 \item[(ii)] Each component is a star, i.e., all edges of this component share a common vertex.
\end{itemize}

\end{lemma}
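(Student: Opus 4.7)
The plan is to view any traversal $T$ as a bipartite multigraph $H_T$ on vertex set $V(p) \cup V(q)$ with one edge $\{p_{i_u}, q_{j_u}\}$ per pair $(i_u, j_u) \in T$; the ``components'' referred to in the lemma are the connected components of $H_T$. Without loss of generality I would assume $k = m_2 \le m_1$. Start from any optimal traversal, and replace it if necessary by an optimal traversal $T^\ast = (i_1, j_1), \ldots, (i_t, j_t)$ that additionally minimizes the number of pairs $t$. Classify each move $(i_u, j_u) \to (i_{u+1}, j_{u+1})$ as \emph{horizontal} (H: only $j$ advances), \emph{vertical} (V: only $i$ advances), or \emph{diagonal} (D: both advance). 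The key observation is that two consecutive pairs share a vertex in $H_T$ iff the connecting move is H or V, whereas a D move contributes two edges sharing no endpoint.

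The main structural step is \emph{corner elimination}. Suppose $T^\ast$ contains three consecutive pairs of the form $(i,j), (i+1,j), (i+1,j+1)$ (a V\,H corner) or $(i,j), (i,j+1), (i+1,j+1)$ (an H\,V corner). Deleting the middle pair still yields a sequence satisfying conditions (i)--(iii) of Definition~\ref{Ddist}, since the direct step from $(i,j)$ to $(i+1,j+1)$ is a legal D move, and the resulting traversal has strictly fewer pairs. Because the cost of a traversal is a maximum over its pairs, removing a pair cannot increase the cost. This contradicts the choice of $T^\ast$ as having minimum $t$ among optimal traversals. Hence in $T^\ast$ an H move is never immediately followed by, nor preceded by, a V move: every maximal block of non-diagonal moves is purely H or purely V.

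The star structure (ii) now follows. Every D move in $T^\ast$ is a ``cut'' in $H_T$: the pair just before and the pair just after a D move share no vertex. Thus $H_T$ decomposes into connected pieces, one per maximal segment of $T^\ast$ delimited by D moves. In such a segment, either every move is H and all pairs share the same $p_i$, producing a star centered at $p_i$; or every move is V and all pairs share the same $q_j$, giving a star centered at $q_j$; or the segment consists of a single pair, which is a trivial star. For (i), since $H_T$ is bipartite, every connected component contains at least one $q$-vertex, and distinct components use disjoint subsets of $V(q)$; therefore the number of components is at most $|V(q)| = m_2 = k$. (The symmetric argument with $p$-vertices gives $\le m_1$, confirming the bound $\min\{m_1, m_2\}$.)

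The only routine checks are that corner elimination terminates (immediate, since each step strictly decreases $t$) and that the replacement preserves the boundary conditions $i_1 = j_1 = 1$, $i_t = m_1$, $j_t = m_2$ (immediate, since the first and last pairs of $T^\ast$ are never the middle element of a three-pair window). I expect the main conceptual obstacle to be simply formalizing the correspondence between traversal moves and edge-sharing in $H_T$; once that dictionary is fixed, the argument is short and transparent.
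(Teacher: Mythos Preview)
The paper does not actually prove this lemma; it is quoted verbatim as Lemma~3 of \cite{Driemel-lshc-17} and used as a black box. So there is no ``paper's own proof'' to compare against here.

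Your argument is correct and is essentially the standard proof. The corner-elimination step is the right idea: taking an optimal traversal of minimum length forces every maximal run of non-diagonal moves to be purely horizontal or purely vertical, which gives the star structure. Your justification that diagonal moves genuinely separate components is also sound, since both indices are monotone and a diagonal move strictly increases both, so no vertex on one side of a D move can reappear on the other side. The count in (i) via ``each component contains at least one $q$-vertex and components are vertex-disjoint'' is clean.

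One cosmetic remark: you don't really need the bipartite-multigraph formalism $H_T$ as a separate object. Once you have that each maximal segment between D moves is a pure-H or pure-V run, you can just say directly that such a run is a star (all pairs share $p_i$, resp.\ $q_j$), and count the runs by noting that $j$ strictly increases across each D move and across each pure-H run, so the number of runs is at most $m_2$. But what you wrote is fine.
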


This allows us to bound the probability of two polygonal curves hitting the same sequence of cells.  

\begin{lemma}
\label{lemma:probfail}
  For any two curves $p \in \XX_m^d$ and $q \in \XX_k^d$, let $X_1^T,\ldots,X_l^T$ be a sequence of subsets of $V(p)\cup V(q)$, where $X_i^T$  denotes the $i$th disconnected component of an optimal traversal $T$. If $\d_{dF}(p,q)\leq 1$, then for $\delta= 4dk$:
  \[
 \Pr_z\left[\exists i \in[d]~ \exists x\in X_i~ \exists y \in X_i:~g_{\delta,z}(x) \neq g_{\delta,z}(y) \right] \leq \frac{1}{2}.
  \]
\end{lemma}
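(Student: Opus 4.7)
\medskip

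\noindent\textbf{Proof plan.} The plan is to reduce the claim directly to Lemma~\ref{lemma:probksets} by bounding (a) the number of components and (b) the diameter of each component. By Lemma~\ref{lemma:kcomponents} I may assume the optimal traversal $T$ has at most $l \le k = \min\{m,k\}$ disconnected components, each of which is a star. This already caps the number of subsets $X_1^T,\ldots,X_l^T$ I need to union-bound over.

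Next, I would bound $\operatorname{diam}(X_i^T)$. Fix a component $X_i^T$ and let $c_i$ be its star center (the common endpoint of all edges of this component, guaranteed by Lemma~\ref{lemma:kcomponents}(ii)). Every other vertex $x \in X_i^T$ is paired with $c_i$ by the traversal $T$, hence by definition of the cost of a traversal, $\|x - c_i\| \le \d_{dF}(p,q) \le 1$. The triangle inequality then gives $\|x - y\| \le 2$ for any $x,y \in X_i^T$, so $\operatorname{diam}(X_i^T) \le 2$.

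Finally, I would apply Lemma~\ref{lemma:probksets} with parameter $\Delta = 2$, with $l \le k$ sets (extending to exactly $k$ sets by duplicating does no harm to the union bound), and with $\delta = 4dk$. This yields
\[
\Pr_z\!\left[\exists i~ \exists x,y \in X_i^T:~g_{\delta,z}(x) \neq g_{\delta,z}(y)\right] \;\le\; \frac{dk \cdot 2}{4dk} \;=\; \frac{1}{2},
\]
which is exactly the desired bound.

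\medskip
\noindent\textbf{Anticipated obstacle.} There is essentially no hard step; the only subtlety is justifying $\operatorname{diam}(X_i^T) \le 2$, which crucially relies on the star structure granted by Lemma~\ref{lemma:kcomponents}(ii). Without that structural property, two vertices in the same component could in principle be connected through a long chain of traversal edges, giving only $\operatorname{diam}(X_i^T) \le 2\,|X_i^T|$, which would force a much larger $\delta$ and spoil the $O(dk)$ bound. Once the star property is invoked, the rest is a direct application of the previous two lemmas.
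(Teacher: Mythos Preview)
Your proposal is correct and follows essentially the same approach as the paper: the paper's proof is a one-liner invoking Lemma~\ref{lemma:probksets} together with the fact that $\operatorname{diam}(X_i^T)\le 2$, and you have simply filled in the justification for that diameter bound via the star structure of Lemma~\ref{lemma:kcomponents}. There is nothing to add.
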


\begin{proof}
 Lemma \ref{lemma:probksets}, and the fact that for any $i\in [k]$ $diam(X_i^T)\leq 2$, imply the result. 
\end{proof}

Hence, by a union bound, we  bound the probability of splitting one of the $k$ disconnected components with a random partition induced by a randomly shifted grid with side-length $\Theta(kd)$. Furthermore, we can precompute and store answers for polygonal curves realized by the grid points of a refined grid of side-length $\Theta(\epsilon/\sqrt{d})$, and use this information to answer any query, after snapping its vertices to the grid.
\thmANNdfdhd*
\begin{proof}
For any vector $x=(x_1,\ldots,x_d)$, we define the random function 
\[g_{\delta,z}(x)=\left(\left\lfloor \frac{x_1-z}{\delta}\right\rfloor,\ldots,\left\lfloor \frac{x_d-z}{\delta}\right\rfloor\right),\]
where $z$ is a random variable following the uniform distribution in $[0,\delta]$, and $\delta=2dk$. 
Let $\mathcal{G}_w^d$ be the canonical uniform grid of sidelength $w$ in $\RR^d$, and 
let  
\[g_{w,\cdot}(x)=\left(w \cdot \left \lfloor \frac{x_1}{w}\right\rfloor,\ldots,w \cdot \left\lfloor \frac{x_d}{w}\right\rfloor\right),\]
be the function which snaps points to $\mathcal{G}_w^d$.
We set 
$w=\eps/ (2\sqrt{d})$.  A natural ordering of the grid points is the lexicographical ordering with respect to their id vectors: for each grid point $(a_1,\ldots,a_d) \in \mathcal{G}_w^d$, its id vector is the vector of integers $(\frac{a_1}{w},\ldots,\frac{a_d}{w})$.

Now, the preprocessing algorithm first samples $z\in[0,\delta]$. Then, the algorithm proceeds as follows:
\begin{itemize}
 \item[(a)] Input: $n$ polygonal curves $P \subset \XX_m^d$.
 \item[(b)] For each curve $p\in P$, assign a key vector $f(p)\in\ZZ^k$ which is defined by the sequence of cells induced by $g_{\delta,z}$, which are stabbed by $p$. The curves which stab more than $k$ cells are not stored.  If the number of stabbed cells is less than $k$, then for the last coordinates we use a special character indicating emptiness.  
 \item[(c)] Store curves in a hashtable: each bucket corresponds to a key vector (as described in (b)).
 \item[(d)] 
 For each key vector in $f(P)$ which corresponds to some sequence of cells $C_1,\ldots,C_t$:  
 \begin{itemize}
     \item for each query representative of complexity $k$ which is defined by points in 
 $ C_1\cap \mathcal{G}_w^d,\ldots,C_t \cap \mathcal{G}_w^d $ (and respects the ordering of cells $C_i$): find the index of some near neighbor within distance $1+\epsilon/2$ or determine that there is none,
     \item store the {answers} (either an index or a special character indicating that there is no near neighbor) in a new hashtable: one new hashtable per bucket of step (c).
 \end{itemize} 
 \end{itemize} 

The query algorithm:
\begin{itemize}
\item[(i)] Input: query curve $q\in \XX_k^d$.
 \item[(ii)] Hash the curve twice: first by $g_{\delta,z}(\cdot)$, and then by $g_{w,\cdot}(\cdot)$. Report the precomputed answer.  
\end{itemize}

\textit{Storage.}
We use perfect hashing to store the curves. There are at most $n$ non-empty buckets which contain curves. For each such bucket, we precompute and store approximate answers for 
all possible queries. The number of query representatives which are compatible with a given sequence of $k$ cells is upper bounded by the number of $k$-choices with repetition from a set of $k\cdot \left( 4d^{3/2}k \eps^{-1}\right)^d$ grid points. This implies a bound of 
$\left(k\cdot \left( 4d^{3/2}k \eps^{-1}\right)^d \right)^k$, which can be slightly improved if we take into account the ordering constraint: 

\[ \sum_{\substack{t_1+\ldots +t_k=k \\ \forall i:~ t_i\geq 0 \\t_1\geq 1,t_k\geq 1 }} \prod_{i=1}^{k} \left(\frac{4d^{3/2}k}{\eps}\right)^{t_i d} \leq 
\sum_{\substack{t_1+\ldots +t_k=k \\ \forall i:~ t_i\geq 0}}
\left(\frac{4d^{3/2}k}{\eps}\right)^{kd}={{2k-1}\choose{k}}\cdot \left(\frac{4d^{3/2}k}{\eps}\right)^{kd} \leq \left(\frac{16d^{3/2}k}{\eps}\right)^{kd}.
\]

Hence there are $n \cdot \mathcal{O}(d^{3/2}k\eps^{-1})^{kd}$ indices to store. Indices refer to the input set of polygonal curves which are stored in $\mathcal{O}(dnm)$. 

\textit{Preprocessing time.} For each data curve, we compute the real distance to all query representatives. Hence, the total preprocessing time is $dnmk\cdot \mathcal{O}\left(\frac{kd^{3/2}}{\eps}\right)^{kd}$. 

\textit{Query time.}
$\mathcal{O}(kd)$ because of perfect hashing.

\textit{Correctness.}
By Lemma \ref{lemma:probfail}, we have that if $\d_{dF}(p,q)\leq 1$, then $p,q$ lie at the same bucket with probability $\geq 1/2$. Now, let any two points $x,y\in \RR^d$, and let $x'$ be the image of $x$ in $G_{\eps/2\sqrt{d}}$. If $\|x-y\|_2 \leq 1 $, then $\|x'-y \|_2 \leq \|x-x'\|_2+\|x-y\|_2 \leq 1+\eps/2$. Similarly, If $\|x-y\|_2 > 1+\eps $ then $\|x-y\|_2 > 1+\eps/2$.
\end{proof}

One may notice that the above data structure requires limited randomness. In fact, there is only one random variable which is used for the randomly shifted grid. As a consequence, the data structure can be easily derandomized, as it is shown in the following theorem.

\thmANNddfdhd*

\begin{proof}
The data structure is very similar to the one of Theorem \ref{TannDFDhd}. We give full details for the sake of completeness. $
 \mathcal{G}_w^d$ denotes the canonical uniform grid of  sidelength $w$
 and we set $w=\eps/ (2\sqrt{d})$ and $\delta=2dk$. Functions $g_{\delta,z}$ and $g_{w,\cdot}$ are defined as in the proof of Theorem \ref{TannDFDhd}. 

 First we snap all points to a grid with side-length $\eps/\sqrt{d}$. By Observation \ref{obs:b}, we can see that this grid snapping only  introduces an additive error of $\pm\eps$ to the discrete Fr\'echet distance of any two curves. Now we proceed as in the  proof of Theorem \ref{TannDFDhd} and we employ a shifted grid with parameters $w,z$. 
 However, instead of using a random value for $z$, we notice that all coordinates are multiples of $\eps/\sqrt{d}$, and since $z \in [0,\delta]$, there are  at most $ \delta \sqrt{d}/\eps+1 = \mathcal{O}(d^{3/2}k/ \eps) $ critical values for $z$. Hence instead of applying a randomly shifted grid, we build several shifted grids; one for each leading to a unique outcome. 
 
 We now give a description of the overall data structure. 
 
 The preprocessing algorithm:
 \begin{enumerate}
 \item Input: $n$ polygonal curves $P \subset \XX_m^d$. 
 \item Snap points to a grid of side-length $\eps/\sqrt{d}$. Let $P'$ be the new set of points.
 \item For each $z = 0,\frac{\eps}{\sqrt{d}},\frac{2\eps}{\sqrt{d}},\ldots, \lfloor\delta \rfloor$:
\begin{enumerate}
     \item For each curve $p\in P'$, assign a key vector $f(p)\in\ZZ^k$ which is defined by the sequence of cells induced by $g_{\delta,z}$ which are stabbed by $p$. The curves which stab more than $k$ cells are not stored.  If the number of stabbed cells is less than $k$, then for the last coordinates we use a special character indicating emptiness.  
 \item Store key vectors $f(P')$ in lexicographical order.
 \item 
For each key vector in $f(P')$ which corresponds to some sequence of cells $C_1,\ldots,C_t$: 
\begin{itemize}
    \item for each query representative of complexity $k$ which is defined by points in 
 $ C_1\cap \mathcal{G}_w^d,\ldots,C_t \cap \mathcal{G}_w^d $ (and respects the ordering of cells $C_i$): find the index of some near neighbor within distance $1+\epsilon/2$ or determine that there is none,
    \item store the {answers} (either an index or a special character indicating that there is no near neighbor) in lexicographical order with respect to the concatenation of their associated id vectors obtained by $g_{w,\cdot}(\cdot)$. 

\end{itemize}
\end{enumerate}
 \end{enumerate}

The query algorithm:
\begin{enumerate}
\item[(i)] Input: query curve $q\in \XX_k^d$.
 \item[(ii)] Snap points to $\mathcal{G}_w^d$. \item[(iii)] For each $z = 0,\frac{\eps}{\sqrt{d}},\frac{2\eps}{\sqrt{d}},\ldots, \lfloor\delta \rfloor$: 
 \begin{enumerate}
     \item  Locate the corresponding key vector $f(q)$: first  evaluate $g_{w,z}()$ on the vertices of $q$ and then perform a binary search on the lexicographically ordered key vectors $f(P')$. 
     \item Locate the precomputed (approximate) answer: first  evaluate $g_{w,\cdot}()$ on the vertices of $q$ and then perform a binary search on the lexicographically ordered query representatives. 
 \end{enumerate}
\end{enumerate}
 
For the performance of our data structure, we rely on the proof of Theorem \ref{TannDFDhd}. The main two differences to Theorem \ref{TannDFDhd} is the choice of $z$ and the use of binary search for locating the index (instead of hashing). Hence,  space, preprocessing  and query time admit a multiplicative overhead of $\mathcal{O}(d^{3/2}k/ \eps)$, which is the number of all critical values of $z$, and query time is also affected by steps (iii)-(a) which  cost $O(\log n)$ time and step (iii)-(b) which costs $O(kd \log (kd/\eps)) $ by the bound on the number of query representatives in the proof of Theorem~\ref{TannDFDhd}.

The discretization step introduces an additive error of $\pm \eps$. Hence, the overall data structure solves the ANN problem with range parameter $r=1-\eps$ and approximation factor $(1+2\eps)/(1-\eps)$. This easily translates to a solution for the ANN problem with range parameter $r=1$ and approximation factor $1+\eps$: we first need to  scale the set of points by $1/(1-\eps)$ and then we use as approximation parameter a sufficiently small multiple of $\eps$. In particular, we can run the above algorithm with approximation parameter $\eps'=\eps/4$, since $\frac{1+2\eps'}{1-\eps'}=\frac{1+\eps/2}{1-\eps/4} \leq 1+\eps$, which only affects our complexity bounds by constant factors.

\end{proof}

\subsection{ANN for short query curves in doubling spaces}
\label{subsection:anndoubling}

In this subsection, we extend our results to metric spaces with bounded doubling dimension. We consider an arbitrary metric space $(\MMM,\d_{\MMM})$.  
Note that for any finite set $X\subset \MMM$, $\lambda_X \leq \lambda_{\MMM}$. 
We present two data structures for the ANN problem of polygonal curves in arbitrary doubling metric spaces, under the discrete Fr\'{e}chet distance. The dataset consists of curves in $\MMM^m$ and queries belong to $\MMM^k$.  The first data structure achieves $\mathcal{O}(k)$ approximation in the black-box model when the doubling dimension is constant, and the second one achieves $(1+\epsilon)$ approximation in the weakly explicit model.

\subsubsection{Preliminaries on net hierarchies}
We now introduce the main algorithmic tool of this section. Our data structure is based on the notion of net-trees. 
\begin{definition}[Net-tree \cite{HM06}]\label{def:nettrees}
 Let $P\subset \MMM$ be a finite set. A {\em net-tree} of $P$ is a tree $T$ whose set of leaves
is $P$. We denote by $P_v \subseteq  P$ the set of leaves in the subtree rooted at a vertex $v \in T$. Associate with each vertex $v$ a point $rep_v \in P_v$. Internal vertices have at least two children. Each vertex $v$ has a level
$\ell(v) \in \ZZ \cup \{-\infty\}$. The levels satisfy $\ell(v) < \ell(\overline{p}(v))$, where $\overline{p}(v)$ is the parent of $v$ in $T$. The levels of
the leaves are $-\infty$. Let $\tau$ be some large enough constant, say $\tau = 11$.
We require the following properties from $T$:
\begin{itemize}
    \item {\em Covering property:} For every vertex $v\in T$: 
    \[P_v \subset b_{\MMM} \left(rep_v ,\frac{2\tau}{\tau-1} \cdot \tau^{\ell(v)} \right). \]
    \item {\em Packing property:} For every vertex $v \in T$ which is not the root,
    \[ b_{\MMM}\left(rep_v,  \frac{\tau-5}{2(\tau-1)} \cdot \tau^{\ell(\overline{p}(v))-1}\right) \cap P \subset P_v   .\]
    \item {\em Inheritance property:} For every  vertex $u \in T$ which is not a leaf, there exists a child $v \in T$ of $u$ such that $rep_u = rep_v$.
\end{itemize}
\end{definition}

\begin{theorem}[Theorem 3.1 \cite{HM06}]\label{theorem:nettrees}
Given a set $P$ of $n$ points in $\MMM$, one can construct a net-tree for $P$ in  $\lambda_P^{\mathcal{O}(1)} n \log n$ expected time.
\end{theorem}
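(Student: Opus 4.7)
The plan is to build the net-tree bottom-up by maintaining a sequence of hierarchical nets $P = N_{-\infty} \supseteq \cdots \supseteq N_0 \supseteq N_1 \supseteq \cdots$ at geometrically increasing scales $\tau^i$, where each $N_i$ is a $\tau^i$-net of $N_{i-1}$ chosen greedily. The parent of a node $v$ at level $i-1$ is the unique point of $N_i$ that first covered $rep_v$ during greedy net construction; inheritance is enforced by letting each internal vertex $u$ keep its own leaf representative as a child. The covering property follows directly from the greedy net, the packing property from the standard volume argument (any ball of radius $\Theta(\tau^i)$ in a $\tau^i/2$-separated set contains $\lambda_P^{O(1)}$ points), and in particular the doubling assumption bounds the degree of each internal vertex by $\lambda_P^{O(1)}$.

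The main obstacle is achieving the $\lambda_P^{O(1)} n \log n$ running time. A naive implementation inserts points one by one and, for each, scans all existing net points across all $\Theta(\log \Phi)$ active levels, giving a bound that depends on the spread $\Phi$ and is easily quadratic. To remove this dependence, I would proceed in two phases. First, a \emph{bootstrap} phase constructs an approximate hierarchical cluster tree $T_0$ (essentially an ultrametric $2$-HST) on $P$ in $\lambda_P^{O(1)} n \log n$ expected time using randomized hierarchical partitioning: at each scale, only $\lambda_P^{O(1)}$ cells can lie near a given cell, so the total work of refining cells across all scales telescopes. Second, a \emph{refinement} phase uses $T_0$ as a search index to support an approximate nearest-neighbor query among currently existing net points at any desired scale in $\lambda_P^{O(1)} \log n$ expected time, and re-inserts the points in the correct order to build the genuine net-tree of Definition~\ref{def:nettrees}.

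The hard part is the bootstrap: we need a coarse hierarchy without already having the net-tree, and the cost must be linear in $n$ modulo polylogarithmic factors, while the spread $\Phi$ could be arbitrarily large. The idea is to first identify the $O(n)$ ``active'' scales---those containing an actual split in a minimum-spanning-tree-like decomposition---using a randomized incremental algorithm that charges each comparison to a pair of points at a fixed doubling-neighbor relation, of which there are only $\lambda_P^{O(1)} n$ across all scales in expectation. Once the set of active scales is computed and stored in sorted order, one only walks through $O(\log n)$ scales per point during the refinement phase, giving the claimed $\lambda_P^{O(1)} n \log n$ bound. Verifying the tighter packing constants demanded by $\tau = 11$ then reduces to local regrouping of children, which costs only $\lambda_P^{O(1)}$ per vertex and does not affect the asymptotic running time.
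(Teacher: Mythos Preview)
The paper does not prove this statement at all: Theorem~\ref{theorem:nettrees} is quoted verbatim from Har-Peled and Mendel~\cite{HM06} (their Theorem~3.1) and is used here purely as a black box to build the pruned net-tree and the range-search structure in Section~\ref{subsection:anndoubling}. There is therefore no ``paper's own proof'' against which to compare your proposal.

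That said, your sketch is a fair high-level summary of the original construction in~\cite{HM06}: the two-phase approach of first building a coarse approximate HST and then refining it into a genuine net-tree, together with the observation that only $O(n)$ scales are active so that the spread drops out of the running time, is exactly the outline of their algorithm. If this were a self-contained proof you would still owe the reader the actual analysis of the bootstrap phase (why the randomized incremental charging gives $\lambda_P^{O(1)} n$ comparisons in expectation is the technical heart of~\cite{HM06} and is not something one can wave through), but as a proof \emph{plan} it is on the right track and matches the cited source.
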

Enhancing the net-tree so that it supports several auxiliary operations leads to the following theorem. 
\begin{theorem}[Theorem 4.4 \cite{HM06}] \label{theorem:annddimpoints}
 Given a set $P$ of $n$ points  in a metric space $\MMM$, one
can construct a data-structure for answering ANN queries (where the quality parameter $\eps$ is provided together with the query). The query time is $\lambda_P^{\mathcal{O}(1)} \log n + \eps^{-\mathcal{O}(\log \lambda_P)}$, the expected preprocessing time is $\lambda_P^{\mathcal{O}(1)}n \log n$, and the space used is $\lambda_P^{\mathcal{O}(1)}n$.
\end{theorem}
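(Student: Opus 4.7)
The plan is to build upon the net-tree given by Theorem~\ref{theorem:nettrees} and augment it with a navigation structure that lets a query descend the tree while tracking, at each level, a small ``active'' set of vertices whose subtrees could still contain the nearest neighbor. Concretely, for every internal vertex $v$ I would precompute a short list of neighbors at the same level: those vertices $u$ with $\dm(rep_v, rep_u) \leq c\,\tau^{\ell(v)}$ for a suitable constant $c$. The packing property in Definition~\ref{def:nettrees} guarantees that representatives at level $\ell(v)$ are $\Omega(\tau^{\ell(v)})$-separated within any ball of radius $O(\tau^{\ell(v)})$, so a standard doubling-dimension packing argument caps the length of each such list by $\lambda_P^{\mathcal{O}(1)}$. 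Hence the augmentation adds only $\lambda_P^{\mathcal{O}(1)} n$ total space and can be filled by a single traversal of the tree in expected time $\lambda_P^{\mathcal{O}(1)} n \log n$, matching the preprocessing bound of Theorem~\ref{theorem:nettrees}.

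Given a query $q$, the search proceeds in two phases. In the first phase I would compute a constant-factor approximation $R$ to $\dm(q,P)$ by descending the tree level by level: starting at the root and using neighbor lists, I maintain the set of currently active vertices whose covering balls intersect a ball around $q$ of radius proportional to the current level. The covering/packing properties together with the doubling property keep the active set of size $\lambda_P^{\mathcal{O}(1)}$ at every level, so the descent takes $\lambda_P^{\mathcal{O}(1)} \log n$ time and returns a point of $P$ whose distance to $q$ is within a constant factor of $\dm(q,P)$. In the second phase I refine using $\eps$: I use the net-tree to enumerate all points of $P$ within distance $(1+\eps) R$ of $q$. By the doubling dimension $d_P = \log \lambda_P$, the number of such points is at most $(1/\eps)^{\mathcal{O}(\log \lambda_P)} = \eps^{-\mathcal{O}(\log \lambda_P)}$, and each can be examined in $\mathcal{O}(1)$ time after reaching the appropriate subtree through the net-tree, yielding the claimed additive query-time term.

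The main obstacle is the first phase, where I must prove the invariant that the active set has size $\lambda_P^{\mathcal{O}(1)}$ at every level and that the neighbor lists are ``rich enough'': when I expand an active vertex to its children, every child that could still be relevant at the next level must be either a descendant of a currently active vertex or reachable via a neighbor pointer from one. Choosing the neighbor-radius constant $c$ slightly larger than the covering radius $\tfrac{2\tau}{\tau-1}$ and then applying the packing property of the finer level lets one verify exactly this, while keeping the neighbor lists of bounded size. Once this invariant is established, the two-phase analysis yields all three bounds of the theorem: space $\lambda_P^{\mathcal{O}(1)} n$, expected preprocessing time $\lambda_P^{\mathcal{O}(1)} n \log n$, and query time $\lambda_P^{\mathcal{O}(1)} \log n + \eps^{-\mathcal{O}(\log \lambda_P)}$.
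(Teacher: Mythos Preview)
This theorem is not proved in the paper at all: it is quoted verbatim as Theorem~4.4 of Har-Peled and Mendel~\cite{HM06} and used as a black box. There is therefore no ``paper's own proof'' to compare against. Your sketch is, in fact, a reasonable outline of the original Har-Peled--Mendel argument (net-tree augmented with bounded-size cross links at each level, a top-down descent maintaining a $\lambda_P^{\mathcal{O}(1)}$-size active set to get a constant-factor approximation, followed by refinement at the $\eps$-scale), so in spirit you are reproducing the cited proof rather than offering an alternative.

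One small imprecision worth flagging: in your second phase you say you ``enumerate all points of $P$ within distance $(1+\eps)R$ of $q$'' and bound their number by $\eps^{-\mathcal{O}(\log \lambda_P)}$. That cardinality bound is false for the raw point set (arbitrarily many input points may cluster near $q$); what is bounded is the number of \emph{net-tree vertices} at the level whose covering radius is $\Theta(\eps R)$ that lie within $\mathcal{O}(R)$ of $q$, via the packing property. Returning the representative of the best such vertex is what gives the $(1+\eps)$-approximation. With that correction your outline matches the source argument.
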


\begin{definition}[Pruned net-tree]
Given some pruning parameter $w>0$, we define the pruned net-tree to be a net-tree as in Definition \ref{def:nettrees} which is pruned as follows: for any $v\in T$ such that $P_v \subset b_{\MMM}\left( rep_v, w \right)$, we delete all points in $P_v$, except for $rep_v$ which remains as the single leaf of $v$. 
\end{definition}
We present a data structure for the range search problem on nets, which is entirely based on \cite{HM06}. We note that in order to keep the presentation simple, we make use of the main results there in a black-box manner, but a more straightforward solution is likely attainable.

\begin{theorem}
\label{theorem:doublingrange}

Let $X\subset \MMM$, where $(\MMM,\d_{\MMM})$ is a metric space, and $X$ is the set of $n$ leaves in a pruned net-tree $T$ with pruning parameter $w$ (i.e.\ $X$ is a $\Omega(w)$-net). 
There exists a data structure with input $X$ which supports the following type of range queries:
\begin{itemize}
    \item given $q\in \MMM$, $r>0$, report $b_{\MMM}(q,r)\cap X $.
\end{itemize}
The expected preprocessing time is $\lambda_X^{\mathcal{O}(1)} n \log n$, the space consumption is $\lambda_X^{\mathcal{O}(1)} n$ and the query time is $\lambda_X^{\mathcal{O}(1)}\log n +\lambda_X^{\mathcal{O}(\log (r/w))}$.
\end{theorem}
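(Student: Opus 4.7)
The plan is to combine the two Har-Peled--Mendel building blocks already available in the excerpt: the net-tree construction of Theorem~\ref{theorem:nettrees} and the ANN data structure of Theorem~\ref{theorem:annddimpoints}. In preprocessing I build both on the input set $X$, which fits within $\lambda_X^{\mathcal{O}(1)} n \log n$ expected time and $\lambda_X^{\mathcal{O}(1)} n$ space. Since $X$ is itself the leaf set of a pruned net-tree with parameter $w$, the packing property already implies that $X$ is an $\Omega(w)$-net, which is what makes the final output size bounded.

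To answer a range query $(q,r)$ I proceed in three phases. In phase one I query the ANN structure on $q$ with a fixed constant quality parameter, obtaining a point $p^\star \in X$ whose distance to $q$ is within a constant factor of $d_\MMM(q,X)$, in time $\lambda_X^{\mathcal{O}(1)} \log n$. If $d_\MMM(q,p^\star)$ exceeds a suitable constant multiple of $r$, then $b_\MMM(q,r)\cap X = \emptyset$ and I return the empty set. Otherwise, in phase two I walk from the leaf of $p^\star$ up the net-tree to its ancestor $v^\star$ at the smallest level $\ell^\star$ such that, combining $d_\MMM(q,rep_{v^\star}) \le d_\MMM(q,p^\star) + \mathrm{diam}(P_{v^\star})$ with the packing property at $\overline{p}(v^\star)$, one obtains $b_\MMM(q,r)\cap X \subseteq P_{v^\star}$. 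A direct calculation shows this requires $\tau^{\ell^\star} = \Theta(r)$, so the walk visits $\mathcal{O}(\log n)$ ancestors at most.

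In phase three I traverse the subtree rooted at $v^\star$ top-down, discarding any child $u$ at level $\ell$ whenever $d_\MMM(q,rep_u) > r + \tfrac{2\tau}{\tau-1}\tau^{\ell}$. The covering property guarantees that no retained ancestor of a point in $X \cap b_\MMM(q,r)$ is ever pruned. The packing property (combined with the definition of the doubling constant applied to $\{rep_u\}$) shows that the number of surviving children of any single node is $\lambda_X^{\mathcal{O}(1)}$. Since the tree is pruned at scale $w$, the downward traversal has effective depth $\mathcal{O}(\log_\tau(r/w))$, giving $\lambda_X^{\mathcal{O}(\log(r/w))}$ total explored nodes and matching the claimed query time $\lambda_X^{\mathcal{O}(1)}\log n + \lambda_X^{\mathcal{O}(\log(r/w))}$.

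The main obstacle I expect is the phase-two walk, since the HM06 net-tree is path-compressed and hence does not contain an explicit node at every integer level. To locate $v^\star$ correctly I either identify the compressed segment whose level interval straddles $\ell^\star$ and use its top endpoint, or I scan the $\mathcal{O}(\log n)$ ancestors of $p^\star$'s leaf until the first one of sufficiently high level is found. Once this bookkeeping is handled cleanly, the covering/packing inequalities drive the rest of the analysis and the stated bounds follow.
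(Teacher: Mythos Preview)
Your plan is close to the paper's: both first use Theorem~\ref{theorem:annddimpoints} to land on a point $q'\in X$ at distance $\mathcal{O}(r)$ from $q$, and then exploit the net-tree together with the $\Omega(w)$-packing of $X$ to enumerate $b_{\MMM}(q',\mathcal{O}(r))\cap X$ in time $\lambda_X^{\mathcal{O}(\log(r/w))}$. The paper carries out the second step differently, though: rather than walking up to a single containing ancestor $v^\star$ and then descending through its subtree, it invokes the auxiliary ``Rel'' lists of~\cite{HM06} (Section~3.5), which for any node $v$ directly return all nodes at roughly the same level within radius $\mathcal{O}(\tau^{\ell(v)})$. Starting from the leaf $q'$ one can thus reach all relevant subtrees horizontally, without ever needing your phase-two climb.

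That difference matters because your phase-two bound has a gap. You assert that the walk from the leaf of $p^\star$ up to $v^\star$ visits $\mathcal{O}(\log n)$ ancestors, but HM06 net-trees are only guaranteed to have $\mathcal{O}(n)$ nodes (internal vertices have $\geq 2$ children), not $\mathcal{O}(\log n)$ depth; a caterpillar-shaped net-tree on geometrically spaced points has a leaf-to-root path of length $\Theta(n)$. So as written, the $\lambda_X^{\mathcal{O}(1)}\log n$ query term is not justified. The fix is routine---store each leaf's ancestor chain as an array and binary-search it by level, or use the level-finding machinery already in~\cite{HM06}---and once that is patched your descent-with-pruning analysis in phase three goes through and recovers the stated bound.
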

\begin{proof}

We build a data structure as in Theorem \ref{theorem:annddimpoints}, and we are able to find a $2$-approximate nearest neighbor of $q$ in time $\lambda_X^{\mathcal{O}(1)}\log n$, with expected preprocessing time in $ \lambda_X^{\mathcal{O}(1)} n \log n$ and space in $\lambda_X^{\mathcal{O}(1)} n$. This point is denoted by $q'$. By the triangular inequality, it suffices to seek for the points of 
$b_{\MMM}(q,r)\cap X$ in $b_{\MMM}(q',3r)\cap X$. 

In order to perform a range query for a leaf $q'$, we invoke an auxiliary data structure from \cite{HM06} (see Section 3.5), which, for any query node $v$, allows us to find all 
points $U$ within radius $r'=\mathcal{O}(\tau^{\ell(v)})$ that are roughly at the same level, i.e\ $\forall u \in U:~\ell(u) \leq \ell(v) < \ell(\overline{p}(u)) $. This can be done by maintaining appropriate lists of size $\lambda_X^{\mathcal{O}(1)}$, while building the net-tree, and it does not affect asymptotically the construction of the net-tree.  By the packing property of pruned net-trees, we can retrieve all leaves within distance $\mathcal{O}(r)$ from $q'$ in time $\lambda_X^{\mathcal{O}(\log (r/w))}$. 
\end{proof}

\subsubsection{Preliminaries on random partitions}
\def\PPP{{\mathcal P}}
Given a finite metric space $(X,\d_{\MMM})$, a partition of $X$ is a set $\PPP$ of disjoint subsets of $X$, such that $\bigcup_{Y \in \PPP} Y = X$. We refer to these subsets as clusters. 

Our data structure is based on a  random partition method which is quite common in the literature, especially in  metric embeddings. 
We use this method in order to obtain a partition of the curves with the desired property that near curves probably belong to the same cluster. 
For any set $X$, $diam(X)$ denotes the diameter of $X$.

\def\III{{\mathcal I}}
\algorithmus{\texttt{partition}($X\subset \MMM$, $\Delta >0$)}{
\begin{itemize}
\setlength{\itemsep}{0.0cm} 
\item Set random permutation of $X$: $x_1,x_2, \ldots,x_n$.
  \item Set $C_0 \gets\emptyset$.
  \item Set ordered set $\PPP \gets  \emptyset$.
  \item Choose uniformly at random $R\in [\Delta/4,\Delta/2]$.
  \item {\bf For} $i=1 ,\ldots, n$:
  \begin{itemize}
        \item Set $C_i\gets \{p\in X \mid \d_{\MMM}(x_i,p)\leq R\} \cup C_{i-1}$, where $C_{i-1}\subseteq X$ is the set of covered points in the $(i-1)$th iteration. 
       \item Set $P_i\gets C_i \setminus C_{i-1}$.
          $\PPP \gets  \PPP \cup \{P_i\}$.
  \end{itemize}
  \item {\bf Return} the permutation  $x_1,x_2, \ldots,x_n$, and indices to corresponding clusters according to $\PPP$.\end{itemize}}
The following lemma describes the performance of the above partition scheme.   In other words, we consider cells: each cell is centered at some point $x_j$ and it is defined as the set  $b_{\MMM}(x_j,R) \setminus \left( \bigcup_{i<j} b_{\MMM}(x_i,R) \right)$.   
Typically, similar guarantees discussed in the literature concern only points participating in the procedure (e.g.\ Lemma 26.7 \cite{H11}), while we need to take into account a query point which is not known in advance. 
To that end, we include a proof for completeness. 

\begin{lemma}\label{lemma:metricpartition}
Let $(\MMM,\d_{\MMM})$ be a metric space, $X\subset \MMM$ a finite subset, and let $\PPP$ be the random partition generated by \texttt{partition}$(X,\Delta)$. For any $x\in X$, let $\PPP(x)$ be the cluster to which $x$ has been assigned. Then, the following statements hold:
\begin{itemize}
    \item For any $P\in \PPP$, $diam(P)\leq \Delta$.
    \item Let $q \in \MMM$ and let $x_j\in X$ be such that $j=\min\{i  \mid \d_{\MMM}(q,x_i)\leq R\}$. Then, if $b_{\MMM} ( q,t) \cap X  \neq \emptyset$ and $t\leq \Delta/8$,
    \[
     \Pr[b_{\MMM} ( q,t) \cap X  \not\subseteq \PPP(x_j)]\leq \frac{8t}{\Delta} \ln  \left(|b_{\MMM} ( q,\Delta)\cap X|\right).
    \]
\end{itemize}
\end{lemma}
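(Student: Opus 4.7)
The first statement is an immediate consequence of the triangle inequality: if $p_1, p_2 \in P_i$, both are within distance $R$ of $x_i$, so $\dm(p_1, p_2) \le 2R \le \Delta$.

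For the probability bound, I would follow the standard padded-random-partition analysis (in the spirit of CKR / FRT-style decomposition lemmas, compare e.g.\ Lemma 26.7 in \cite{H11}). The idea is to treat $q$ as a virtual point and analyze when the augmented set $S := \{q\} \cup (b_{\MMM}(q,t) \cap X)$ is split across clusters. Since $\dm(q,s) \le t$ for every $s \in S$, the triangle inequality implies that a center $x_i$ either covers every point of $S$ or none of it, unless $\dm(q, x_i) \in [R-t, R+t]$; call such a center \emph{ambiguous}. If the first center in the $\pi$-order that covers some element of $S$ is not ambiguous, it must cover all of $S$, and since $q$'s first cover is $x_j$ by definition, that center is $x_j$. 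In this event $b_{\MMM}(q,t)\cap X \subseteq \PPP(x_j)$.

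To bound the probability that the first covering center is ambiguous, I would enumerate the relevant centers by increasing distance to $q$ as $y_1, y_2, \ldots, y_L$. Using $t \le \Delta/8$ and $R \le \Delta/2$, any $y$ with $\dm(q,y) > \Delta$ satisfies $\dm(y, s) \ge \dm(q,y)-t > \Delta - \Delta/8 > \Delta/2 \ge R$ for every $s \in S$, so it cannot be the first center covering any element of $S$; hence one may take $L \le |b_{\MMM}(q,\Delta)\cap X|$. For each rank $k$, define $B_k$ to be the event that $y_k$ is ambiguous \emph{and} $y_k$ precedes $y_1, \ldots, y_{k-1}$ in $\pi$. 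I claim the cut event is contained in $\bigcup_k B_k$: if $y^* = y_k$ is the first ambiguous center producing the cut, then any closer $y_{k'}$ has $\dm(q, y_{k'}) \le \dm(q, y_k) \le R+t$ so $y_{k'}$ is itself either covering-all or ambiguous, and hence would have produced an earlier first cover had it preceded $y_k$ in $\pi$. The ambiguity event for $y_k$ requires $R$ to lie in an interval of length $2t$, occurring with probability at most $\tfrac{2t}{\Delta/4}=\tfrac{8t}{\Delta}$ under the uniform distribution of $R$, while independently the permutation event has probability $1/k$. A union bound therefore gives
\[
\Pr\bigl[b_{\MMM}(q,t)\cap X \not\subseteq \PPP(x_j)\bigr] \;\le\; \sum_{k=1}^{L} \frac{8t}{\Delta}\cdot\frac{1}{k} \;\le\; \frac{8t}{\Delta}\ln L.
\]

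The main obstacle I anticipate is cleanly justifying the containment of the cut event in $\bigcup_k B_k$: the key observation is that ambiguity of $y_k$ forces every closer $y_{k'}$ to lie in the potentially-covering set, so the first ambiguous center that actually produces a cut must be first among its closer counterparts in $\pi$. A secondary technical point is that the harmonic sum satisfies $H_L \le 1+\ln L$ rather than $\ln L$ exactly; this additive constant is harmless since the stated bound is trivial when $L$ is small and the subsequent application only needs the correct order of magnitude.
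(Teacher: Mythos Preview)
Your proposal is correct and follows essentially the same approach as the paper: sort the candidate centers by distance to $q$, bound the ambiguity event by $8t/\Delta$, bound the ``first in permutation'' event by $1/k$, and sum the harmonic series over the at most $|b_{\MMM}(q,\Delta)\cap X|$ relevant centers. Your write-up is in fact slightly more careful than the paper's---you explicitly flag the containment step as the delicate point and note the $H_L$ vs.\ $\ln L$ slack, both of which the paper glosses over.
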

\begin{proof}
Since $R\leq \Delta/2$, obviously $\forall P \in \PPP:~ diam(P)\leq \Delta$. 

Let $N=|b_{\MMM} ( q,\Delta)\cap X|$ and let $p_1,\ldots,p_m$ be the points in $b_{\MMM} ( q,\Delta)\cap X$ which are ordered in increasing distance from $q$. The probability that a certain point $p_i$ serves as the first center for a cluster that intersects (but does not include) $b_{\MMM} ( q,t)$ is upper bounded by the probability that $R\in [d_{\MMM}(p_i,q) -t,d_{\MMM}(p_i,q)+t]$ and $x_i$ appears before $p_1,\ldots,p_{i-1}$ in the permutation, since otherwise one of the previous clusters would have intersected (and possibly covered) $b_{\MMM} ( q,t)$. Formally,
\[
\Pr[\exists x \in X : b_{\MMM} ( q,t)   \cap \PPP(x) \neq \emptyset  \text{ and } b_{\MMM} ( q,t) \cap X  \not\subseteq \PPP(x)]\leq 
\]
\[ \leq 
\sum_{i=1}^{m} \Pr[R \in d_{\MMM}(p_i,q) \pm t] \cdot \frac{1}{i}\leq  \frac{8t}{\Delta} \ln N.
\]

Finally, since $b_{\MMM} ( q,t) \cap X  \neq \emptyset$ and $t\leq \Delta/8$, 
there exists at least one point which serves as a center for a cluster containing $b_{\MMM} ( q,t)$.

\end{proof}

\begin{lemma}\label{lemma:partition}
Given as input parameters $\Delta>0$, a pruned net-tree $T$ with pruning parameter $w$, where $X$ is the set of $n$ leaves in $T$, \texttt{partition}$(X,\Delta)$ can be implemented to run in 
$\lambda_X^{\mathcal{O}(1)} n\cdot \log n +n \cdot \lambda_X^{\mathcal{O}(\log (\Delta/w))}$ time. 
\end{lemma}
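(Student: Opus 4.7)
The plan is to implement the loop of \texttt{partition} by replacing the implicit scan of $X$ used to compute each $C_i$ with an explicit range query into the data structure of Theorem~\ref{theorem:doublingrange}. First, I would preprocess the pruned net-tree $T$ to build that range search structure on the leaf set $X$; by Theorem~\ref{theorem:doublingrange} this costs $\lambda_X^{\mathcal{O}(1)}\, n \log n$ expected time and $\lambda_X^{\mathcal{O}(1)}\, n$ space. Concurrently, I sample $R \in [\Delta/4, \Delta/2]$ and a uniformly random permutation $x_1, \dots, x_n$ of $X$ in $\mathcal{O}(n)$ time, and I allocate a Boolean array \emph{covered}$[1..n]$ initialized to \texttt{false}, representing the current $C_{i-1}$.

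Then I would carry out the main loop in the obvious way: in iteration $i$, issue a single range query for $b_{\MMM}(x_i, R) \cap X$; walk through the returned list and for each point $p$ that is not yet flagged, add $p$ to $P_i$ and set \emph{covered}$[p] \gets \texttt{true}$; finally append $P_i$ to $\PPP$. Correctness matches the pseudocode directly, since by construction $P_i = (b_{\MMM}(x_i, R) \cap X) \setminus C_{i-1} = C_i \setminus C_{i-1}$.

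For the running time, each range query has radius $R \le \Delta/2$, so Theorem~\ref{theorem:doublingrange} bounds its cost by $\lambda_X^{\mathcal{O}(1)} \log n + \lambda_X^{\mathcal{O}(\log(\Delta/w))}$. Summing over the $n$ iterations and adding the preprocessing gives
\[
\lambda_X^{\mathcal{O}(1)}\, n \log n \;+\; n \cdot \lambda_X^{\mathcal{O}(\log(\Delta/w))},
\]
as claimed. The filtering against \emph{covered} contributes only an extra $\mathcal{O}(1)$ per returned point, which is absorbed in the same bound since each query returns at most $\lambda_X^{\mathcal{O}(\log(\Delta/w))}$ points.

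The only potentially delicate point is bounding the sizes of the range-query outputs independently of $n$. This is not really an obstacle here because it is already baked into the query bound of Theorem~\ref{theorem:doublingrange}: the packing property of the pruned net-tree guarantees that $X$ is an $\Omega(w)$-net, so any ball of radius $R \le \Delta/2$ meets only $\lambda_X^{\mathcal{O}(\log(\Delta/w))}$ leaves. Everything else is routine bookkeeping.
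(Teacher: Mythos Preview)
Your proposal is correct and follows essentially the same approach as the paper: build the range-search structure of Theorem~\ref{theorem:doublingrange}, issue one range query per $x_i$, and mark newly covered points, summing the $\lambda_X^{\mathcal{O}(1)}\log n + \lambda_X^{\mathcal{O}(\log(\Delta/w))}$ per-query cost over $n$ iterations. You spell out the bookkeeping (the \emph{covered} array and the packing bound on output size) more explicitly than the paper does, but the argument is the same.
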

\begin{proof}
By Theorem \ref{theorem:doublingrange}, we can build a data structure which supports range queries: given a point $q\in \MMM$, $R\in [0, \Delta/2]$, we are able to report $\{x\in X \mid \d_{\MMM}(q,x)\leq R\}$ in time $\lambda_X^{\mathcal{O}(1)}\log n +\lambda_X^{\mathcal{O}(\log (R/w))}\leq \lambda_X^{\mathcal{O}(1)}\log n +\lambda_X^{\mathcal{O}(\log (\Delta/w))}$. Hence, for any point $x_i$, we cover and mark points which had not been covered before, and since we need to consider at most $n$ points, the total amount of time needed is $\lambda_X^{\mathcal{O}(1)} n\cdot \log n +n \cdot \lambda_X^{\mathcal{O}(\log (\Delta/w))}$. 

\end{proof}

\subsubsection{Putting everything together}
We extend the results of Section~\ref{subsection:ann:frechet} to handle curves in doubling metrics. We simulate the use of the (coarse) randomly shifted grid, by making use of \texttt{partition}. 
For a partition which is obtained by \texttt{partition} (actually for any partition), each polygonal curve in $\MMM^k$ stabs at most $k$ distinct cells. Using Theorem \ref{theorem:doublingrange}, we are able to build a data structure on the centers of the partition. Then, recovering the cell that some point belongs to, is easy: we perform a $\Delta$-range query for the given point and then we examine all  points inside this range. By the doubling dimension assumption and due to the packing property satisfied by the net points, we know that there are at most  $ \lambda_X^{\mathcal{O}(\log(\Delta/w))}$ points inside this range. One of the points in this range is the center of the cell: we only need to store one index per point which refers to the permutation used by 
\texttt{partition}. If we repeat for all  vertices of the query curve, we get a sequence of at most $k$ cells which corresponds to the unique key vector for this curve. 


\paragraph{The preprocessing algorithm.}
Let $r'$ be the ANN radius search parameter, and let $r:=4r'/3$.  
First, we build a pruned net-tree  on $X:=\bigcup_{p\in P}V(p)$.  Then, we transform it to a pruned net-tree $T$ with pruning parameter $w:=r/4$, 
by visiting at most all nodes and checking which ones should be deleted.  
We build the data structure of Theorem \ref{theorem:doublingrange}
and we run the algorithm of Lemma \ref{lemma:partition} with input $X$, and 
\[{\Delta = 100  \cdot r \cdot (k \log \lambda_X) \log \left( k \log \lambda_X\right)  }.\] 

The output consists of an ordered set of points and the partition. 

We store $P$ in a hashtable as follows. First we compute one vector of indices per curve indicating the corresponding cells.  If one polygonal curve stabs more than $k$ cells, we discard it. If it stabs less than $k$ cells, we use a special character for the remaining coordinates. The polygonal curves are then stored in a hashtable: each bucket is assigned to a key vector of dimension $k$. Any non-empty bucket corresponds to at most $ k$ cells, of diameter $\leq \Delta$.

\paragraph{The query algorithm.}
For any query $q\in \MMM^k$, we perform $k$ $\Delta$-range queries on the leaves of $T$. For each of the $k$ vertices, we explore points within distance $\Delta$, in order to find which point is the first in the permutation used in \texttt{partition}, that covers it. Then we visit the corresponding bucket and we report one curve stored in it.

\begin{theorem}
\label{TannDFDhdDDweak}
Given as input a set of $n$ polygonal curves $P\subset \MMM^m$ in the black-box model, 
the randomized data structure described above solves the $\mathcal{O}(\rho)$-ANN problem under the discrete Fr\'{e}chet distance,  
with {space} in $\lambda_X^{\mathcal{O}(1)} n m$, expected {preprocessing} time in  
$ n\cdot m\cdot \left( \lambda_X^{\mathcal{O}(\log \rho) }+ \lambda_X^{\mathcal{O}(1)} \log (nm) \right) ,$
 and 
query time in 
$k\cdot \left( \lambda_X^{\mathcal{O}(\log \rho) }+ \lambda_X^{\mathcal{O}(1)} \log (nm) \right),$ 
 where $X:= \bigcup_{p\in P}V(p)$, and $\rho:=\rho(\lambda_X,k)\in \mathcal{O}( k\log \lambda_X).$ 
For any query curve $q\in\MMM^k$, the preprocessing algorithm succeeds with constant probability.
\end{theorem}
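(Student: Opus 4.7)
The plan is to verify, in order, (a)~the approximation guarantee and success probability for the $\mathcal{O}(\rho)$-ANN problem, and (b)~the stated bounds on space, preprocessing, and query time. The heart of the argument is the analysis of the random partition underlying the bucketing scheme.

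For (a), suppose a data curve $p$ satisfies $\df(p,q)\le r'$. By Lemma~\ref{lemma:kcomponents} I fix an optimal traversal of $p,q$ with at most $k$ star components, each of whose vertices lie within distance $r'$ of that component's center vertex $c$. It suffices that, for every star, all its vertices are assigned to a common cell of the random partition of the leaves of $T$: then the distinct-cell sequences of $p$ and $q$ agree and the two curves share a bucket. Applying Lemma~\ref{lemma:metricpartition} with reference point $c$ and radius $t=\Theta(r)$ (which satisfies $t\le\Delta/8$ for the chosen $\Delta$), the probability that a net point within $t$ of $c$ lies outside $\PPP(x_j)$ is bounded by $\tfrac{8t}{\Delta}\ln N$, with $N=|b_{\MMM}(c,\Delta)\cap X|$. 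The packing property of the $w$-net $X$ gives $\ln N=\mathcal{O}(\log\lambda_X\cdot\log(\Delta/w))$, and a union bound over the $\le k$ stars produces total failure probability $\mathcal{O}\!\left(\tfrac{kr}{\Delta}\log\lambda_X\cdot\log(\Delta/w)\right)$. Plugging in $w=r/4$ and $\Delta=100\,rk\log\lambda_X\cdot\log(k\log\lambda_X)$ drives this below $1/2$. Conversely, whenever $p$ and $q$ share a bucket, the common cell sequence induces a traversal whose cost is at most the maximum cell diameter $\Delta=\mathcal{O}(\rho)\cdot r'$, yielding the claimed $\mathcal{O}(\rho)$-ANN guarantee. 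The ``no'' output case is handled symmetrically: any two curves in the same bucket are at Fr\'echet distance $\le\Delta$, so returning any stored curve satisfies the ANN specification.

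For (b), the building blocks slot together directly. Since $|X|\le nm$, constructing the net-tree, pruning it, and erecting the range-search structure of Theorem~\ref{theorem:doublingrange} take $\lambda_X^{\mathcal{O}(1)}nm\log(nm)$ expected time and $\lambda_X^{\mathcal{O}(1)}nm$ space. Because $\Delta/w=\Theta(k\log\lambda_X\cdot\log(k\log\lambda_X))$, we have $\log(\Delta/w)=\mathcal{O}(\log\rho)$, so a single $\Delta$-bounded range query runs in $\lambda_X^{\mathcal{O}(1)}\log(nm)+\lambda_X^{\mathcal{O}(\log\rho)}$ time; invoking Lemma~\ref{lemma:partition} gives the partition at cost $\lambda_X^{\mathcal{O}(1)}nm\log(nm)+nm\cdot\lambda_X^{\mathcal{O}(\log\rho)}$. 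Computing the key vector of each input curve requires one such range query per vertex, totalling $nm$ range queries, which sums to the claimed preprocessing time. The hashtable has at most $n$ non-empty buckets, each holding a key of length $\le k$ and a curve pointer, contributing $\mathcal{O}(nk)\subseteq\mathcal{O}(nm)$, absorbed into the range-search structure. At query time, $k$ range queries compute the query's key vector, and one perfect-hash lookup returns a stored curve, matching the stated query time.

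The main obstacle is the quantitative probabilistic step in~(a): because $\ln N$ depends on $\Delta$ through the packing property of $X$, one must solve the implicit inequality $\Delta\gtrsim kr\log\lambda_X\cdot\log(\Delta/w)$, which forces the extra $\log(k\log\lambda_X)$ factor appearing in $\Delta$ and is ultimately what the $\mathcal{O}(\rho)$ notation hides. A secondary subtlety is that curve vertices are not themselves net points of $T$, so the cell of a vertex $v$ must be identified by a range query returning the first permutation-index center within $R\in[\Delta/4,\Delta/2]$ of $v$; since the pruning radius $w=r/4$ is a small fraction of $R$ and of the star radius, the containment guarantee of Lemma~\ref{lemma:metricpartition}, applied to an appropriately enlarged ball around each star center, transfers cleanly to this extended cell labelling.
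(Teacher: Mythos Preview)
Your proposal is correct and follows essentially the same approach as the paper: the correctness argument combines Lemma~\ref{lemma:kcomponents} with Lemma~\ref{lemma:metricpartition}, bounds $\ln N$ via the packing property of the $w$-net leaves, and takes a union bound over the $k$ stars; the complexity analysis assembles the net-tree construction, the cost of \texttt{partition} from Lemma~\ref{lemma:partition}, and the per-vertex range queries from Theorem~\ref{theorem:doublingrange}, using $\log(\Delta/w)=\mathcal{O}(\log\rho)$ throughout. You are in fact somewhat more explicit than the paper about two points---why $\ln N=\mathcal{O}(\log\lambda_X\cdot\log(\Delta/w))$, and the need to route original curve vertices through nearby net points---both of which the paper handles more tersely.
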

\begin{proof}
\textit{Preprocessing time.}
A net-tree can be built in expected time $\lambda_X^{\mathcal{O}(1)} nm \log (nm)$ by \cite{HM06}. This complexity also bounds the time we need to visit all nodes. 
Running the algorithm of Lemma \ref{lemma:partition}, costs $\lambda_X^{\mathcal{O}(1)} n \log n+n \cdot \lambda_X^{\mathcal{O}(\log (\Delta/w))}$. 
Finally, by Theorem \ref{theorem:doublingrange}, computing the vector of indices  costs $ m\cdot \left( \lambda_X^{\mathcal{O}(\log (\Delta/w))}+ \lambda_X^{\mathcal{O}(1)} \log (nm) \right)$ time for each curve. Notice that $\mathcal{O}(\log (\Delta/w))= \mathcal{O}(\log (k \log \lambda_X))$.

\textit{Storage.}
We store a net tree, which requires $\lambda_X^{\mathcal{O}(1)}nm$ space, and a hashtable with at most $n$ non-empty buckets containing indices to curves. 

\textit{Query time.} 
Hence we compute the corresponding key vector in time $ k\cdot \left( \lambda_X^{\mathcal{O}(\log (\Delta/w))}+ \lambda_X^{\mathcal{O}(1)} \log (nm) \right)$. We have access to the bucket in $\mathcal{O}(k)$ time. 

\textit{Correctness.}
We claim that the above data structure solves the ANN problem with radius search parameter $3r/4$ and approximation factor $\mathcal{O}(\Delta/r)$. The choice of our pruning parameter implies that if there is a point in the original point set within distance $3r/4$ from some query point, then there is a leaf in the net-tree within distance $r$. 
In order to prove that the approximation factor holds, we make use of Lemma \ref{lemma:kcomponents}, and the fact that the pruning step only induces constant multiplicative error. This implies that if $\d_{dF}(p,q) \leq 3r/4$ then 
there exists an optimal traversal which consists of $k$ components and each component can be covered by a ball of radius $r$ centered at a point of $X\cup V(q)$. By Lemma \ref{lemma:metricpartition}, the probability that \texttt{partition} splits one component is at most
\[
\frac{8r}{\Delta} \ln \lambda_X \cdot \log \frac{8\Delta}{r}
\leq
\frac{8}{100k} \cdot \frac{\log \left(800 (k \log \lambda_X) \cdot \log (k \log \lambda_X)\right)}{\log (k\log \lambda_X)} \leq 
\frac{8}{100k} \cdot \frac{10+2\log \left((k \log \lambda_X) \right)}{\log (k\log \lambda_X)}
\]
$\leq 99/(100k)$,
and by a union bound the probability that $q$ is separated from its  near neighbor is constant.
\end{proof}

\subsubsection{Improving the result for the weakly explicit model}
Now we describe a variant of the above data structure which exploits the power of the weakly explicit model, where we additionally have access to a doubling oracle.

\paragraph{The preprocessing algorithm.}
The first preprocessing step is similar to the one applied in the proof of Theorem \ref{TannDFDhdDDweak}.  
We build a pruned net-tree $T$ on $X:=\bigcup_{p\in P}V(p)$, with pruning parameter $w=\epsilon r$, in expected time $\lambda_X^{\mathcal{O}(1)} nm \log (nm)$.  
We then build the data structure of Theorem \ref{theorem:doublingrange}
and we run the algorithm of Lemma \ref{lemma:partition} with input $X$, and  
\[\Delta= 100  r \cdot (k \log \lambda_X \log ({1}/{\epsilon})) \cdot \log \left( k \log \lambda_X  \log ({1}/{\epsilon}))\right). \]  

We compute one vector of indices per curve indicating the corresponding cells.  If one polygonal curve stabs more than $k$ cells, we discard it. If it stabs less than $k$ cells, we use a special character for the remaining coordinates. The polygonal curves are then stored in a hashtable: each bucket is assigned to a key vector of dimension $k$. Any non-empty bucket corresponds to at most $ k$ cells, of diameter  at most $ \Delta$. For each cell, we use the doubling oracle to find a covering with balls of radius $\eps r$. 
The centers of these balls allow us to consider query representatives so that we can precompute and store answers. 

Consider a non-empty bucket which corresponds to the sequence of cells $C_1,\ldots C_t$, and for each $C_i$ 
let $S_i$ be the set of points returned by the doubling oracle. To construct the set of
query representatives we take all ordered $k$-subsets of  $\bigcup_{i=1}^t S_i$ that respect the ordering of the cells. Each ordered $k$-set corresponds to a query representative. For each query
representative we precompute the answer and store it in a hash table. Here, again, we use the query representative as a key. In particular, we compute the discrete Fr\'echet distance of the query representative to each input curve stored in the bucket and store as answer the index to the input curve which minimizes this distance.

\paragraph{The query algorithm.} 
For any query $q\in \MMM^k$, we perform $k$ $\Delta$-range queries on the leaves of $T$. For any point $x\in V(q)$, we explore points within distance $\Delta$, in order to find which point is the first in the permutation used in \texttt{partition}, which also covers $x$.
This allows us to compute the corresponding key vector and then visit the bucket, where we locate the representative sequence of points..

\begin{theorem}
\label{TannDFDhdDD}
Given as input a set of $n$ polygonal curves $P\subset \MMM^m$ in the weakly explicit model, and an approximation parameter $0<\eps<1/2$, 
the randomized data structure described above solves the $(1+\eps)$-ANN problem under the discrete Fr\'{e}chet distance,  
with {space} in $ \lambda_X^{\mathcal{O}(1)} nm  +  \lambda_{\MMM}^{\mathcal{O}(k \cdot \log \rho)}n$, expected {preprocessing} time in  
$\lambda_X^{\mathcal{O}(1)} nm \log (nm) +  \lambda_{\MMM}^{\mathcal{O}(k \cdot \log  \rho)}\cdot n  m k ,$
 and 
query time in 
$k\cdot \left( \lambda_{\MMM}^{\mathcal{O}(\log \rho)}+ \lambda_X^{\mathcal{O}(1)} \log (nm) \right),$ 
 where $X:= \bigcup_{p\in P}V(p)$, and \[\rho := \rho({\lambda_X,k,\epsilon})
 \in \mathcal{O}\left( \epsilon^{-1} \cdot k \cdot (\log \lambda_X) \cdot \log ({1}/{\eps})\right).\] 
For any query curve $q\in\MMM^k$, the preprocessing algorithm succeeds with constant probability.
\end{theorem}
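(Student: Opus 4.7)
The strategy is to combine the correctness analysis of Theorem~\ref{TannDFDhdDDweak} with a finer discretization of each cell provided by the doubling oracle, which reduces the approximation factor from $\mathcal{O}(\rho)$ to $(1+\eps)$. Assume we have rescaled so that the radius search parameter is~$r$, and set $\Delta$ as in the preprocessing algorithm so that $\Delta/w = \Theta(\rho)$ where $\rho=\eps^{-1}\cdot k\cdot (\log\lambda_X)\cdot \log(1/\eps)$. The key new ingredient, compared to Theorem~\ref{TannDFDhdDDweak}, is that each cell produced by \texttt{partition} (of diameter at most $\Delta$) is covered by $\lambda_{\MMM}^{\mathcal{O}(\log(\Delta/\eps r))}=\lambda_{\MMM}^{\mathcal{O}(\log\rho)}$ balls of radius $\eps r$ obtained from the doubling oracle. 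Per bucket, over the at most $k$ cells, there are therefore $k\cdot \lambda_{\MMM}^{\mathcal{O}(\log\rho)}$ ``anchor'' points, and $\lambda_{\MMM}^{\mathcal{O}(k\log\rho)}$ ordered $k$-subsets respecting the cell-order, each defining a query representative whose exact answer is precomputed.

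\textbf{Correctness.} The plan is as follows. If $d_{dF}(p,q)\le r$, Lemma~\ref{lemma:kcomponents} gives an optimal traversal with at most $k$ star components, each of diameter $\le 2r$. By Lemma~\ref{lemma:metricpartition}, the probability that \texttt{partition} splits any single component is $\mathcal{O}((r/\Delta)\log N)$, where $N$ bounds the number of net points in a $\Delta$-ball; by the packing property of the pruned net-tree with pruning parameter $w=\eps r$, we have $\log N = \mathcal{O}((\log\lambda_X)\log(\Delta/w))=\mathcal{O}((\log\lambda_X)\log\rho)$. Plugging in the chosen $\Delta$ bounds this split probability by $\mathcal{O}(1/k)$, so a union bound over the $k$ components keeps the total failure probability bounded by a constant. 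Conditioned on success, $p$ and $q$ fall into the same bucket. Then the query algorithm finds the representative $q'$ whose $j$-th vertex is the doubling-oracle center within $\eps r$ of $q_j$, so $d_{dF}(q,q')\le \eps r$ by Observation~\ref{obs:b}; a symmetric argument applies to $p$ and its representative (which is among those precomputed). The precomputed answer $p^\ast$ satisfies $d_{dF}(p^\ast,q')\le d_{dF}(p,q')$, and two triangle-inequality applications yield $d_{dF}(p^\ast,q)\le d_{dF}(p,q)+\mathcal{O}(\eps r)$; rescaling $\eps$ by a constant gives the $(1+\eps)$-approximation.

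\textbf{Complexity.} For preprocessing, building the pruned net-tree is $\lambda_X^{\mathcal{O}(1)}nm\log(nm)$ by Theorem~\ref{theorem:nettrees}; running \texttt{partition} costs $\lambda_X^{\mathcal{O}(1)}nm\log(nm)+nm\cdot \lambda_X^{\mathcal{O}(\log\rho)}$ by Lemma~\ref{lemma:partition}; computing the key vector of each input curve costs $m\cdot (\lambda_X^{\mathcal{O}(1)}\log(nm)+\lambda_X^{\mathcal{O}(\log\rho)})$ via Theorem~\ref{theorem:doublingrange}. The dominating cost is the precomputation of answers: for each non-empty bucket we enumerate $\lambda_{\MMM}^{\mathcal{O}(k\log\rho)}$ representatives and, for each, compute the discrete Fr\'echet distance to every input curve in the bucket in $\mathcal{O}(mk)$ time; summing over buckets gives $\lambda_{\MMM}^{\mathcal{O}(k\log\rho)}\cdot nmk$. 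Space is dominated by the net-tree ($\lambda_X^{\mathcal{O}(1)}nm$) plus one hashed answer per representative ($\lambda_{\MMM}^{\mathcal{O}(k\log\rho)}\cdot n$). For queries, each of the $k$ query vertices requires a $\Delta$-range query on $T$ costing $\lambda_X^{\mathcal{O}(1)}\log(nm)+\lambda_X^{\mathcal{O}(\log\rho)}$ to identify its cell, followed by locating its doubling-oracle representative in $\lambda_{\MMM}^{\mathcal{O}(\log\rho)}$ time; the final hash lookup is $\mathcal{O}(k)$, giving the stated bound.

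\textbf{Main obstacle.} The delicate step is the correctness argument: we must simultaneously (i) choose $\Delta$ large enough that the failure probability of \texttt{partition} across all $k$ star components is bounded by a constant, (ii) keep $\Delta/w$ small enough that $\lambda_{\MMM}^{\mathcal{O}(k\log(\Delta/w))}$ remains the claimed bound, and (iii) ensure the doubling-oracle covering at scale $\eps r$ gives genuine $\eps r$ per-vertex snapping error. The bound $\log N=\mathcal{O}((\log\lambda_X)\log\rho)$ from the packing property is what makes the chosen $\Delta=\Theta(rk(\log\lambda_X)\log(1/\eps)\cdot\log(k\log\lambda_X\log(1/\eps)))$ work; handling this carefully, together with the implicit rescaling of $\eps$ by a constant to absorb the $\mathcal{O}(\eps r)$ additive slack, is where most of the care in the proof is concentrated.
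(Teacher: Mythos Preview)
Your proposal is correct and follows essentially the same approach as the paper's proof: random partition via \texttt{partition} on the pruned net-tree leaves, $\eps r$-covers of each cell via the doubling oracle to generate $\lambda_{\MMM}^{\mathcal{O}(k\log\rho)}$ query representatives per bucket, and the same split-probability calculation via Lemma~\ref{lemma:metricpartition} combined with a packing bound on $N$. Two minor slips worth cleaning up: (i) strictly speaking $\Delta/w=\Theta(\rho\cdot\log(k\log\lambda_X\log(1/\eps)))$, not $\Theta(\rho)$, though this is harmless since only $\log(\Delta/w)=\Theta(\log\rho)$ enters the exponents; and (ii) the parenthetical ``a symmetric argument applies to $p$ and its representative'' is unnecessary and slightly confused---the input curves are not snapped to the doubling-oracle centers, and your two triangle-inequality steps already give $d_{dF}(p^\ast,q)\le d_{dF}(p,q)+2\eps r$ without it.
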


\begin{proof}

\textit{Preprocessing time.}
Computing the vector of indices for one curve costs \[ m\cdot \left( \lambda_X^{\mathcal{O}(\log(\Delta/w) )}+ \lambda_X^{\mathcal{O}(1)} \log (nm) \right)\] time, which is essentially the time needed to find the corresponding cells. The weakly explicit model  assumes that we are able to access points which $\eps r$-cover a ball of radius $r$  in $\lambda_{\MMM}^{\mathcal{O}(\log (1/\eps))}$ time.  The number of query representatives which are compatible with a given sequence of $k$ cells is at most: 
\[ \sum_{\substack{t_1+\ldots +t_k=k \\ \forall i:~ t_i\geq 0 \\t_1\geq1,t_k \geq 1}} \prod_{i=1}^{k} \lambda_{\MMM}^{t_i\log (\Delta/w)}= 
\sum_{\substack{t_1+\ldots +t_k=k \\ \forall i:~ t_i\geq 0}}
\lambda_{\MMM}^{k\log (\Delta/w)}={{2k-1}\choose{k}}\cdot \lambda_{\MMM}^{k\log (\Delta/w)} \leq \lambda_{\MMM}^{\mathcal{O}(k\log (\Delta/w))}.
\]
For each query representative, we compute its actual distance to the data curves which belong to the same sequence of cells. 
Putting everything together results in preprocessing time in 
$\lambda_X^{\mathcal{O}(1)} nm \log (nm) +  \lambda_{\MMM}^{\mathcal{O}(k \cdot \log  \rho)}\cdot n  m k$, where $\rho\in \mathcal{O}(\eps^{-1} k (\log \lambda_X) \log (1/\eps))$.

\textit{Storage.} We store a net-tree in $\lambda_X^{\mathcal{O}(1)} nm$. We also store a hashtable with at most $n$ non-empty buckets, which correspond to different cells. For each bucket/cell we store a hashtable with $\leq \lambda_{\MMM}^{\mathcal{O}(k\log (\Delta/w))}$ non-empty buckets, one for each query representative. 

\textit{Query time.}
First, we compute the corresponding key vector in time $ k\cdot \left( \lambda_X^{\mathcal{O}(\log(\Delta/w ))}+ \lambda_X^{\mathcal{O}(1)} \log (nm) \right)$. Then, we have access to the bucket in $\mathcal{O}(k)$ time, and we locate the representative sequence of points in $k\cdot\lambda_{\MMM}^{\mathcal{O}(\log (\Delta/w))}$ time.

\textit{Correctness.}We claim that the data structure solves the $(1+\Theta(\eps))$-ANN problem with radius search parameter $(1-2\epsilon)r$. 
In order to prove correctness, we make use of Lemma \ref{lemma:kcomponents} and the fact that approximating the input dataset by the net, only induces $\Theta(\eps r)$ additive error.  
This implies that if $\d_{dF}(p,q) \leq (1-2\eps)r$ then 
there exists an optimal traversal which consists of $k$ components and each component can be covered by a ball of radius $r$ centered at a point of $X\cup V(q)$. The probability {that} \texttt{partition} splits one component is at most 
\[
\frac{8r}{\Delta} \ln \lambda_X \cdot \log \frac{\Delta}{\eps r}
\leq
\frac{8}{100k \log (1/\eps)} \cdot \frac{\log \left(100 \epsilon^{-1}(k \log \lambda_X \log (1/\eps)) \cdot \log (k \log \lambda_X \log (1/\eps))\right)}{\log (k\log \lambda_X \log (1/\eps))}
\]
\[
\leq 
\frac{8}{100k\log (1/\eps)} \cdot \frac{7+\log(1/\eps)+2\log \left((k \log \lambda_X \log (1/\eps)) \right)}{\log (k\log \lambda_X \log (1/\eps))} \leq \frac{9}{10k},
\]
and by a union bound the probability that $q$ is separated from its approximate near neighbor is $\leq1/10$.
\end{proof}

\section{Acknowledgements}
The authors thank Dan Feldman for useful discussions on the topic of this paper. Anne Driemel thanks the Hausdorff Center for Mathematics for their generous support and the Netherlands Organization for Scientific Research (NWO) for support under Veni Grant 10019853. The research of Ioannis Psarros was co-financed by Greece and the European Union (European Social Fund- ESF) through the Operational Programme $\ll$ Human Resources Development, Education and Lifelong Learning $\gg$ in the context of the project ``Strengthening Human Resources Research Potential via Doctorate Research'' (MIS-5000432), implemented by the State Scholarships Foundation (IKY).

 \bibliography{biblio}

\newpage
\appendix
\section{Offline Construction of the Distance Oracle (missing proofs)}\label{appendix:expgrid}

First we show the following lemma for exponential grids for points.

\expgridlemmacombined*

Notice that we provide two different constructions for general metric spaces and the Euclidean space, since we want to be more efficient in the Euclidean case. We start with the latter and show how to build an actual exponential grid. The later construction for the general metric case is similar in spirit, but does not use an explicit grid structure.

\subsection{Exponential grids for points in Euclidean space}\label{sec:euclideanexpgrid}

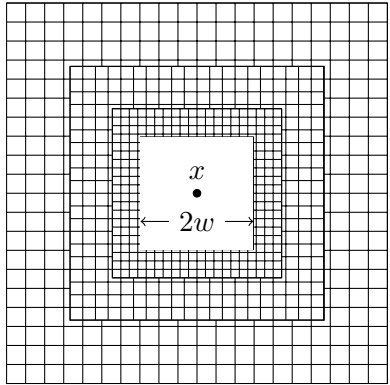
\begin{figure}[bh]
    \centering
\begin{tikzpicture}[scale=0.75]
\coordinate (wmm) at (-1,-1);
\coordinate (wpm) at (+1,-1);
\coordinate (wpp) at (+1,+1);
\coordinate (wmp) at (-1,+1);

\draw[step=0.225*1.5,black,thin] (-2.25*1.5,-2.25*1.5) grid (1.5*2.25,2.25*1.5);
\fill[white] ($1.5*1.5*(wpm)$) rectangle ($1.5*1.5*(wmp)$);
\draw ($1.5*1.5*(wpm)$) rectangle ($1.5*1.5*(wmp)$);
\draw ($2.25*1.5*(wpm)$) rectangle ($2.25*1.5*(wmp)$);

\draw ($2.25*(wpm)$) rectangle ($2.25*(wmp)$);
\draw[step=0.225,black,thin] (-2.25,-2.25) grid (2.25,2.25);
\fill[white] ($1.5*(wpm)$) rectangle ($1.5*(wmp)$);
\draw ($1.5*(wpm)$) rectangle ($1.5*(wmp)$);

\draw[step=0.15,black,thin] (-1.5,-1.5) grid (1.5,1.5);
\draw (wpm) rectangle (wmp);
\fill[white] (wpm) rectangle (wmp);
\node [circle,inner sep=0cm,minimum width=0.1cm,draw, fill=black, label=above:{$x$}] at (0,0) {};
\draw [<->] (-1,-0.5) to node [fill=white] {$2w$} (1,-0.5);

\end{tikzpicture}
    \caption{Exponential grid construction around point $x$ up to the third layer, for $c=1.5$.}
    \label{fig:exponentialgrid}
\end{figure}

In Euclidean space, we denote the ball of radius $r$ around $x \in \mathbb{R}^d$ by $b(x,r)$.

\begin{restatable}{lemma}{expgridlemma}\label{lem:expgrid}
Let $x \in \mathbb{R}^d$ be a point, $\epsilon\in(0,1)$, and $r_1, r_2\in\mathbb{R}$ with $r_1 < r_2$, and assume that $d$ is a constant.
Then there exists a set $\G(x)\subset\mathbb{R}^d$ of size $\mathcal{O}((\ln \frac{r_2}{r_1}) \cdot \epsilon^{-d})$ such that for every $y \in b(x,r_2) \backslash b(x,r_1)$, there is a point $z \in \G(x)$ with \begin{align}
||y-z|| \le \epsilon \cdot ||x-y||.\label{eq:lemma:expgrid}
\end{align}
It is possible to compute the nearest neighbor for a point $y \in b(x,r_2)\backslash b(x,r_1)$ in time $\log |\G(x)|$ (by grid snapping).
\end{restatable}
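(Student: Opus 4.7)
The plan is to realize $\G(x)$ as a disjoint union of uniform axis-aligned grids placed inside a sequence of nested annuli whose radii form a geometric progression, which is the classical construction for exponential grids around a point. Concretely, I would fix $L = \lceil \log_2(r_2/r_1) \rceil + O(1)$, set $\rho_i = r_1 \cdot 2^i$ for $i = 0,\dots,L$, and consider the annuli $A_i = b(x,\rho_{i+1}) \setminus b(x,\rho_i)$. For each $i$, I would overlay on the axis-aligned bounding cube of $A_i$ a uniform grid of side length $s_i = 2\epsilon \rho_i / \sqrt{d}$, and define $\G(x)$ as the union of the vertices of these $L$ grids.

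For the approximation bound in \eqref{eq:lemma:expgrid}, I would pick any $y \in b(x,r_2) \setminus b(x,r_1)$, locate the unique index $i$ with $\|x-y\| \in [\rho_i,\rho_{i+1}]$, and define $z$ by rounding each coordinate of $y$ to the nearest multiple of $s_i$. Since the worst-case distance from a point to the nearest vertex of a grid of side $s_i$ in $\mathbb{R}^d$ is $\tfrac{\sqrt{d}}{2}s_i$, this yields $\|y-z\| \le \tfrac{\sqrt{d}}{2}\cdot \tfrac{2\epsilon \rho_i}{\sqrt{d}} = \epsilon \rho_i \le \epsilon \|x-y\|$, as required.

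For the size bound, I would note that the bounding cube of $A_i$ has side length $2\rho_{i+1} = 4\rho_i$, so the $i$-th grid contains at most $(4\rho_i/s_i + 1)^d = O((\sqrt{d}/\epsilon)^d) = O(\epsilon^{-d})$ vertices since $d$ is constant. Summing over $L$ annuli gives $|\G(x)| = O(\log(r_2/r_1)\cdot \epsilon^{-d})$, and both the construction and storage can be performed in time and space linear in $|\G(x)|$. For the query time, given $y$ I would first compute $\|x-y\|$ and determine the index $i$ with $\|x-y\| \in [\rho_i,\rho_{i+1}]$, which can be done in $O(1)$ under the paper's assumption that $\lceil\cdot\rceil$ and $\lfloor\cdot\rfloor$ are unit-cost operations (it is $\lfloor \log_2(\|x-y\|/r_1)\rfloor$), or otherwise in $O(\log L) \subseteq O(\log |\G(x)|)$ by binary search over the boundary radii. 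Once $i$ is identified, snapping $y$ to its nearest vertex in the $s_i$-grid is a coordinate-wise rounding, done in $O(d) = O(1)$ time.

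The only subtle point, and thus the only place where care is needed, is that after rounding $y$ the snapped point $z$ could potentially leave the bounding cube of $A_i$ (for $y$ close to the outer boundary of $A_i$); this is handled by letting each grid extend a constant number of cells past $\rho_{i+1}$, which affects neither the asymptotic size nor the argument above. No other step presents a real obstacle, and the entire proof is a short geometric verification.
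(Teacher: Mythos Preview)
Your proposal is correct and follows essentially the same approach as the paper: both build an exponential grid by partitioning the annulus $b(x,r_2)\setminus b(x,r_1)$ into $\mathcal{O}(\log(r_2/r_1))$ geometrically growing layers and covering each layer with a uniform grid whose spacing is proportional to the layer's inner radius, yielding $\mathcal{O}(\eps^{-d})$ points per layer and the relative error bound via the half-diameter of a grid cell. The only cosmetic differences are that the paper uses cube-shaped layers $R(x,c^i w)\setminus R(x,c^{i-1}w)$ with a general base $c>1$ (and $w=r_1/\sqrt{d}$) rather than your spherical annuli with base~$2$; the boundary issue you flag (the snapped point possibly falling just outside the layer) is present in the paper's construction as well and is handled the same way.
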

\begin{proof}
We construct an exponential grid around $x$. To do this, we partition the area between $b(x,r_1)$ and $b(x,r_2)$ into several rectangular shells that we call \emph{layers}. The general construction is: We make a set of nested cubes, define the space between two consecutive cubes as a layer, and then cover each layer by a fine enough grid. We can theoretically ignore all points in $b(x,r_1)$, but since we build a rectangular grid, we cover a little bit of this inner ball by layer one.

Let $R(x,w)$ be the cube with center $x$ and side length $2w$, i.e., 
\[
R(x,w) := \{ y=(y_1,\ldots,y_d)\in\mathbb{R}^d \mid y_i \in [x-w,x+w] \ \forall i \in [d] \}
\]
Fix some constant $c > 1$. We set $w:=r_1/\sqrt{d}$ and define layer $i$ for $i\in[i^\ast]$, $i^\ast:= \lceil \log_c (r_2\cdot \sqrt{d})/r_1\rceil $, by
\[
L_i := R(x,c^{i} w) \backslash  R(x,c^{i-1}w).
\]

We first observe that the layers actually cover $b(x,r_2)\backslash b(x,r_1)$. This is true because the inner area that we do not cover is $R(x,3^0 \cdot w)=R(x,\frac{r_1}{\sqrt{d}}) \subset b(x,r_1)$ and because
\[
3^{i^\ast} w \ge \frac{r_2\cdot \sqrt{d}}{r_1} \cdot w = r_2 
\]
since $i^\ast= \lceil \log_c (r_2\cdot \sqrt{d})/r_1\rceil $.

Next, we cover layer $L_i$ by a grid of width $w_i := \frac{\epsilon}{\sqrt{d}} \cdot c^{i-1} w$ for all $i\in[i^\ast]$. 
More precisely, consider an infinite grid of width $w_i$, and let $\G_i(x)$ be all the grid points of this infinite grid that lie in $L_i$. Then $\G(x) = \cup_{i=1}^{i^\ast} \G_i(x)$.

A sketch of the first three layers around a point $x$ covered by some grid is depicted in Figure~\ref{fig:exponentialgrid}.

The idea of an exponential grid is that the grid becomes coarser at the same rate as the layers grow, i.e., the same number of grip points fall into each layer. This number can be bounded by the number of grid points falling into the outer cube of the layer, which in turn can be bounded by
\[
\left(\left\lfloor \frac{2 \cdot c^i w}{w_i} \right\rfloor +1 \right)^d
\le \left(\frac{2 \cdot c^i w}{\frac{\epsilon}{\sqrt{d}} \cdot c^{i-1} w} +1 \right)^d
= \left( 2 \cdot c \cdot \sqrt{d} \cdot \epsilon^{-1} \right)^d
\]
Thus, the overall number of grid points in all $i^\ast$ layers is bounded by
\[
\lceil \log_c (r_2\cdot \sqrt{d})/r_1\rceil \cdot \left( 2 \cdot c \cdot \sqrt{d} \epsilon^{-1} \right)^d \in \mathcal{O}((\log \frac{r_2}{r_1}) \cdot \epsilon^{-d})
\]
for constant $c$ and $d$.

It remains to show that~\eqref{eq:lemma:expgrid} holds. Let $y\in L_i$ be a point in layer $i$ for $i\in[i^\ast]$. Since $y\in L_i$, its distance to $x$ is at least $c^{i-1}w$. On the other hand, the distance to its closest grid point $z$ is bounded by half the diameter of the grid cells, i.e., by $\sqrt{d}\cdot w_i$. This means that
\[
||y-z|| \le \sqrt{d}\cdot w_i = \epsilon \cdot c^{i-1} w 
\le \epsilon \cdot ||x-y||,
\]
which completes the proof.
\end{proof}

\subsection{Exponential grids for points in doubling spaces}

Let $\MMM$ be a metric space with bounded doubling dimension $d$. 
Assume the weakly explicit model. 
Then we can build a similar structure as in Section~\ref{sec:euclideanexpgrid}. We first observe how many balls we need if we cover with smaller balls of radius $R/X$.

\begin{observation}\label{obs:coveringinmetricspaces}
Any ball of radius $R>0$ in $M$ satisfying the above conditions can be covered by $2 \cdot X^d$ balls of radius $R/X$. 
\begin{proof}
We know that a ball of radius $R > 0$ can be covered by $2^{d}$ balls of radius $R/2$. Each ball of radius $R/2$ can then be covered by $2^{d}$ balls of radius $R/4$, implying that the ball of radius $R$ can be covered by $2^{d} \cdot 2^{d}$ balls of radius $R/4$. For any $i \ge 1$, a ball of radius $R > 0$ can be covered by $2^{i \cdot d}$ balls of radius $R / 2^i$. By setting $i \ge \log X$, we achieve that $R / 2^i \le R/X$. Thus,  $2^{(\log X+1) \cdot d} = 2 \cdot X^d$ balls are sufficient.
\end{proof}
\end{observation}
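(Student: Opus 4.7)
The plan is to iterate the defining inequality of the doubling dimension. By the definition of doubling dimension, $\lambda_{\MMM} = 2^d$, so any ball of radius $r$ in $\MMM$ is covered by at most $2^d$ balls of radius $r/2$ (centered at points of $\MMM$). Applying this repeatedly is the whole argument.

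First, I would prove by induction on $i \geq 0$ the auxiliary claim: any ball of radius $R$ in $\MMM$ can be covered by at most $2^{id}$ balls of radius $R/2^i$. The base case $i=0$ is immediate. For the inductive step, assume a covering of the original ball by $2^{id}$ balls of radius $R/2^i$; apply the doubling property to each of these balls to cover it by $2^d$ balls of radius $R/2^{i+1}$. Taking the union yields a covering of the original ball by $2^d \cdot 2^{id} = 2^{(i+1)d}$ balls of radius $R/2^{i+1}$, completing the induction.

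Next, I would choose $i := \lceil \log_2 X \rceil$ so that $2^i \geq X$, hence $R/2^i \leq R/X$. The balls of radius $R/2^i$ provided by the induction are in particular balls of radius $R/X$ (one can enlarge each without increasing the count). The number of balls is at most $2^{id}$, and a short calculation with $i \leq \log_2 X + 1$ turns this into the stated bound of the form $2 \cdot X^d$ (up to constants matching those in the excerpt).

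There is no real obstacle: the observation is a direct unrolling of the definition. The only minor point to keep in mind is that the doubling property guarantees the covering balls at every recursion step can be centered at points of $\MMM$, which is all that is needed; no metric-specific considerations beyond the definition of $\lambda_{\MMM}$ enter the argument.
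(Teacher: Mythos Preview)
Your proposal is correct and follows essentially the same approach as the paper: both iterate the doubling property to get $2^{id}$ balls of radius $R/2^i$, then take $i \approx \log_2 X$. Your inductive formulation is slightly more explicit, and your hedge ``up to constants'' is prudent, since the paper's final arithmetic $2^{(\log X+1)d} = 2\cdot X^d$ actually yields $2^d \cdot X^d$ rather than $2\cdot X^d$; either way the bound is $\mathcal{O}(X^d)$, which is all that is used downstream.
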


Now we proceed similarly to Lemma~\ref{lem:expgrid}.

\begin{lemma}\label{lem:expgrid:metric}
Let $(\MMM,\dm)$ be a metric space with constant doubling dimension $d_{\MMM}$. Assume the weakly explicit model. 
Let $x \in \MMM$, $\epsilon \in (0,1)$ and $r_1, r_2 \in \mathbb{R}$ with $r_1 < r_2$. Then  a set $\G(x) \subset M$ of size $\mathcal{O}((\log \frac{r_2}{r_1})\cdot\epsilon^{-d_{\MMM}})$ can be computed in time and space $\mathcal{O}(|\G(x)|)$ such that for every $y \in b_{\MMM}(x,r_2) \backslash b_{\MMM}(x,r_1)$, there is a point $z \in \G(x)$ with
\[
\dm(y,z) \le \epsilon \cdot \dm(x,y).
\]
\end{lemma}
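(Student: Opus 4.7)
The plan is to mimic the Euclidean construction from Lemma~\ref{lem:expgrid}, but with balls playing the role of cubes and with the doubling oracle providing the role of the ambient grid. Concretely, I would set $i^\ast := \lceil \log_2 (r_2/r_1)\rceil$ and define nested balls $B_i := b_{\MMM}(x, 2^i r_1)$ for $i=0,1,\ldots,i^\ast$, so that $B_{i^\ast} \supseteq b_{\MMM}(x,r_2)$ and $B_0 = b_{\MMM}(x,r_1)$. The layers $L_i := B_i \setminus B_{i-1}$ then cover $b_{\MMM}(x,r_2) \setminus b_{\MMM}(x,r_1)$ and any point $y \in L_i$ satisfies $\dm(x,y) \ge 2^{i-1} r_1$.

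For each layer $L_i$, I would produce a finite set $\G_i(x) \subset \MMM$ with the property that every $y \in L_i$ has some $z \in \G_i(x)$ with $\dm(y,z) \le \epsilon \cdot 2^{i-1} r_1$, which immediately yields $\dm(y,z) \le \epsilon \cdot \dm(x,y)$ as required. To construct $\G_i(x)$, I would apply the doubling oracle to $B_i$ repeatedly: one invocation replaces a ball of radius $R$ by $\lambda_{\MMM}$ balls of radius $R/2$, so after $t := \lceil \log_2(2/\epsilon) \rceil$ rounds we obtain a covering of $B_i$ by at most $\lambda_{\MMM}^t \in \mathcal{O}(\epsilon^{-d_{\MMM}})$ balls, each of radius at most $\epsilon \cdot 2^{i-1} r_1$. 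I then take the centers of these balls as $\G_i(x)$. Setting $\G(x) := \bigcup_{i=1}^{i^\ast} \G_i(x)$ gives $|\G(x)| \in \mathcal{O}(i^\ast \cdot \epsilon^{-d_{\MMM}}) = \mathcal{O}(\log(r_2/r_1)\cdot \epsilon^{-d_{\MMM}})$, matching the claimed size. The running time to produce $\G(x)$ is dominated by the number of oracle calls, which is proportional to the number of balls produced, yielding time and space $\mathcal{O}(|\G(x)|)$.

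For the correctness bound, fix $y \in b_{\MMM}(x,r_2) \setminus b_{\MMM}(x,r_1)$. Then $y \in L_i$ for some $1 \le i \le i^\ast$, so by the covering property of $\G_i(x)$ there is $z \in \G_i(x)$ with $\dm(y,z) \le \epsilon \cdot 2^{i-1} r_1 \le \epsilon \cdot \dm(x,y)$, as needed.

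There is no real obstacle here; the only care needed is to tune the number of halving rounds $t$ so that the final covering radius is at most $\epsilon \cdot 2^{i-1} r_1$ (which is why I chose $t \ge \log_2(2/\epsilon)$, since the $i$-th ball has radius $2^i r_1$ and we need the fine radius to drop to $\epsilon/2$ times that), and to note that the same $t$ works simultaneously for every layer because the ratio between the outer radius of $B_i$ and the target fineness is the same constant $2/\epsilon$ for all $i$. Unlike the Euclidean case, no analogue of the efficient grid-snapping query can be expected in the black-box/weakly-explicit setting, which is precisely why the theorem statement omits such a claim here.
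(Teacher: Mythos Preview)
Your proof is correct and follows essentially the same approach as the paper: both decompose the annulus into concentric shells $b_{\MMM}(x,2^i r_1)\setminus b_{\MMM}(x,2^{i-1} r_1)$, cover each shell by repeatedly halving via the doubling oracle down to radius $\epsilon\cdot 2^{i-1} r_1$, and use the identical bound $\dm(y,z)\le \epsilon\cdot 2^{i-1} r_1 \le \epsilon\cdot \dm(x,y)$. Your explicit observation that the number of halving rounds $t=\lceil\log_2(2/\epsilon)\rceil$ is the same for every layer is exactly the content of the paper's auxiliary observation on covering a ball of radius $R$ by $\mathcal{O}((R/R')^{d_{\MMM}})$ balls of radius $R'$.
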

\begin{proof}
 As before, we partition $b_{\MMM}(x,r_2) \backslash b_{\MMM}(x,r_1)$ into shells. Since we can not benefit from grid snapping techniques in the general case anyways, we do now build a rectangular grid, but construct spherical shells. Then each shell is covered by balls of smaller radius, and the centers of these balls form the \lq grid\rq\ structure.
 
 Formally, we cover shell $i$, namely the area $b_{\MMM}(x,2^i r_1) \backslash b_{\MMM}(x,2^{i-1} r_1)$, with balls of radius $r'_i=\epsilon\cdot 2^{i-1} r_1$ for $i\in \{1,\ldots,\bigl\lceil\log \left( \sqrt{r_2/r_2}\right)\bigr\rceil\}$. Since 
 \[
 \epsilon\cdot 2^{\bigl\lceil\log r_2/(\epsilon\cdot r_1)\bigr\rceil}r_1
 \le  \epsilon\cdot r_2/(\epsilon\cdot r_1) r_1 = r_2,
 \]
 it holds that $b_{\MMM}(x,r_2)\backslash b_{\MMM}(x,r_1)$ is completely covered. Covering shell $i$ can be achieved by covering $b_{\MMM}(x,2^i r_q)$ (more points only increase the size of our grid, but will only be beneficial for its quality). By Observation~\ref{obs:coveringinmetricspaces}, we need
 \[
 2 \cdot \left(\frac{2^i r_1} {\epsilon\cdot 2^{i-1} r_1}\right)^{d_{\MMM}}
 = 2 \cdot \left(\frac{2}{\epsilon}\right)^{d_{\MMM}}
 \]
 balls / centers to cover $b_{\MMM}(x,2^i r_2)$ with balls of radius $r_i'$. Thus, the overall number of centers that we get is bounded by
 \[
 \bigl\lceil\log \left( \sqrt{r_2/r_2}\right)\bigr\rceil \cdot 2 \cdot \left(\frac{2}{\epsilon}\right)^{d_{\MMM}}
 \in \mathcal{O}((\log \frac{r_2}{r_2})\cdot\epsilon^{-d_{\MMM}}).
 \]
 It remains to prove that the computed set satisfies the error bound. 
 Let $y \in b_{\MMM}(x,2^i r_1) \backslash b_{\MMM}(x,2^{i-1} r_1)$ be a point. Then $d(x,y) > 2^{i-1}r_1$. Furthermore, the distance to the closest center point $z$ is at most $r_i' = \epsilon \cdot 2^{i-1}r_1$. Thus,
 \[
 \dm(y,z) \le \epsilon \cdot 2^{i-1}r_1 \le \epsilon \cdot \dm(x,y),
 \]
 which is what we wanted to prove.
\end{proof}


\section{A (flawed) approach using simplifications}\label{sec:flawed}

\begin{figure}[bh]
    \centering
    \includegraphics{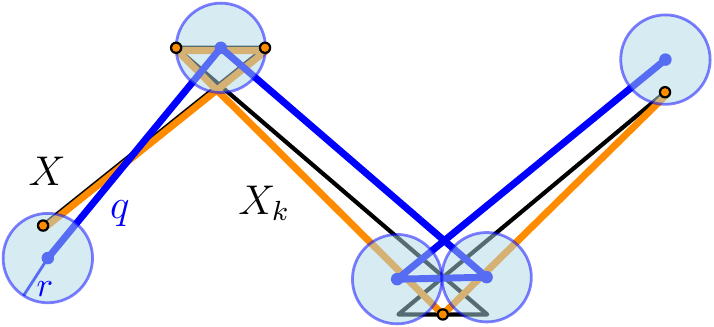}
    \caption{Example of a curve $X$ (black), its simplification $X_k$ (orange) and a query curve $q$ (blue) with distance $r=\df(X,q)$ indicated by the radius of the blue disks. 
}
    \label{fig:nn_grid_frechet}
\end{figure}

In this section we consider building a distance oracle based on a simplification of the input curve. As we discussed in  Section~\ref{sec:prelim}, a simplification $X_k$ of a curve $X$ is a curve with a reduced number of vertices (bounded by a parameter $k$) that has minimal distance to $X$.
Refer to Section~\ref{sec:simplification} for details on how to compute such a simplification.

Now, consider the example given in Figure~\ref{fig:nn_grid_frechet}.
Note that in this example, $C \cdot \df(X,q) < \df(X_k,q)$ for a constant $C > 1$. Therefore, $X_k$ does not contain enough information to provide a $(1+\eps)$-approximation to the distance of $d_{dF}(q,X)$ for arbitrarily small $\eps>0$. 
Intuitively, the input curve here has two \lq zig-zag\rq's, of which a curve with one vertex less can only preserve one. Now the query curve includes a shifted version of the \emph{other} zig-zag. Since the simplification shrinked this to a single point, it can not accurately predict the distance to this shifted zig-zag. 

For any fixed query size $k$, we can extend this example to show that even the optimal $k'$-simplifications of size $k'=m-1$ are not sufficient. This can be described as follows. Let $p_1, p_2, p_3, p_4, p_5, p_6$ be the ordered vertices of curve $X$ in the example. Consider the curve defined by the sequence 
\[ p_1 ~(p_2~ p_3)^t ~p_4~ p_5~ p_6 \]
for some $t \geq 1$.
The sequence has length $m = 4+2t$. Assume that $k=5$, as before, and assume that $\|p_2-p_3\| > \|p_4-p_5\|$. 
Now, if the simplification has significantly more points, say $m > k'> k$, we still obtain a simplification that contracts $p_4$ and $p_5$ to one point, since it is the shortest edge. Using the same arguments as above we can show that it does not contain enough information to provide a $(1+\eps)$-approximation, even if $k' =m-1$.

\section{The doubling dimension of the discrete Fr\'echet distance}\label{sec:doublingspaces}

In order to put our results into context, we also prove a bound on the doubling dimension of the discrete Fr\'echet distance. Note that the doubling dimension of the continuous variant is known to be unbounded~\cite{DriemelKS16}.

\begin{theorem}
\label{theorem:doublDFD}
Let $p=p_1,\ldots,p_m$ be a sequence of points in a metric space $(\mathcal{M},\mathrm{d}_{\mathcal{M}})$ with doubling constant $\lambda_{\mathcal{M}}$. Then, the discrete Fr\'{e}chet ball of radius $r$ and complexity $m$, centered at $p$, $\{q\in (\mathcal{M})^m \mid  \d_{dF}(p,q) \leq r\}$ can be covered by at most $\left(4\cdot \lambda_{\mathcal{M}}\right)^{m}$ discrete Fr\'{e}chet balls of radius $r/2$ and complexity $m$. 
\end{theorem}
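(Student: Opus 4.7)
The plan is to parameterize every curve in the discrete Fréchet ball by two pieces of information: a vertex-to-vertex assignment extracted from a witness traversal, and a choice of local cover center for each vertex. I would begin by using the doubling property of $(\MMM,\dm)$: for each $i\in[m]$, cover $b_{\MMM}(p_i,r)$ by a set $N_i\subset\MMM$ of at most $\lambda_{\MMM}$ balls of radius $r/2$.

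Next, fix any $q\in\MMM^m$ with $\df(p,q)\le r$ together with an optimal traversal $T$. For each $j\in[m]$, let $i(j)$ be the smallest $i$ with $(i,j)\in T$. Monotonicity of traversals (both coordinates are non-decreasing along $T$) makes $j\mapsto i(j)$ a weakly increasing map $[m]\to[m]$. The cost bound on $T$ gives $\dm(p_{i(j)},q_j)\le r$, so $q_j\in b_{\MMM}(p_{i(j)},r)$, and we may pick $\tilde q_j\in N_{i(j)}$ with $\dm(q_j,\tilde q_j)\le r/2$. Setting $\tilde q=\tilde q_1,\ldots,\tilde q_m$ and pairing $q_j$ with $\tilde q_j$ for each $j$ is a valid traversal of $q$ and $\tilde q$, which shows $\df(q,\tilde q)\le r/2$. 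Hence every $q$ in the Fréchet ball lies in the Fréchet ball of radius $r/2$ around some such $\tilde q$.

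Finally, I would count the number of curves $\tilde q$ that can arise. A weakly increasing function $[m]\to[m]$ is determined by a multiset of size $m$ drawn from $[m]$, and the number of such multisets is $\binom{2m-1}{m}\le 4^m$. For each fixed assignment, each vertex $\tilde q_j$ is chosen from $N_{i(j)}$ of size at most $\lambda_{\MMM}$, contributing a factor $\lambda_{\MMM}^m$. Multiplying gives at most $4^m\cdot\lambda_{\MMM}^m=(4\lambda_{\MMM})^m$ representatives, establishing the claim.

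The main obstacle is the traversal count. Counting full traversals (monotone lattice paths from $(1,1)$ to $(m,m)$ with diagonal moves) would give Delannoy numbers, which grow like $(3+2\sqrt{2})^m>4^m$ and miss the target. The key observation that rescues the argument is that only the projection $j\mapsto i(j)$ (equivalently, which $p_i$ dictates the local cover used at vertex $q_j$) matters for choosing $\tilde q_j$; collapsing to this monotone sequence reduces the count to the required $4^m$ bound.
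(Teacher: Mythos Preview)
Your proof is correct and follows essentially the same approach as the paper: cover each $b_{\MMM}(p_i,r)$ by $\lambda_{\MMM}$ balls of radius $r/2$, form candidate centers by ordered $m$-tuples respecting the induced ordering, and count via $\binom{2m-1}{m}\lambda_{\MMM}^m\le(4\lambda_{\MMM})^m$. Your presentation is in fact more explicit than the paper's---you spell out the assignment $j\mapsto i(j)$ extracted from a witness traversal, verify $\df(q,\tilde q)\le r/2$ via the identity pairing, and give the useful aside on why counting full traversals (Delannoy numbers) would overshoot---whereas the paper's proof is terse and leaves the covering verification implicit.
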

\begin{proof}
We consider a set of points $C$ which $r/2$-cover the $m$ Euclidean balls of radius $r$. By the definition of the doubling constant, $\lambda_{\mathcal{M}}$ points per ball  suffice. Notice that points in $C$ are partially ordered, due to the ordering of the $m$ Euclidean balls. 
The discrete Fr\'{e}chet balls of radius $r/2$ and complexity $m$, which are centered at $m$ points in $C$ and satisfy the above ordering constraint, cover $\{q\in (\mathcal{M})^m \mid  \d_{dF}(p,q) \leq r\}$. The number of such discrete Fr\'{e}chet balls is upper bounded by: 
\[
\sum_{\substack{t_1+\ldots +t_m=m \\ \forall i:~ t_i\geq 0 \\t_1 \geq 1, t_m \geq 1}} \prod_{i=1}^{m}{{\lambda_{\mathcal{M}}}\choose{t_i}} \leq 
\sum_{\substack{t_1+\ldots +t_m=m \\ \forall i:~ t_i\geq 0}}
\lambda_{\mathcal{M}}^m={{2m-1}\choose{m}}\cdot \lambda_{\mathcal{M}}^m \leq (4\cdot\lambda_{\mathcal{M}})^m.
\]
\end{proof}

{\em Algorithmic implications.} Previous results for ANN in arbitrary doubling metrics \cite{HM06,CG06} and  
Theorem \ref{theorem:doublDFD} imply data structures for the discrete Fr\'{e}chet distance in arbitrary doubling metrics. 
For datasets consisting of polygonal curves of complexity at most $m$, and query curves of complexity at most $m$, \cite{HM06} implies a data structure with the following performance guarantees: the expected preprocessing time is $\lambda_X^{\mathcal{O}(1)} n \log n $, the space is $\lambda_X^{\mathcal{O}(1)} n$ and the query time is $\lambda_X^{\mathcal{O}(1)} \log n+\left(\frac{1}{\epsilon}\right)^{\log \lambda_X}$, 
where $\lambda_X$ denotes the doubling constant of the input metric. For the $\ell_2^d$ metric, the above result implies a data structure with expected preprocessing time in $2^{\mathcal{O}(dm)} n \log n $, space in $2^{\mathcal{O}(dm)} n$ and the query time in $2^{\mathcal{O}(dm)} \log n+\left(\frac{1}{\epsilon}\right)^{dm}$. We compare these bounds in Table~\ref{tab:compar}. 

{\em Short queries regime.} The above-mentioned data structures fail to provide with improved results when query curves are of much lower complexity than the data curves.
Variants of net-trees like the ones used in \cite{HM06,CG06}, rely on computing good subsets of the dataset with suitable covering and packing properties. The crucial observation there is that it is possible to bound the cardinality of such subsets by $\lambda^{\mathcal{O}(1)}$, where $\lambda$ denotes the doubling constant of the input metric space. Notice that in the discrete Fr\'{e}chet case, these arguments apply to the input dataset of curves of complexity $m$, and they inherently imply complexity bounds depending on $\lambda^{\mathcal{O}(m)}$.

\end{document}